\newcommand{\Rmnum}[1]{\expandafter\@slowromancap\romannumeral #1@}
\newcommand{\deleted}[1]{}
\begin{document}

\title{Chromatic Clustering in High Dimensional Space}
\author{Hu Ding   \hspace{0.25in} Jinhui Xu}
\institute{
 Department of Computer Science and Engineering\\
State University of New York at Buffalo\\
 \email{\tt \{huding, jinhui\}@buffalo.edu}\\
}
\maketitle

\thispagestyle{empty}

\begin{abstract}
In this paper, we study a new type of clustering problem, called {\em Chromatic Clustering}, in high dimensional space. Chromatic clustering seeks to partition a set of colored points into 
groups (or clusters) so that no group contains points with the same color and a certain objective function is optimized. In this paper, we consider two variants of the problem, chromatic $k$-means clustering (denoted as $k$-CMeans) and chromatic $k$-medians clustering (denoted as $k$-CMedians), and investigate their hardness and approximation solutions. For $k$-CMeans, we show that 
the additional coloring constraint destroys several key properties (such as the locality property) used in existing $k$-means techniques (for ordinary points), and significantly complicates the problem. 
There is no FPTAS for the chromatic clustering problem, even if $k=2$.  To overcome the additional difficulty, we develop a standalone result, called {\em Simplex Lemma}, which enables us to 
efficiently approximate the mean point of an unknown point set through a fixed dimensional simplex. A nice feature of the simplex is its independence with the dimensionality of the original space, and thus can be used for problems in very high dimensional space. With the simplex lemma, together with several random sampling techniques, we show that a $(1+\epsilon)$-approximation of $k$-CMeans can be achieved in near linear time through a sphere peeling algorithm. For $k$-CMedians, we show that a similar  sphere peeling algorithm exists for achieving constant approximation solutions. 
 \end{abstract}

\newpage

\pagestyle{plain}
\pagenumbering{arabic}
\setcounter{page}{1}
\vspace{-0.3in}
\section{Introduction}
\label{sec-intro}
\vspace{-0.1in}

Clustering is one of the most fundamental problems in computer science and finds applications in many different areas \cite{AV07,BBC,BHI,AP98,DEF,GG06,HM04,KR99,OSS,BKK,HDX11}. Most existing clustering techniques assume that the to-be-clustered data items are independent from each other.
Thus each data item can ``freely'' determine its membership within the resulting clusters, without paying attention to the clustering of other data items. 
In recent years, there are also considerable attentions on clustering dependent data and a number of clustering techniques, such as correlation clustering, point-set clustering, ensemble clustering, and correlation connected clustering, have been developed \cite{BBC,BKK,DEF,GG06,HDX11}.

In this paper, we consider a new type of clustering problems, called {\em Chromatic Clustering}, for dependent data. Roughly speaking, a chromatic clustering problem takes as input a set of colored data items and groups them into clusters, according to certain objective functions, so that no pair of items with the same color are grouped together (such a requirement is called {\em chromatic constraint}).  Chromatic clustering  captures the mutual exclusiveness relationship among data items and is a rather useful model for various applications.
Due to the additional chromatic constraint, chromatic clustering is thus expected to simultaneously solve the ``coloring'' and clustering problems, which significantly complicates the problem. 
As  it will be shown later, the chromatic clustering problem is challenging to solve even for the case that each color is shared only by  two data items.

For chromatic clustering, we consider in this paper two variants,  {\em Chromatic $k$-means Clustering ($k$-CMeans)} and {\em Chromatic $k$-median Clustering ($k$-CMedians)},  in $\mathbb{R}^{d}$ space, where the 
dimensionality could be very high and $k$ is a fixed number.  
In both variants, the input is a set $\mathcal{G}$ of $n$ point-sets $G_{1}, \cdots, G_{n}$ with each containing a maximum of $k$ points in $d$-dimensional space, and the objective is to partition all points of $\mathcal{G}$ into $k$ different clusters so that the chromatic constraint is satisfied and the total squared distance  (i.e., $k$-CMeans) or total distance (i.e., $k$-CMedians)  from each point to the center point (i.e., median or mean point) of its cluster is minimized.

\textbf{Motivation:} The chromatic clustering problem is motivated by several interesting applications.  One of them is for determining the topological structure of chromosomes in cell biology \cite{HDX11}. In such applications, a set of 3D probing points (e.g., using BAC probes) is extracted from each homolog of the interested chromosome (see Figure \ref{fig-probe} in Appendix), and the objective is to determine, for each chromosome homolog, the common spatial distribution pattern of the probes among a population of cells. For  this purpose, the set of probes from each homolog is converted into a high dimensional feature point in the feature space, where each dimension represents the distance between a particular pair of probes. Since each chromosome has two (or more as in cancer cells) homologs, each cell contributes $k$  (i.e., two or more) feature points.  
%
Due to technical limitation, it is impossible to  identify the same homolog from all cells.  
Thus, the $k$ feature points from each cell form a point-set with the same color (meaning that they are undistinguishable). To solve the  problem, one could chromatically cluster all point-sets into $k$ clusters (after normalizing the cell size), with each 
corresponding to a homolog, and use the mean or median point of each cluster as its common pattern. 
 


\textbf{Related works:} As its generalization, chromatic clustering is naturally related to the traditional clustering problem. Due to the additional chromatic constraint, chromatic clustering could behave quite differently from its counterpart. 
For example, the $k$-means algorithms in \cite{BHI,KSS} relies on the fact that all input points in a Voronoi cell of the optimal $k$ mean points belong to the same cluster. However, such a key locality property no 
longer holds for the $k$-CMeans problem.  

Chromatic clustering falls in the umbrella of clustering with constraint. For such type of clustering,  several solutions exist for some variants \cite{BD06}.  Unfortunately, due to their heuristic nature, none of them can yield quality guaranteed solutions for the chromatic clustering problem. The first quality guaranteed solution for chromatic clustering was obtained recently by Ding and Xu. In \cite{HDX11}, they 
considered a special chromatic clustering problem, where every point-set has exactly $k$ points in the first quadrant, and the objective is to cluster points  by cones apexed at the origin, and presented the first PTAS  for constant $k$. The $k$-CMeans and $k$-CMedians problems considered in this paper are the general cases of the chromatic clustering problem. Very recently, Arkin {\em et al.}  \cite{ADH} considered  a chromatic 2D $2$-center clustering problem and presented both approximation and exact solutions. 


 \vspace{-0.1in}
\subsection{Main Results and Techniques}
\vspace{-0.05in}
 
In this paper, we present three main results, a constant approximation and a $(1+\epsilon)$-approximation for $k$-CMeans and their extensions to $k$-CMedians.
  \vspace{-0.05in}
\begin{itemize}
\item {\bf Constant approximation:} We show that given any $c$-approximation for $k$-means clustering, it could yield a $(2ck^2+2k-1)$-approximation for $k$-CMeans. This not only provides a way for us to generate an initial constant approximation  solution for $k$-CMeans through some $k$-means algorithm, but more importantly reveals the intrinsic connection between the two clustering problems.

\item {\bf $(1+\epsilon)$-approximation:} We show that a near linear time $(1+\epsilon)$-approximation solution for $k$-CMeans can be obtained using an interesting sphere peeling algorithm.  Due to the lack of locality property in $k$-CMeans, our sphere peeling algorithm is quite different from the ones used in \cite{KSS,BHI}, which in general do not guarantee a $(1+\epsilon)$-approximation solution for $k$-CMeans as shown by our first result. Our sphere peeling algorithm is based on another standalone result, called {\em Simplex Lemma}. The simplex lemma enables us to obtain an approximate mean point of a set of unknown points through a grid inside a simplex determined by some partial knowledge of the unknown point set. A unique feature of the simplex lemma is that the complexity of the grid is {\em  independent of the dimensionality}, and thus can be used to solve problems in high dimensional space. With the simplex lemma, our sphere peeling  algorithm iteratively generates the mean points of $k$-CMeans with each iteration building a simplex for the mean point.

%
%
 
\item {\bf Extensions to $k$-CMedians:} We further extend the idea for  $k$-CMeans to $k$-CMedians. Particularly, we show that any $c$-approximation for $k$-medians can be used to yield a $((2+\epsilon)ck^2+(2+\epsilon)k+1)$-approximation for $k$-CMedians, where the $\epsilon$ error comes from the difficulty of computing the optimal median point (i.e., Fermat Weber point). With this and a similar sphere peeling technique, we obtain a $(5+\epsilon)$-approximation for $k$-CMedians. Note that although $k\ge 2$ is a constant in this paper, a $(5+\epsilon)$-approximation is still much better  than a $((2+\epsilon)ck^2+(2+\epsilon)k+1)$-approximation.  

\end{itemize}

Due to space limit, many details of our algorithms, proofs, and figures are put in Appendix.

  \vspace{-0.15in} 
\section{Preliminaries}
\label{sec-pre}

  \vspace{-0.1in}
In this section, we introduce some definitions which will be used throughout the paper.  

\begin{definition}[Chromatic Partition]
 \label{cpart}
Let $\mathcal{G}=\{G_{1}, \cdots, G_{n}\}$ be a set of $n$ point-sets with each $G_{i}=\{p^{i}_{1}, \dots, p^{i}_{k_{i}}\}$ consisting of $k_{i}\leq k$ points in $\mathbb{R}^d$ space.
A chromatic partition of $\mathcal{G}$ is a partition of the $\sum_{1\leq i\leq n}k_{i}$ points into $k$ sets, $U_{1}, \cdots, U_{k}$,  such that each $U_{i}$ contains no more than one point from each $G_{j}$ for $j=1, 2, \cdots, n$.
 \end{definition}

\vspace{-0.1in}
\begin{definition}[Chromatic $k$-means Clustering ($k$-CMeans)]
\label{def-kcm}
Let $\mathcal{G}=\{G_{1}, \cdots,$ $G_{n}\}$ be a set of $n$ point-sets with each $G_{i}=\{p^{i}_{1}, \dots, p^{i}_{k_{i}}\}$ consisting of $k_{i}\leq k$ points in $\mathbb{R}^d$ space.
The chromatic  $k$-means clustering (or $k$-CMeans) of $\mathcal{G}$ is to find $k$ points $\{m_{1}, \cdots, m_{k}\}$ in $\mathbb{R}^d$ space and a chromatic partition $U_{1}, \cdots, U_{k}$ of $\mathcal{G}$ such that $\frac{1}{n}\sum_{j}\sum_{q\in U_{j}}||q-m_{j}||^2$ is minimized. The problem is called full $k$-CMeans if $k_1=k_2=\cdots=k_n=k$.
\end{definition}



For both $k$-CMedians and $k$-CMeans, a problem often encountered in our approach is ``How to find the best cluster for each point in $G_{i}$ if the $k$ mean or median points $A=\{m_1, \cdots, m_k\}$ are already known?'' An easy way to solve this problem is to first build a complete bipartite graph $(G_{i}\cup A, E_{i})$ with points in $G_{i}$ and $A$ as the two partites and then compute a minimum weight bipartite matching as the solution, where the edge weight is the Euclidean distance or squared distance of the two corresponding vertices. Clearly, this can be done in a total of $O(k^{3}dn)$ time for all $G_{i}$'s.  (We call this procedure as {\bf bipartite matching}.)

\vspace{-0.15in}
\section{Hardness of $k$-CMeans}
\label{sec-hard}
\vspace{-0.1in}

It is easy to see that $k$-means is a special case of $k$-CMeans (i.e., each $G_i$ contains exactly one point). As shown by Dasgupta \cite{D08},  $k$-means in high dimensional space is NP-hard even if $k=2$.  Thus, we immediately have the following theorem.

  \vspace{-0.05in}

\begin{theorem}
\label{the-fptas}
$k$-CMeans  is NP-hard for $k\ge 2$ in high dimensional space.
\end{theorem}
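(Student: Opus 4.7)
The plan is to establish NP-hardness of $k$-CMeans via a trivial reduction from the standard (uncolored) $k$-means problem, which is already known to be NP-hard in high dimensional space for $k \geq 2$ by Dasgupta \cite{D08}.

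The first step is to identify the right restriction of the $k$-CMeans input that collapses into an ordinary $k$-means instance. Given an arbitrary $k$-means instance on a point set $P = \{p_1,\ldots,p_n\} \subset \mathbb{R}^d$, I would build a $k$-CMeans instance by setting $\mathcal{G} = \{G_1,\ldots,G_n\}$ with $G_i = \{p_i\}$ (so every $k_i = 1$). Since each point-set is a singleton, the chromatic constraint in Definition \ref{cpart} is vacuously satisfied for every partition of the underlying points: no $U_j$ can possibly contain two points from the same $G_i$. Consequently, the feasible region of this $k$-CMeans instance is exactly the set of all $k$-partitions of $P$, and its objective $\frac{1}{n}\sum_j \sum_{q \in U_j} \|q - m_j\|^2$ coincides with the standard $k$-means cost.

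The second step is to push the hardness through this identification. Any algorithm solving the constructed $k$-CMeans instance exactly would return a partition and $k$ centers whose cost equals the optimal $k$-means cost on $P$, so a polynomial-time exact algorithm for $k$-CMeans would imply one for $k$-means in high dimensional space, contradicting \cite{D08} for every $k \geq 2$. Since the reduction itself is computable in linear time (just wrap each point into a singleton set), NP-hardness transfers directly.

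I do not expect any substantive obstacle here; the only thing to be careful about is verifying that the chromatic constraint is truly vacuous on singleton inputs and that the objective function in Definition \ref{def-kcm} reduces literally to the $k$-means objective (up to the harmless $1/n$ normalization, which does not affect optimality). Once those two observations are stated, the theorem follows in one line.
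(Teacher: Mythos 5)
Your proposal is correct and is exactly the paper's argument: the paper proves Theorem~\ref{the-fptas} by observing that $k$-means is the special case of $k$-CMeans in which each $G_i$ is a singleton, and then invoking Dasgupta's NP-hardness of high-dimensional $k$-means for $k\ge 2$. Your additional remarks about the vacuity of the chromatic constraint and the harmless $1/n$ normalization are just a more explicit spelling-out of the same one-line reduction.
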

  \vspace{-0.3in}
  
  \subsection{Is Full $k$-CMeans Easier?}
  \vspace{-0.05in}

It is interesting to know whether full $k$-CMeans is  easier than general $k$-CMeans, since it is disjoint with $k$-means when $k\geq 2$. 
The following theorem gives a negative answer to this question.

  \vspace{-0.1in}
\begin{theorem}
\label{mfptas}
Full $k$-CMeans  is NP-hard and has no FPTAS for $k\ge 2$ in high dimensional space  unless P=NP (see Appendix for the proof).
\end{theorem}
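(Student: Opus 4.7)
My plan is to reduce MAX-CUT, which is APX-hard (Papadimitriou--Yannakakis; Hastad's $16/17$ bound), to full 2-CMeans in an approximation-preserving way, and then lift the hardness to $k\geq 3$ by anchor padding. APX-hardness rules out any PTAS, and in particular any FPTAS, under P $\neq$ NP, so it suffices to establish a gap-preserving reduction for $k=2$ and then make the reduction work inside a larger $k$.

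\textbf{Rewriting the cost.} For full 2-CMeans on pairs $G_i=\{a_i,b_i\}$, any chromatic partition is encoded by $x\in\{\pm 1\}^n$ (with $x_i=+1$ meaning $a_i\in U_1$); let $v_i:=a_i-b_i$. Using the pairwise identity $\sum_{p\in U}\|p-\bar U\|^2=\frac{1}{|U|}\sum_{p<p'\in U}\|p-p'\|^2$ (which applies because $|U_1|=|U_2|=n$) together with the algebraic identity $\|a_i-b_j\|^2+\|b_i-a_j\|^2-\|a_i-a_j\|^2-\|b_i-b_j\|^2=2\langle v_i,v_j\rangle$, the objective collapses to
\[
\mathrm{cost}(x)=\frac{C_1}{n^2}+\frac{2}{n^2}\sum_{i<j}\langle v_i,v_j\rangle\,\mathbf{1}[x_i\neq x_j],
\]
where $C_1:=\sum_{i<j}(\|a_i-a_j\|^2+\|b_i-b_j\|^2)$ is $x$-independent.

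\textbf{MAX-CUT gadget and no FPTAS.} Given $G=(V,E)$ with $|V|=n$, orient each edge arbitrarily and set $v_i\in\mathbb{R}^{|E|}$ to have a $+1$ in coordinate $e$ when $i$ is the tail of $e$, a $-1$ when $i$ is the head, and $0$ otherwise; then $\langle v_i,v_j\rangle=-1$ if $ij\in E$ and $0$ otherwise. Realize $v_i=a_i-b_i$ via $a_i:=v_i/2$ and $b_i:=-v_i/2$; a short calculation yields $C_1=n|E|$, so
\[
\mathrm{cost}(x)=\frac{|E|}{n}-\frac{2}{n^2}\,\mathrm{CUT}(x),
\]
and minimizing cost is equivalent to maximizing CUT. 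An FPTAS invoked with $\epsilon=\delta_0/n$ for a constant $\delta_0<1/17$ would, after substitution and use of $\mathrm{MAXCUT}\geq|E|/2$, produce in $\mathrm{poly}(n)$ time a solution with $\mathrm{CUT}(x)\geq(1-(n-1)\epsilon)\mathrm{MAXCUT}\geq(1-\delta_0)\mathrm{MAXCUT}$, contradicting Hastad's inapproximability bound. Hence no FPTAS for full 2-CMeans, and NP-hardness is immediate from the reduction.

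\textbf{Lifting to $k\geq 3$ and main obstacle.} Expand each $G_i$ to $\{a_i,b_i,c_i^1,\ldots,c_i^{k-2}\}$ where $c_i^l:=M\,e_{|E|+l}$ in the augmented space $\mathbb{R}^{|E|+(k-2)}$ and $M$ is large enough (say $M=n^3$) that any chromatic partition mixing distinct anchor types within one cluster incurs $\Omega(M^2)$ cost, dwarfing every other contribution. Thus in every near-optimal solution the $(l{+}2)$-th cluster equals $\{c_i^l\}_i$ (contributing cost $0$), and the first two clusters realize exactly the 2-CMeans reduction, inheriting the hardness. The most delicate step is the approximation-preserving scaling: we need $\epsilon\,n\,|E|/2\leq\delta_0\,\mathrm{MAXCUT}$ to pull the gap back, which the inequality $\mathrm{MAXCUT}\geq|E|/2$ supplies. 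A secondary obstacle is guaranteeing that the anchor-forced cluster structure persists under $(1+\epsilon)$-approximate (not just optimal) solutions, which is handled by taking $M$ polynomially larger than the diameter of the rest of the point set.
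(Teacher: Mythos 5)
Your proof is correct, but it takes a genuinely different route from the paper's. The paper reuses Dasgupta's NAE3SAT-based distance-matrix construction for the NP-hardness of $2$-means: it sets $G_i=\{x_i,\overline{x_i}\}$, and then computes an explicit inverse-polynomial multiplicative gap between the minimum chromatic cost of satisfiable and unsatisfiable formulas, so that an FPTAS run with $\epsilon$ below that gap would decide NAE3SAT. You instead give a direct, self-contained reduction from MAX-CUT: the exact identity $\mathrm{cost}(x)=\frac{|E|}{n}-\frac{2}{n^2}\mathrm{CUT}(x)$ (which I verified, including $C_1=n|E|$ and the inner-product gadget $\langle v_i,v_j\rangle=-\mathbf{1}[ij\in E]$) makes the correspondence between chromatic partitions and cuts transparent, avoids having to justify that Dasgupta's matrix embeds isometrically in $\mathbb{R}^{2n}$, and your explicit anchor-padding for $k\ge 3$ fills in a step the paper dismisses with ``it is sufficient to show the case $k=2$.'' The trade-off is that you invoke H\aa stad's APX-hardness of MAX-CUT, a much heavier hammer than needed; since the cut term enters the cost with weight $2/n^2$ against a baseline of $|E|/n$, even the plain NP-hardness of exact MAX-CUT gives an inverse-polynomial relative gap (consecutive cut values differ by $2/n^2$ in cost), which already kills an FPTAS.

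One caveat on your framing rather than your execution: the opening claim that the reduction is ``approximation-preserving'' and that APX-hardness ``rules out any PTAS'' is not what your reduction delivers, and cannot be, since the paper's Theorem \ref{the-cptas} gives a PTAS for full $k$-CMeans. A constant relative gap in $\mathrm{CUT}$ becomes only a $1+\Theta(1/n)$ relative gap in $\mathrm{cost}$, which is exactly why you must set $\epsilon=\delta_0/n$; this is an FPTAS-only obstruction, and your inequality chain correctly derives only that. The stated theorem claims only NP-hardness and no FPTAS, so your proof establishes exactly what is needed, but you should delete or correct the PTAS remark in the plan.
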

  \vspace{-0.1in}
 

The above theorem indicates that the fullness of $k$-CMeans does not reduce the hardness of the problem. However, this does not necessarily mean that full $k$-CMeans is as  difficult as general $k$-CMeans to achieve a $(1+\epsilon)$-approximation for fixed $k$. Below we show that a $(1+\epsilon)$-approximation can be relatively easily achieved for full $k$-CMeans through some random sampling technique. 


First we introduce a key lemma from \cite{IKI}. 
Let $S$ be a set of $n$ points in $\mathbb{R}^d$ space, $T$ be a randomly selected subset from $S$ with $t$ points, and $\overline{x}(S)$, $\overline{x}(T)$ be the mean points of $S$ and $T$ respectively.

  \vspace{-0.1in}
\begin{lemma}[\cite{IKI}]
\label{lem-dis}
With probability $1-\eta$, $||\overline{x}(S)-\overline{x}(T)||^2<\frac{1}{\eta t}Var^{0}(S)$, where $Var^{0}(S)$ $=(\sum_{s\in S}||s-\overline{x}(S)||^2)/n$. 
\end{lemma}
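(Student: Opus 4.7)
The plan is to follow the standard sample-mean concentration argument, with Markov's inequality providing the high-probability bound. First I would view the random subset $T = \{X_1, \ldots, X_t\}$ as obtained by drawing $t$ i.i.d.\ uniform samples from $S$ (the with-replacement version; the without-replacement case gives an even smaller variance, so the bound transfers). Under this model, $\overline{x}(T) = \frac{1}{t}\sum_{i=1}^t X_i$ is the empirical mean of the samples.

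Next I would compute the first two moments of $\overline{x}(T)$. By uniformity of the draws, $E[X_i] = \overline{x}(S)$, so linearity of expectation gives $E[\overline{x}(T)] = \overline{x}(S)$. For the second moment, expand
\[
\|\overline{x}(T) - \overline{x}(S)\|^2 = \Bigl\|\tfrac{1}{t}\sum_{i=1}^t (X_i - \overline{x}(S))\Bigr\|^2
\]
and take expectation. The cross terms vanish by independence and by $E[X_i - \overline{x}(S)] = 0$, leaving
\[
E\bigl[\|\overline{x}(T) - \overline{x}(S)\|^2\bigr] = \frac{1}{t^2}\sum_{i=1}^t E\bigl[\|X_i - \overline{x}(S)\|^2\bigr] = \frac{1}{t}\cdot\frac{1}{n}\sum_{s\in S}\|s - \overline{x}(S)\|^2 = \frac{1}{t}\,Var^{0}(S).
\]

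Finally, I would apply Markov's inequality to the non-negative random variable $\|\overline{x}(T) - \overline{x}(S)\|^2$. With threshold $a = \frac{1}{\eta t}Var^{0}(S)$,
\[
\Pr\Bigl[\|\overline{x}(T) - \overline{x}(S)\|^2 \ge \tfrac{1}{\eta t}Var^{0}(S)\Bigr] \le \frac{E[\|\overline{x}(T) - \overline{x}(S)\|^2]}{a} = \eta,
\]
which yields the claim by complementation.

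The proof is short and there is no serious obstacle; the only subtlety worth flagging is the sampling model, namely whether $T$ is drawn with or without replacement. Either model works here: the cross-term cancellation is immediate under independence, and for without-replacement sampling the pairwise covariances $E[(X_i - \overline{x}(S))^\top (X_j - \overline{x}(S))]$ are negative for $i \neq j$, so the variance of $\overline{x}(T)$ is no larger than $\frac{1}{t}Var^{0}(S)$ and the Markov step goes through unchanged.
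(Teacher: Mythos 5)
Your proof is correct and is essentially the standard argument from the cited reference \cite{IKI} (which the paper itself does not reprove): compute $E\bigl[\|\overline{x}(T)-\overline{x}(S)\|^2\bigr]=\frac{1}{t}Var^{0}(S)$ via the vanishing cross terms and then apply Markov's inequality. Your remark on the with/without-replacement distinction is a sound and welcome clarification, since without-replacement sampling only decreases the variance of the sample mean.
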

  \vspace{-0.1in}

  \vspace{-0.15in}
  \begin{lemma}
\label{lem-select}
Let $S$ be a set of elements, and $S'$ be a subset of $S$ such that 
$\frac{|S'|}{|S|}=\alpha$. If randomly select $\frac{t\ln\frac{t}{\eta}}{\ln(1+\alpha)}=O(\frac{t}{\alpha}\ln\frac{t}{\eta})$ elements from $S$, with probability at least $1-\eta$, the sample contains at least $t$ elements from $S'$.
\end{lemma}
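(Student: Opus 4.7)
The plan is to prove the lemma by a divide-and-union-bound argument that exploits the specific shape $\frac{\ln(t/\eta)}{\ln(1+\alpha)}$ in the sample size. Set $N=\frac{t\ln(t/\eta)}{\ln(1+\alpha)}$ and split the $N$ random draws into $t$ consecutive \emph{rounds}, each of size $N/t=\frac{\ln(t/\eta)}{\ln(1+\alpha)}$. Call a round \emph{successful} if it contains at least one element of $S'$. If every round is successful, the sample clearly contains at least $t$ elements of $S'$, so it suffices to bound the probability that some round fails.

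In each single round, an element drawn uniformly from $S$ lies in $S'$ with probability $\alpha$, and draws inside a round can be treated as independent (either assuming sampling with replacement, or noting that without replacement only improves the probability of hitting $S'$ by a standard negative-association argument). Hence the probability that a particular round contains no element of $S'$ is at most $(1-\alpha)^{N/t}$. The key inequality I would invoke is
\[
-\ln(1-\alpha)\;\ge\;\ln(1+\alpha),
\]
which follows from $(1-\alpha)(1+\alpha)=1-\alpha^2\le 1$. This gives
\[
(1-\alpha)^{N/t} \;=\; \exp\!\bigl(-(N/t)(-\ln(1-\alpha))\bigr)
\;\le\; \exp\!\bigl(-(N/t)\ln(1+\alpha)\bigr)
\;=\; \exp(-\ln(t/\eta))\;=\;\eta/t.
\]

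Applying the union bound over the $t$ rounds, the probability that at least one round fails is at most $t\cdot(\eta/t)=\eta$. Therefore, with probability at least $1-\eta$, every round contributes at least one element of $S'$, so the whole sample contains at least $t$ elements from $S'$. The asymptotic form $O(\tfrac{t}{\alpha}\ln\tfrac{t}{\eta})$ then follows from $\ln(1+\alpha)=\Theta(\alpha)$ for $\alpha\in(0,1]$.

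The only real subtlety — and the part I would flag as the main technical point — is the replacement issue. If draws are without replacement, the per-round ``miss'' probability is actually $\binom{|S|-|S'|}{N/t}/\binom{|S|}{N/t}$, which one can show is at most $(1-\alpha)^{N/t}$ by a standard coupling/ratio argument (each successive conditional miss probability is at most $1-\alpha$). With this observation in place, the rest of the argument is a straightforward application of the inequality $-\ln(1-\alpha)\ge\ln(1+\alpha)$ together with the union bound, as described above.
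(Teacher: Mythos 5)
Your proof is correct and follows essentially the same route as the paper: split the sample into $t$ rounds, bound each round's miss probability by $\eta/t$ using $\ln\frac{1}{1-\alpha}\ge\ln(1+\alpha)$, and combine over the rounds (you use a union bound where the paper multiplies the per-round success probabilities, a trivial variant that if anything is cleaner since it avoids any independence assumption between rounds). Your explicit remark on the with-replacement versus without-replacement issue is a point the paper silently glosses over, and your handling of it is correct.
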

\vspace{-0.2in}
\begin{proof}
If we randomly select $z$ elements from $S$, then it is easy to know that with probability $1-(1-\alpha)^z$, there is at least one element from the sample belonging to $S'$. If we want the probability  $1-(1-\alpha)^z$  equal to $1-\eta/t$, $z$ has to be $\frac{\ln\frac{t}{\eta}}{\ln\frac{1}{1-\alpha}}=\frac{\ln\frac{t}{\eta}}{\ln(1+\frac{\alpha}{1-\alpha})}\leq\frac{\ln\frac{t}{\eta}}{\ln(1+\alpha)}=O(\frac{1}{\alpha}\ln\frac{t}{\eta})$ (by Taylor series and $\alpha<1$, $\ln(1+\alpha)=O(\alpha)$).
Thus if we perform $t$ rounds of random sampling with each round selecting $O(\frac{1}{\alpha}\ln\frac{t}{\eta})$ elements,  we get at least $t$ elements from $S'$ with probability at least $(1-\eta/t)^t\geq 1-\eta$.
\qed
\end{proof}

Lemma \ref{lem-dis} tells us that if we want to find an approximate mean point  within a distance of $\epsilon Var^{0}(S)$ to the mean point, we just need to take a random sample of size $O(1/\epsilon)$.  Lemma \ref{lem-select} suggests that for any set $S$ and its subset $S'\subset S$ of size $\alpha |S|$, we can have a random subset $T$ of $S'$ with size $O(1/\epsilon)$ by randomly sampling  directly from $S$ $O(\frac{1}{\epsilon\alpha}\ln \frac{1}{\epsilon})$ points, even if $S'$ is an unknown  subset of $S$. 
Combining the two lemmas, we can immediately compute an approximation solution for full $k$-CMeans in the following way. First, we note that in full $k$-CMeans, each optimal cluster contains exact $n$ points from the total of $kn$ points in $\mathcal{G}$. This means that each cluster has a fraction of  $\frac{1}{k}$ points from $\mathcal{G}$. Then, we can obtain an approximate mean point for each optimal cluster by (1) randomly sampling $O(\frac{k}{\epsilon}\ln\frac{1}{\epsilon})$ points from $\mathcal{G}$,  (2) enumerating all possible subsets of size $O(1/\epsilon)$ to find the set $T$ which is a random sample of the unknown optimal cluster, and (3) computing the mean of $T$ as the approximate mean point of the optimal cluster.   
Finally, we can generate the $k$ chromatic clusters from the $k$ approximate mean points by using the bipartite matching procedure (see Section \ref{sec-pre}).  

  \vspace{-0.1in}
\begin{theorem}
\label{the-cptas}
With constant probability, a $(1+\epsilon)$-approximation of full $k$-CMeans can be obtained in $O(2^{poly(\frac{k}{\epsilon})}$ $nd)$ time.
\end{theorem}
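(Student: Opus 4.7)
The plan is to execute the three-step recipe sketched just before the theorem and to verify that (i) the random sample is rich enough to contain, for each optimal cluster, a uniformly random subset of prescribed size, (ii) enumeration over these candidate subsets recovers mean estimates accurate enough to bound the total $k$-CMeans cost, and (iii) the bipartite matching step only improves on the implicit partition induced by the optimal solution. Throughout, let $S_1^*,\dots,S_k^*$ denote the optimal chromatic clusters, each of size exactly $n$ in the full case, and let $\mu_j^*=\bar{x}(S_j^*)$ be their true means, so $\mathrm{OPT}=\sum_j\sum_{p\in S_j^*}\|p-\mu_j^*\|^2$.

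First, I would apply Lemma~\ref{lem-select} with $\alpha=1/k$ to each cluster $S_j^*$, targeting $t=\Theta(1/\epsilon)$ samples per cluster with failure probability $\Theta(1/k)$; a union bound over $j$ then shows that drawing $z=O\!\bigl(\tfrac{k}{\epsilon}\log\tfrac{k}{\epsilon}\bigr)$ points uniformly from $\mathcal{G}$ gives, with constant overall probability, at least $t$ points from every $S_j^*$, and those points form a uniform random $t$-subset of $S_j^*$. Next, I would enumerate every ordered $k$-tuple $(T_1,\dots,T_k)$ of size-$t$ subsets of the sample, a count of $\binom{z}{t}^k=2^{\mathrm{poly}(k/\epsilon)}$. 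For each tuple, set $c_j=\bar{x}(T_j)$, run the bipartite-matching procedure of Section~\ref{sec-pre} to obtain the best chromatic partition for these centers, evaluate the resulting cost, and keep the cheapest global assignment.

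The correctness argument uses the identity $\sum_{p\in S}\|p-c\|^2=\sum_{p\in S}\|p-\bar{x}(S)\|^2+|S|\cdot\|c-\bar{x}(S)\|^2$. Among the enumerated tuples, at least one, call it $(T_1^*,\dots,T_k^*)$, has each $T_j^*$ drawn uniformly from $S_j^*$; applying Lemma~\ref{lem-dis} with failure probability $\Theta(1/k)$ per cluster and union-bounding, the centroids $c_j=\bar{x}(T_j^*)$ simultaneously satisfy $\|c_j-\mu_j^*\|^2\le\tfrac{\epsilon}{n}\sum_{p\in S_j^*}\|p-\mu_j^*\|^2$ (which fixes the required $t$). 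Plugging into the identity, summing over $j$, and observing that $S_1^*,\dots,S_k^*$ is already a \emph{feasible} chromatic assignment for $(c_1,\dots,c_k)$ so the bipartite matching can only do better, the retained tuple achieves cost at most $(1+\epsilon)\mathrm{OPT}$. The running time is the enumeration count $2^{\mathrm{poly}(k/\epsilon)}$ times the per-tuple work $O(k^3 dn)$ for matching plus $O(knd)$ for cost evaluation, matching the claimed $O\!\bigl(2^{\mathrm{poly}(k/\epsilon)}\, nd\bigr)$.

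The main obstacle I anticipate is the probabilistic bookkeeping: Lemma~\ref{lem-select} (enough samples from every cluster) and Lemma~\ref{lem-dis} (every centroid accurate) must succeed simultaneously across all $k$ clusters with constant overall probability, which forces union bounds that slightly inflate $z$ and $t$ with factors of $k$. These inflations remain polynomial in $k/\epsilon$ and are absorbed into the $2^{\mathrm{poly}(k/\epsilon)}$ prefactor. Everything else—the cluster-cost decomposition, the feasibility observation that lets bipartite matching only improve the bound, and the per-tuple matching cost—is routine.
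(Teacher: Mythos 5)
Your proposal is correct and follows essentially the same route as the paper, which only sketches the argument in the paragraph preceding the theorem: sample $O(\mathrm{poly}(k/\epsilon))$ points, enumerate candidate subsets, take their means via Lemmas~\ref{lem-dis} and~\ref{lem-select}, apply the mean-shift identity of Lemma~\ref{lem-meanshift}, and finish with bipartite matching (which can only improve on the optimal chromatic partition used in the analysis). Your filled-in probabilistic bookkeeping and union bounds are consistent with what the paper leaves implicit.
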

  \vspace{-0.1in}

With the above theorem, we only need to focus on the general $k$-CMeans problem in the remaining sections. Note that in the general case, some clusters may have a very small fraction (rather than $1/k$)  of points, thus we can not use the above method to solve the general $k$-CMeans problem.

  \vspace{-0.17in}
\section{Constant Approximation from $k$-means}
\label{sec-constant}
  \vspace{-0.12in}

In this section, we show that a constant approximation solution for $k$-CMeans can be produced from an approximation solution of $k$-means. Below is the main theorem of this section.
\vspace{-0.05in}
\begin{theorem}
\label{the-constant}
Let $\mathcal{G}=\{G_1, \cdots, G_n\}$ be an instance of $k$-CMeans, and $\mathcal{C}$ be the $k$ mean points of a constant $c$-approximation solution of $k$-means on the points $\cup^n_{i=1}G_i$. 
Then $[\mathcal{C}]^k$ contains at least one $k$-tuple which could induce a $(2ck^2+2k-1)$-approximation of $k$-CMeans on $\mathcal{G}$, where $[\mathcal{C}]^k=\underbrace{\mathcal{C}\times\cdots\times\mathcal{C}}_{k}$.
\end{theorem}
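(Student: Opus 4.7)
The plan is to exhibit a single $k$-tuple in $[\mathcal{C}]^k$ induced by the (unknown) optimal chromatic clustering, and then bound its induced cost by two successive triangle-inequality arguments. Let $(o_1^*,\ldots,o_k^*)$ denote the optimal $k$-CMeans mean points with chromatic partition $(U_1^*,\ldots,U_k^*)$, and let $OPT$ be the corresponding (unnormalized) optimum cost. Because ordinary $k$-means on $\bigcup_i G_i$ is a relaxation of $k$-CMeans -- any chromatic partition is a fortiori a valid $k$-means partition -- its optimum is at most $OPT$, so the total squared-distance cost of the $c$-approximation $\mathcal{C}=\{m_1,\ldots,m_k\}$ is at most $c\cdot OPT$. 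For each $j$, define $\pi(j):=\arg\min_i \|o_j^*-m_i\|$ and take the candidate tuple $T:=(m_{\pi(1)},\ldots,m_{\pi(k)})\in[\mathcal{C}]^k$ (repeats allowed).

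To upper bound the cost that $T$ induces, I would observe that for each $G_i$, sending every point $p\in G_i\cap U_j^*$ to position $j$ of $T$ is a valid chromatic matching, since $U_j^*$ contains at most one point per color by definition. Consequently, the bipartite-matching cost of $T$ on $\mathcal{G}$ is at most
\[
\sum_{j=1}^k\sum_{q\in U_j^*}\|q-m_{\pi(j)}\|^2.
\]
The first triangle-inequality layer, $\|q-m_{\pi(j)}\|^2\le 2\|q-o_j^*\|^2+2\|o_j^*-m_{\pi(j)}\|^2$, contributes a $2\cdot OPT$ term from the $\|q-o_j^*\|^2$ part. For the displacement term $\|o_j^*-m_{\pi(j)}\|^2$ I would use the minimality of $\pi(j)$ pointwise: for each $q\in U_j^*$,
\[
\|o_j^*-m_{\pi(j)}\|^2\le\|o_j^*-m_{\tau(q)}\|^2\le 2\|o_j^*-q\|^2+2\|q-m_{\tau(q)}\|^2,
\]
where $\tau(q)$ indexes the $k$-means cluster containing $q$. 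Summing over $q\in U_j^*$ and then over $j$ converts $\sum_j|U_j^*|\,\|o_j^*-m_{\pi(j)}\|^2$ into a combination of $OPT$ (from the chromatic side) and the $k$-means cost of $\mathcal{C}$ (bounded by $c\cdot OPT$), so this sum is at most $(2+2c)\cdot OPT$.

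Plugging one layer into the other yields a constant-factor bound of the form $(\alpha+\beta c)\cdot OPT$; the specific constant $(2ck^2+2k-1)$ in the statement is then obtained by replacing the crude $\|a+b\|^2\le 2\|a\|^2+2\|b\|^2$ with the weighted form $(1+\gamma)\|a\|^2+(1+1/\gamma)\|b\|^2$ and choosing $\gamma$ as a function of $k$ so that the cross terms over the $k$ clusters and $k$ positions of the tuple are absorbed cleanly. The main obstacle in the argument is exactly this bridge between $\mathcal{C}$, which is computed obliviously to the coloring, and the chromatic-specific optima $o_j^*$: the only information we have linking these two is the per-point proximity $\|q-m_{\tau(q)}\|$ from the $k$-means guarantee and the defining minimality of $\pi(j)$ from our construction, and both must be used together via the two-layer triangle inequality for the constant to remain independent of the (possibly very high) dimension $d$.
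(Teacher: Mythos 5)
Your proposal is correct, and it takes a genuinely different route from the paper's proof. The paper picks, for each optimal chromatic cluster $Opt_i$, the $k$-means center $c_{j_i}$ whose cluster $S_{j_i}$ captures (by pigeonhole) at least a $1/k$ fraction of $Opt_i$, and then bounds $\|m_i-c_{j_i}\|$ through the mean $\tau^i_{j_i}$ of $Opt_i\cap S_{j_i}$ using Lemma \ref{lem-close} twice; the factors $2k-1$ and $2ck^2$ in the statement come precisely from those $1/k$-fraction bounds. You instead pick the center $m_{\pi(j)}$ nearest to the optimal chromatic mean $o_j^*$ and bound the displacement pointwise, routing through each point's own $k$-means center $m_{\tau(q)}$ and averaging over $U_j^*$. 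The two proofs share the same skeleton (the induced chromatic matching upper-bounds the bipartite-matching cost, and the $k$-means optimum lower-bounds the $k$-CMeans optimum), but the decomposition of the displacement term is where you differ. Your route avoids the pigeonhole and Lemma \ref{lem-close} entirely and yields a $k$-independent constant: carried out as written it gives cost at most $(6+4c)\cdot OPT$, and if you replace the crude first-layer inequality $\|q-m_{\pi(j)}\|^2\le 2\|q-o_j^*\|^2+2\|o_j^*-m_{\pi(j)}\|^2$ with the exact identity of Lemma \ref{lem-meanshift} (valid since $o_j^*$ is the mean of $U_j^*$) you get $(3+2c)\cdot OPT$. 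Since $3+2c\le 6+4c\le 2ck^2+2k-1$ for all $k\ge2$ and $c\ge1$, this implies the stated guarantee. One caveat: your closing remark about tuning a weighted triangle inequality to recover the exact constant $2ck^2+2k-1$ is unnecessary and slightly misleading --- your argument naturally produces a smaller constant, and you should simply state that constant and observe that it is dominated by the claimed bound, rather than try to reverse-engineer the paper's expression.
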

  \vspace{-0.17in}
To prove Theorem \ref{the-constant}, we first introduce two lemmas.  
  \vspace{-0.05in}
  \begin{lemma}
\label{lem-meanshift}
Let $P$ be a set of points in $\mathbb{R}^d$ space, and $m$ be the mean point of $P$. For any point $m'\in \mathbb{R}^d$, $\sum_{p\in P}||p-m'||^2=\sum_{p\in P}||p-m||^2+|P| \times ||m-m'||^2$ (see Appendix for the proof).

\end{lemma}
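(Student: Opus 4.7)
The plan is to prove this identity by a direct algebraic expansion, exploiting the defining property of the mean: $\sum_{p \in P}(p - m) = 0$.

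First I would write each summand $\|p - m'\|^2$ by inserting $m$ as a ``pivot,'' namely $p - m' = (p - m) + (m - m')$, and expand the squared norm using the inner product:
$$\|p - m'\|^2 \;=\; \|p - m\|^2 \;+\; 2\,\langle p - m,\, m - m'\rangle \;+\; \|m - m'\|^2.$$
Summing this identity over all $p \in P$ gives three terms. The first term is exactly $\sum_{p \in P}\|p - m\|^2$, the desired first summand on the right-hand side. The third term, being independent of $p$, contributes $|P|\cdot\|m - m'\|^2$, the desired second summand.

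The crux is therefore the vanishing of the middle term. Since the inner product is bilinear, I can pull the constant vector $m - m'$ outside the sum:
$$\sum_{p \in P} 2\,\langle p - m,\, m - m'\rangle \;=\; 2\,\Bigl\langle \sum_{p \in P}(p - m),\; m - m'\Bigr\rangle.$$
By the definition of the mean, $m = \frac{1}{|P|}\sum_{p \in P}p$, so $\sum_{p \in P}(p - m) = 0$, and the middle term vanishes. Combining the three contributions yields the claimed identity. There is no real obstacle here; the only substantive step is the observation that the cross term vanishes precisely because $m$ is the centroid of $P$, which is exactly the reason the lemma is specific to the mean point rather than an arbitrary reference.
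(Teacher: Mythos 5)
Your proof is correct and follows essentially the same route as the paper's: expand $\|p-m'\|^2$ via the pivot $p-m'=(p-m)+(m-m')$, sum over $p$, and observe that the cross term $2\langle\sum_{p\in P}(p-m),\,m-m'\rangle$ vanishes because $m$ is the centroid. No differences worth noting.
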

  \vspace{-0.05in}
  \vspace{-0.06in}
  \begin{lemma}
\label{lem-close}
Let $P$ be a set of points in $\mathbb{R}^d$ space, and $P_{1}$ be its subset containing $\alpha |P|$ points for some $0<\alpha\leq 1$. Let $m$ and $m_1$ be the mean points of $P$ and $P_{1}$ respectively. Then  $||m_{1}-m||\leq\sqrt{\frac{1-\alpha}{\alpha}}\delta$, where $\delta^2=\frac{1}{|P|}\sum_{p\in P}||p-m||^2$.
\end{lemma}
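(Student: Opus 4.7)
The plan is to introduce the complementary subset and exploit the additive structure of means together with the parallel-axis-style identity from Lemma \ref{lem-meanshift}. Specifically, let $P_{2}=P\setminus P_{1}$, so $|P_{2}|=(1-\alpha)|P|$, and let $m_{2}$ denote its mean. Since the overall mean $m$ is a weighted average of the means of the two parts, I would first record the identity $m=\alpha m_{1}+(1-\alpha)m_{2}$. This forces $m_{1}$, $m$, $m_{2}$ to be collinear and, by a one-line rearrangement, yields the crucial proportionality $\|m_{2}-m\|=\frac{\alpha}{1-\alpha}\,\|m_{1}-m\|$.

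Next I would apply Lemma \ref{lem-meanshift} to $P_{1}$ with reference point $m$, and to $P_{2}$ with reference point $m$, obtaining
\[
\sum_{p\in P_{1}}\|p-m\|^{2}=\sum_{p\in P_{1}}\|p-m_{1}\|^{2}+\alpha|P|\cdot\|m_{1}-m\|^{2},
\]
\[
\sum_{p\in P_{2}}\|p-m\|^{2}=\sum_{p\in P_{2}}\|p-m_{2}\|^{2}+(1-\alpha)|P|\cdot\|m_{2}-m\|^{2}.
\]
Summing these equalities gives $|P|\,\delta^{2}$ on the left side, and on the right side two nonnegative sum-of-squares terms plus $\alpha|P|\,\|m_{1}-m\|^{2}+(1-\alpha)|P|\,\|m_{2}-m\|^{2}$. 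Dropping the sum-of-squares terms yields the clean inequality $\delta^{2}\ge \alpha\|m_{1}-m\|^{2}+(1-\alpha)\|m_{2}-m\|^{2}$.

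Finally, I would substitute the proportionality from the first step to eliminate $\|m_{2}-m\|$, which collapses the right-hand side to $\frac{\alpha}{1-\alpha}\|m_{1}-m\|^{2}$ after simple algebra. Solving for $\|m_{1}-m\|$ immediately gives the stated bound $\|m_{1}-m\|\le\sqrt{\frac{1-\alpha}{\alpha}}\,\delta$.

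There is no real obstacle: the argument is a short algebraic manipulation, and the only conceptual move is choosing to work with $P_{2}$ and $m_{2}$ so that Lemma \ref{lem-meanshift} can be applied symmetrically. The degenerate case $\alpha=1$ is trivial (both sides are $0$), and the bound blows up correctly as $\alpha\to 0$, matching the intuition that a tiny subset can drift far from the global mean.
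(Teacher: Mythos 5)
Your proposal is correct and follows essentially the same route as the paper: both introduce $P_{2}=P\setminus P_{1}$, apply Lemma \ref{lem-meanshift} to each part, use the collinearity $m=\alpha m_{1}+(1-\alpha)m_{2}$, and drop the nonnegative within-part sums to bound $\|m_{1}-m\|$. The only cosmetic difference is that the paper parametrizes by $L=\|m_{1}-m_{2}\|$ while you eliminate $\|m_{2}-m\|$ via the proportionality directly; the algebra is identical.
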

\begin{proof}
Let $P_2=P\setminus P_1$, and $m_{2}$ be its mean point. By Lemma \ref{lem-meanshift} we first have the following two equalities.
 \vspace{-0.07in}
 \small{
  \begin{eqnarray} 
    \sum_{p\in P_{1}}||p-m||^2 &= &\sum_{p\in P_{1}}||p-m_1||^2+|P_1| \times ||m_1-m||^2. \label{for-1}\\
%
  \sum_{p\in P_{2}}||p-m||^2 &=& \sum_{p\in P_{2}}||p-m_2||^2+|P_2| \times ||m_2-m||^2. \label{for-2}
 \end{eqnarray} 
 }
 \normalsize
\vspace{-0.05in} 

Then by the definition of $\delta$, we have $\delta^2=\frac{1}{|P|}(\sum_{p\in P_{1}}||p-m||^2+\sum_{p\in P_{2}}||p-m||^2)$. Let $L=||m_{1}-m_{2}||$. By the definition of mean point, we have $m=\frac{1}{|P|}\sum_{p\in P} p=\frac{1}{|P|}(\sum_{p\in P_1} p+\sum_{p\in P_2} p)=\frac{1}{|P|}(|P_1|m_1+|P_2|m_2)$. Thus the three points $\{m, m_1, m_2\}$ are collinear, and $||m_{1}-m||=(1-\alpha) L$ and $||m_{2}-m||=\alpha L$. Combining (\ref{for-1}) and (\ref{for-2}), we have 

\vspace{-0.05in}
\small{
\begin{eqnarray*}
\delta^2= \frac{1}{|P|}(\sum_{p\in P_{1}}||p-m_1||^2+ |P_1| \times ||m_1-m||^2+\sum_{p\in P_{2}}||p-m_2||^2+|P_2| \times ||m_2-m||^2)\\
\geq \frac{1}{|P|}( |P_1| \times ||m_1-m||^2+|P_2| \times ||m_2-m||^2)
 =\alpha((1-\alpha)L)^2+(1-\alpha)(\alpha L)^2=\alpha(1-\alpha)L.
 \end{eqnarray*} 
 }
 \normalsize 
 \vspace{-0.07in}  
Thus, we have $L\leq\frac{\delta}{\sqrt{\alpha(1-\alpha)}}$, which means that $||m_{1}-m||=(1-\alpha)
L\leq\sqrt{\frac{1-\alpha}{\alpha}}\delta$.
\qed
\end{proof}
  \vspace{-0.03in}
\begin{proof}[\textbf{of Theorem \ref{the-constant}}]
Let $\{c_1, \cdots, c_k\}$ be the $k$ mean points in  $\mathcal{C}$, and $\{S_1, \cdots, S_k\}$ be their corresponding clusters. 
Let $\{m_{1}, \cdots, m_{k}\}$ be the $k$ unknown optimal mean points of $k$-CMeans, and $\mathcal{OPT}=\{Opt_{1}, \cdots, Opt_{k}\}$ be the corresponding $k$ optimal chromatic clusters.
Let $\Gamma^i_j=Opt_i\cap S_j$, and $\tau^i_j$ be its mean point for $1\leq i, j\leq k$.
\vspace{-0.18in}
\begin{figure}[ht]
\vspace{-0.1in}
  \centerline{
  \includegraphics[height=1.15in]{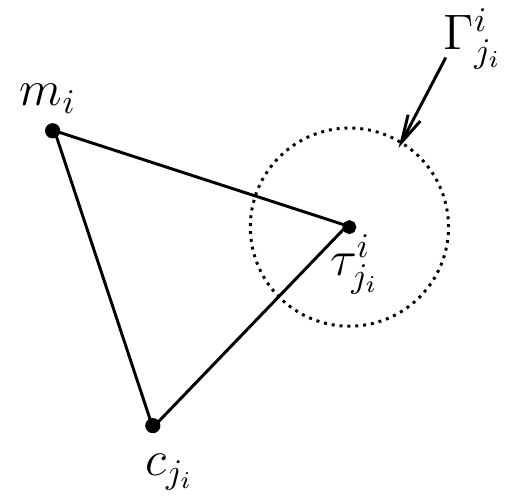}}
  \vspace{-0.15in}
    \caption{An example illustrating Theorem \ref{the-constant}.}
  \label{fig-constant}
  \vspace{-0.2in}
\end{figure}
\vspace{-0.05in}
Since $\cup^k_{j=1}\Gamma^i_j=Opt_i$, by pigeonhole principle we know that  there must exist some index $1\leq j_i \leq k$ such that $|\Gamma^i_{j_i}|\geq \frac{1}{k}|Opt_i|$. Thus by fixing $j_i$, we have the following about $\sum_{p\in Opt_i}||p-c_{j_i}||^2$ (see Figure \ref{fig-constant})
 \small{
 \begin{eqnarray}
 \sum_{p\in Opt_i}||p-c_{j_i}||^2 &= &\sum_{p\in Opt_i}||p-m_i||^2+|Opt_i| \times ||m_i-c_{j_i}||^2 
=\sum_{p\in Opt_i}||p-m_i||^2+|Opt_i| \times ||m_i-\tau^i_{j_i}+\tau^i_{j_i}-c_{j_i}||^2 \nonumber\\
&\leq&\sum_{p\in Opt_i}||p-m_i||^2+|Opt_i| \times (||m_i-\tau^i_{j_i}||+||\tau^i_{j_i}-c_{j_i}||)^2 \nonumber\\
&\leq&\sum_{p\in Opt_i}||p-m_i||^2+|Opt_i| \times 2(||m_i-\tau^i_{j_i}||^2+||\tau^i_{j_i}-c_{j_i}||^2), \label{for-3}
\end{eqnarray}
}
\normalsize
where the first equation follows from Lemma \ref{lem-meanshift} (note that $m_i$ is the mean point of $Opt_i$), and the last inequality follows from the fact that $(a+b)^2\leq 2(a^2+b^2)$ for any numbers $a$ and $b$. By Lemma \ref{lem-close}, we have
\vspace{-0.12in}
\small{
\begin{eqnarray}
  ||\tau^i_{j_i}-m_i||^2\leq \frac{1-\frac{1}{k}}{\frac{1}{k}}(\frac{1}{|Opt_i|}\sum_{p\in Opt_i}||p-m_i ||^2). \label{for-4}\\
  ||\tau^i_{j_i}-c_{j_i}||^2\leq \frac{1-\frac{|\Gamma^i_{j_i}|}{|S_{j_i}|}}{\frac{|\Gamma^i_{j_i}|}{|S_{j_i}|}}(\frac{1}{|S_{j_i}|}\sum_{p\in S_{j_i}}||p-c_{j_i} ||^2). \label{for-5}
  \end{eqnarray}
  }
  \normalsize
Plugging (\ref{for-4}) and (\ref{for-5}) into inequality (\ref{for-3}), we have
\vspace{-0.05in}
\small{
\begin{eqnarray*}
\sum_{p\in Opt_i}||p-c_{j_i}||^2\leq\sum_{p\in Opt_i}||p-m_i||^2+|Opt_i| \times 2(||m_i-\tau^i_{j_i}||^2+||\tau^i_{j_i}-c_{j_i}||^2)\\
\vspace{-0.05in}
\leq\sum_{p\in Opt_i}||p-m_i||^2+|Opt_i| \times 2(\frac{1-\frac{1}{k}}{\frac{1}{k}}(\frac{1}{|Opt_i|}\sum_{p\in Opt_i}||p-m_i ||^2)+\frac{1-\frac{|\Gamma^i_{j_i}|}{|S_{j_i}|}}{\frac{|\Gamma^i_{j_i}|}{|S_{j_i}|}}(\frac{1}{|S_{j_i}|}\sum_{p\in S_{j_i}}||p-c_{j_i} ||^2))\\
\vspace{-0.05in}
=(2k-1)\sum_{p\in Opt_i}||p-m_i||^2+2\frac{|Opt_i|}{|\Gamma^i_{j_i}|} \times (1-\frac{|\Gamma^i_{j_i}|}{|S_{j_i}|})\sum_{p\in S_{j_i}}||p-c_{j_i} ||^2).
\end{eqnarray*}
}
\normalsize 
\vspace{-0.05in}
Since $|\Gamma^i_{j_i}|\geq \frac{1}{k}|Opt_i|$, we have $\frac{|Opt_i|}{|\Gamma^i_{j_i}|} \times (1-\frac{|\Gamma^i_{j_i}|}{|S_{j_i}|})\leq k$. Thus the above inequality becomes
\small{
\begin{eqnarray}
\sum_{p\in Opt_i}||p-c_{j_i}||^2\leq (2k-1)\sum_{p\in Opt_i}||p-m_i||^2+2k\sum_{p\in S_{j_i}}||p-c_{j_i} ||^2. \label{for-6}
\end{eqnarray}
}
\normalsize
Summing both sides of (\ref{for-6}) over $i$, we have
\small{
\begin{eqnarray}
\sum^k_{i=1}\sum_{p\in Opt_i}||p-c_{j_i}||^2\leq (2k-1)\sum^k_{i=1}\sum_{p\in Opt_i}||p-m_i||^2+2k\sum^k_{i=1}\sum_{p\in S_{j_i}}||p-c_{j_i} ||^2 \nonumber\\
\leq (2k-1)\sum^k_{i=1}\sum_{p\in Opt_i}||p-m_i||^2+2k^2 \sum^k_{j=1}\sum_{p\in S_{j}}||p-c_{j} ||^2, \label{for-7}
\end{eqnarray}
}
\normalsize
where the second inequality follows from the inequality $\sum_{p\in S_{j_i}}||p-c_{j_i} ||^2\leq \sum^k_{j=1}\sum_{p\in S_{j}}||p-c_{j} ||^2$, which implies that $2k\sum^k_{i=1}\sum_{p\in S_{j_i}}||p-c_{j_i} ||^2\leq 2k^2 \sum^k_{j=1}\sum_{p\in S_{j}}||p-c_{j} ||^2$.

It is obvious that the optimal objective value of $k$-means is no larger than that of $k$-CMeans on the same set of points in $\mathcal{G}$.  Thus, $\sum^k_{j=1}\sum_{p\in S_{j}}||p-c_{j} ||^2\leq c\sum^k_{i=1}\sum_{p\in Opt_i}||p-m_i||^2$. Plugging this inequality into inequality (\ref{for-7}), we have
\small{
\begin{eqnarray*}
\sum^k_{i=1}\sum_{p\in Opt_i}||p-c_{j_i}||^2\leq (2ck^2+2k-1)\sum^k_{i=1}\sum_{p\in Opt_i}||p-m_i||^2.
\end{eqnarray*}
}
\normalsize
 The above inequality means that if we take the $k$-tuple $(c_{j_1}, \cdots, c_{j_k})$ as the $k$ approximate mean points for $k$-CMeans, we have a $(2ck^2+2k-1)$-approximation solution, where the $k$ chromatic clusters can be obtained by the bipartite matching procedure. Thus, the theorem is proved.
\qed
\end{proof}
  \vspace{-0.1in}
  
  \noindent \textbf{Running Time:} In the above theorem, the bipartite matching procedure takes $O(k^3 nd)$ time for one $k$-tuple.  Since there are in total $O(k^k)$ such $k$-tuples,
the total running time is $O(k^{k+3} nd)$ for computing a $(2ck^2+2k-1)$-approximation of $k$-CMeans from a $c$-approximation of $k$-means. As $k$ is assumed to be a constant in this paper, the running time is linear.
\vspace{-0.2in}
\section{$(1+\epsilon)$-Approximation Algorithm}
\label{sec-ptas}
\vspace{-0.15in}

This section presents our $(1+\epsilon)$-approximation solution to the $k$-CMeans problem. We first introduce a standalone result, {\em Simplex Lemma}, and then use it to achieve a $(1+\epsilon)$-approximation for $k$-CMeans. The main idea of the algorithm is to use a sphere peeling technique to generate the chromatic clusters iteratively, where the Simplex Lemma helps to determine a proper peeling region. 

  \vspace{-0.15in}
\subsection{Simplex Lemma}
\label{sec-simplex}
  \vspace{-0.05in}

Simplex Lemma is mainly for approximating the mean point of some \textbf{unknown} points set $P$. The only known information about $P$ is a set $S$ of $j$ points with each of them being an approximate mean 
point of a subset of $P$.  The following Simplex lemmas show that it is possible to construct a simplex of $S$ and find the desired approximate mean point of $P$ inside the simplex.

\begin{figure}[]
\vspace{-0.25in}
\begin{minipage}[t]{0.5\linewidth}
  \centering
  \includegraphics[height=1.0in]{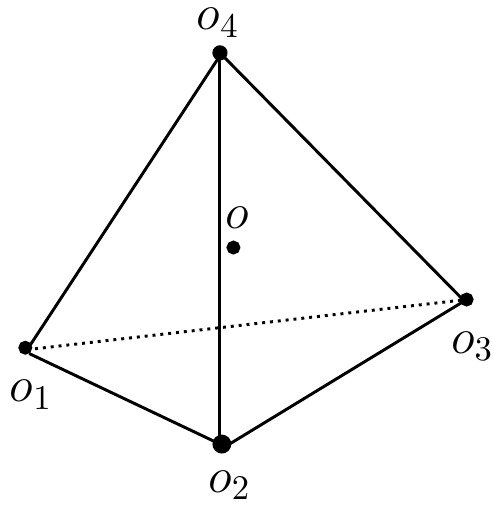}
  \vspace{-0.15in}
     \caption{An example for Lemma \ref{lem-simplex} with $j=4$.}
  \label{fig-simplex}
\end{minipage}
\begin{minipage}[t]{0.5\linewidth}
\centering
  \includegraphics[height=1.1in]{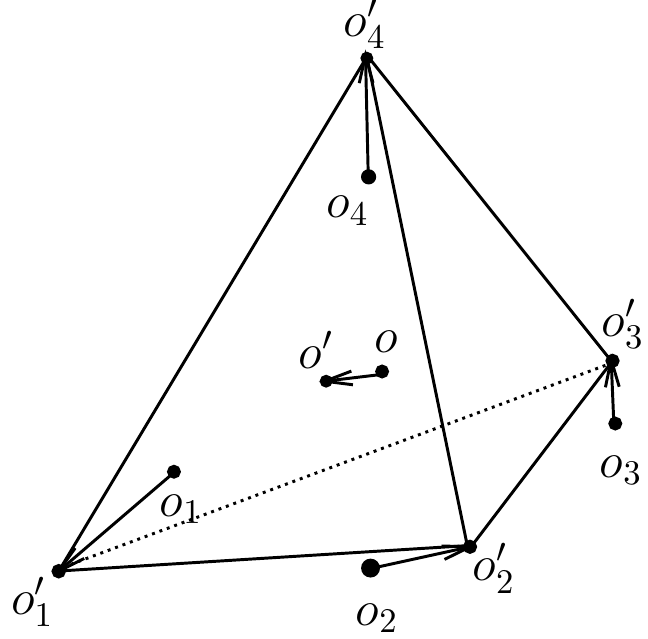}
  \vspace{-0.15in}
       \caption{An example for Lemma \ref{lem-shift} with $j=4$.}
  \label{fig-simplex2}
  \end{minipage}
  \vspace{-0.28in}
\end{figure}
\vspace{-0.1in}
\begin{lemma}[Simplex Lemma \Rmnum{1}]
\label{lem-simplex}
Let $P$  be a set of points in $\mathbb{R}^d$ with a partition of $P=\cup^j_{l=1} P_l$ and $P_{l_1}\cap P_{l_2}=\emptyset$ for any $l_1\neq l_2$. 
Let $o$ be the mean point of $P$, and $o_l$ be the mean point of $P_l$ for $1\leq l\leq j$. Further, let $\delta^2=\frac{1}{|P|}\sum_{p\in P}||p-o||^2$,  and $V$ be the simplex determined by $\{o_1, \cdots, o_j\}$. 
Then for any $0<\epsilon\leq 1$, it is possible to construct a grid of size $O((8j/\epsilon)^j)$ inside $V$ such that at least one grid point $\tau$ satisfies the inequality $||\tau-o||\leq\sqrt{\epsilon}\delta$.
\end{lemma}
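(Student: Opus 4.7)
Set $\alpha_l=|P_l|/|P|$; then $o=\tfrac{1}{|P|}\sum_l |P_l|\,o_l=\sum_l\alpha_l o_l$ lies in $V$ with barycentric coordinates $\alpha=(\alpha_1,\ldots,\alpha_j)$. My plan is to lay down a uniform barycentric grid on $V$ of step $1/N$, for $N$ chosen as a suitable polynomial in $j/\epsilon$, and to argue that rounding $\alpha$ to a nearby grid vector $\beta$ yields a $\tau=\sum_l\beta_l o_l$ within $\sqrt{\epsilon}\delta$ of $o$. The number of such grid points is $\binom{N+j-1}{j-1}$, which I will check lies within the allowed budget $O((8j/\epsilon)^j)$.

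The first structural step is a weighted-variance bound tying the vertices $o_l$ to $\delta$. Applying Lemma~\ref{lem-meanshift} to each block $P_l$ with reference point $o$ gives $\sum_{p\in P_l}\|p-o\|^2 = \sum_{p\in P_l}\|p-o_l\|^2 + |P_l|\,\|o_l-o\|^2$; summing over $l$ and discarding the first nonnegative term yields
\[
  \sum_l \alpha_l\,\|o_l-o\|^2 \;\leq\; \delta^2.
\]
Next, for any candidate $\tau=\sum_l\beta_l o_l$ with $\sum_l\beta_l=1$, the identity $\sum_l(\beta_l-\alpha_l)=0$ gives $\tau-o = \sum_l(\beta_l-\alpha_l)(o_l-o)$. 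Triangle inequality followed by Cauchy--Schwarz with weights $\alpha_l$ then produces
\[
  \|\tau-o\|^2 \;\leq\; \Bigl(\sum_l \alpha_l\|o_l-o\|^2\Bigr)\Bigl(\sum_l \tfrac{(\beta_l-\alpha_l)^2}{\alpha_l}\Bigr) \;\leq\; \delta^2\cdot\sum_l \tfrac{(\beta_l-\alpha_l)^2}{\alpha_l},
\]
reducing the goal to exhibiting a grid vector $\beta$ whose Pearson-type discrepancy $\sum_l(\beta_l-\alpha_l)^2/\alpha_l$ against the unknown $\alpha$ is at most $\epsilon$.

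For the rounding, I would classify each coordinate as heavy ($\alpha_l \geq 1/N$) or light ($\alpha_l<1/N$). Light coordinates get $\beta_l=0$, contributing only $(\beta_l-\alpha_l)^2/\alpha_l=\alpha_l<1/N$. Heavy coordinates are snapped to the nearest multiple of $1/N$, contributing at most $(1/(2N))^2/\alpha_l \leq 1/(2N)$. The feasibility constraint $\sum_l n_l=N$ may fail by at most $j$ units; by pigeonhole some $\alpha_l\geq 1/j$, so the deficit can be absorbed into such a heavy coordinate at an additional cost of $O(j^3/N^2)$ in its discrepancy term. Tuning $N$ as a polynomial in $j/\epsilon$ then drives the total discrepancy below $\epsilon$ while $\binom{N+j-1}{j-1}$ stays within $O((8j/\epsilon)^j)$.

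The main obstacle I expect is precisely this last bookkeeping step: enforcing $\sum_l\beta_l=1$ exactly without inflating the Pearson bound in the worst case. The delicate instances are when many coordinates are simultaneously tiny, or when the mass of $\alpha$ concentrates almost entirely on one vertex; the absorbing-coordinate rule must be robust enough (perhaps spreading the deficit across several heavy coordinates) so that the per-block contributions, summed with the correction, fit cleanly under $\epsilon$ and keep the grid size within the prescribed $(8j/\epsilon)^j$ budget.
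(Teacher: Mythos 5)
Your argument is correct, and it takes a genuinely different route from the paper's. The paper proves the lemma by induction on $j$: it prunes away blocks with $|P_l|/|P| < \epsilon/(4j)$ (recursing on a smaller $j$ and absorbing the resulting shift of the mean via Lemma~\ref{lem-close}), then uses Lemma~\ref{lem-close} to show all surviving $o_l$ lie in a ball of radius $O(\sqrt{j/\epsilon})\delta$, and lays a uniform Euclidean grid of side $\epsilon r/(4j)$ in the $(j-1)$-dimensional affine hull of that ball. Your proof instead works intrinsically on the simplex: the identity $\sum_l \alpha_l\|o_l-o\|^2\le\delta^2$ (from Lemma~\ref{lem-meanshift}) combined with weighted Cauchy--Schwarz reduces the problem to approximating the barycentric vector $\alpha$ in the $\chi^2$-type metric $\sum_l(\beta_l-\alpha_l)^2/\alpha_l$, and a lattice $\{\beta_l=n_l/N\}$ with $N=\Theta(j^{3/2}/\epsilon)$ achieves discrepancy $O(j/N+j^3/N^2)\le\epsilon$ while $\binom{N+j-1}{j-1}$ stays within the $O((8j/\epsilon)^j)$ budget. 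The worries you flag at the end are already resolved by your own estimates: many tiny coordinates cost at most $j/N$ in total, a single dominant coordinate is heavy and absorbs the $\le j/N$ mass deficit at cost $O(j^3/N^2)$ (and has enough mass to give it back when the deficit is negative), so only the explicit choice of $N$ remains to be written down. Your approach buys a cleaner, induction-free argument whose grid is combinatorially explicit and independent of any metric embedding of the ball; it also transfers directly to Lemma~\ref{lem-shift}, since replacing each $o_l$ by $o'_l$ perturbs every convex combination by at most $L$, giving $\|\tau-o\|\le\sqrt{\epsilon}\delta+L$, slightly sharper than the paper's $\sqrt{\epsilon}\delta+(1+\epsilon)L$. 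The paper's route, in exchange, needs only the single scalar Lemma~\ref{lem-close} rather than the weighted Cauchy--Schwarz manipulation.
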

\begin{proof}
We will prove this lemma by mathematical induction on $j$. 

\noindent\textbf{Base case:} For $j=1$, since $P_1=P$, $o_1=o$. Thus, the simplex $V$ and the grid are all simply the point $o_1$. Clearly $\tau=o_{1}$ satisfies the inequality. 

\noindent\textbf{Induction step:} Assume that the lemma holds for any $j\leq j_0$ for some $j_{0} \ge 1$ (i.e., Induction Hypothesis). Now we consider  the case of $j=j_0+1$. First, we assume that $\frac{|P_l |}{|P|}\geq \frac{\epsilon}{4j}$ for each $1\leq l\leq j$. Otherwise, we can reduce the problem to the case of smaller $j$ in the following way. Let $I=\{l| 1\leq l\leq j, \frac{|P_l |}{|P|}< \frac{\epsilon}{4j}\}$ be the index set of small subsets. Then, $\frac{\sum_{l\in I}|P_l |}{|P|}<\frac{\epsilon}{4}$, and $\frac{\sum_{l\not\in I}|P_l |}{|P|}\geq 1-\frac{\epsilon}{4}$. By Lemma \ref{lem-close}, we know that  $||o'-o||\leq\sqrt{\frac{\epsilon/4}{1-\epsilon/4}}\delta$, where $o'$ is the mean point of $\cup_{l\not\in I}P_l$.  Let $(\delta')^2$ be the variance of $\cup_{l\not\in I}P_l$. Then, we have $(\delta')^2\le\frac{|P|}{|\cup_{l\not\in I}P_l|}\delta^2\leq \frac{1}{1-\epsilon/4}\delta^2$. Thus, if we replace $P$ and $\epsilon$ by $\cup_{l\not\in I}P_l$ and $\frac{\epsilon}{16}$ respectively, and find a point $\tau$ such that $||\tau-o'||^2\leq\frac{\epsilon}{16}(\delta')^2\leq \frac{\epsilon/16}{1-\epsilon/4}\delta^2$, we have $||\tau-o||^2\leq(||\tau-o'||+||o'-o||)^2\leq \frac{\frac{9}{16}\epsilon}{1-\epsilon/4}\delta^2\leq \epsilon\delta^2$ (where the last inequality is due to the fact $\epsilon<1$). This means that we can reduce the problem to a problem with  point set $\cup_{l\not\in I}P_l$ and a smaller $j$ (i.e., $j-|I|$). By the induction hypothesis, we know that the reduced problem can be solved (note that the simplex would be a subset of $V$ determined by $\{o_l\mid 1\leq l\leq j, l\not\in I\}$), and therefore the induction step holds for this case.  Thus, in the following discussion, we can assume that $\frac{|P_l |}{|P|}\geq \frac{\epsilon}{4j}$ for each $1\leq l\leq j$.

For each $1\leq l\leq j$, since $\frac{|P_l |}{|P|}\geq \frac{\epsilon}{4j}$, by Lemma \ref{lem-close}, we know that $||o_l-o||\leq\sqrt{\frac{1- \frac{\epsilon}{4j}}{ \frac{\epsilon}{4j}}}\delta\leq 2\sqrt{\frac{j}{\epsilon}}\delta$. This, together with triangle inequality, implies that  for any $1\leq l, l'\leq j$, $||o_l-o_{l'}||\leq ||o_l-o||+||o_{l'}-o||\leq 4\sqrt{\frac{j}{\epsilon}}\delta$. Thus, if we pick any index $l_0$, and draw a ball $\mathcal{B}$ centered at $o_{l_0}$ and with radius $r=\max_{1\leq l\leq j}\{||o_l-o_{l_0}||\}\leq 4\sqrt{\frac{j}{\epsilon}}\delta$,  the whole simplex $V$ will be inside $\mathcal{B}$. Note that since $o=\sum^j_{l=1}\frac{|P_j|}{|P|}o_l$, $o$ also locates inside $V$. This indicates that we can construct $\mathcal{B}$ in the $j-1$-dimensional space spanned by $\{o_1, \cdots, o_j\}$, rather than the whole $\mathbb{R}^d$ space.  Also, if we build a grid inside $\mathcal{B}$ with grid length $\frac{\epsilon r}{4j}$,  the total number of  grid points is no more than $O((\frac{8j}{\epsilon})^j)$. With this grid, we know that for any point $q$ inside $V$, there exists a grid point $g$ such that $||g-q||\leq \sqrt{j (\frac{\epsilon r}{4j})^2}=\frac{\epsilon}{4\sqrt{j}}r\leq \sqrt{\epsilon}\delta$. This means that can find a grid point $\tau$ inside $V$, such that $||\tau-o||^2\leq\epsilon\delta^2$. Thus, the induction step holds.

With the above base case and  induction steps, the lemma holds for any $j\ge 1$.
\qed
\end{proof}
  \vspace{-0.05in}
In the above lemma, we assume that the exact positions of $\{o_1, \cdots, o_j\}$ are known (see Fig. \ref{fig-simplex}). However, in some scenario (e.g., the exact partition of $P$ is not given, as is the case in $k$-CMeans), 
it is possible that we only know the approximate position of each mean point $o_{i}$ (see Fig. \ref{fig-simplex2}). The following lemma shows that an approximate position of $o$ can still be similarly determined.


  \vspace{-0.07in}
\begin{lemma}[Simplex Lemma \Rmnum{2}]
\label{lem-shift}
Let $P$, $o$, $P_{l}, o_{l}, 1\le l \le j$, and, $\delta$ be defined as in Lemma \ref{lem-simplex}. 
Let  $\{o'_1, \cdots, o'_j\}$ be  $j$ points in $R^{d}$ such that $||o'_l-o_l ||\leq L$ for $1\leq l\leq j$ and $L>0$, and $V'$ be the simplex determined by $\{o'_1, \cdots, o'_j\}$. Then for any $0<\epsilon\leq 1$, it is possible to construct a grid of size $O((8j/\epsilon)^j)$ inside $V'$ such that at least one grid point $\tau$ satisfies the inequality $||\tau-o||\leq\sqrt{\epsilon}\delta+(1+\epsilon)L$.
\end{lemma}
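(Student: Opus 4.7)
The plan is to mimic the inductive proof of Simplex Lemma I, built around one observation: the convex combination $o^{**}=\sum_{l=1}^{j}(|P_l|/|P|)\,o'_l$ lies in $V'$ and approximates $o$ well. Indeed, $o=\sum_{l}(|P_l|/|P|)\,o_l$, so the triangle inequality applied termwise gives $||o^{**}-o||\le L$. Hence any grid in $V'$ containing a point close to $o^{**}$ will automatically yield a point close to $o$, and the extra $L$ in the target bound is exactly what is needed to absorb the displacement $o^{**}\mapsto o$. The induction is on $j$; for $j=1$ the simplex $V'$ collapses to $o'_1$ and $o=o_1$, so $||o'_1-o||\le L\le \sqrt{\epsilon}\delta+(1+\epsilon)L$ trivially.

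For the inductive step I would first handle the ``balanced'' case where $|P_l|/|P|\ge\epsilon/(4j)$ for every $l$. As in Simplex Lemma I, Lemma \ref{lem-close} gives $||o_l-o||\le 2\sqrt{j/\epsilon}\,\delta$, and combining with $||o'_l-o_l||\le L$ bounds the diameter of $V'$ by $r'\le 4\sqrt{j/\epsilon}\,\delta+2L$ around any chosen vertex $o'_{l_0}$. Since $V'$ lives in the $(j-1)$-dimensional affine span of $\{o'_1,\ldots,o'_j\}$, I would place an axis-aligned grid of side length $\epsilon r'/(4j)$ inside the ball $\mathcal{B}'\supseteq V'$ centered at $o'_{l_0}$ of radius $r'$, yielding at most $O((8j/\epsilon)^{j-1})=O((8j/\epsilon)^j)$ points. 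For $o^{**}\in V'$ the nearest grid point $\tau$ satisfies
\[
||\tau-o^{**}||\le \frac{\epsilon r'}{4\sqrt{j}}\le \frac{\epsilon L}{2\sqrt{j}}+\sqrt{\epsilon}\,\delta\le \sqrt{\epsilon}\,\delta+\frac{\epsilon L}{2},
\]
so $||\tau-o||\le ||\tau-o^{**}||+||o^{**}-o||\le \sqrt{\epsilon}\,\delta+(1+\epsilon/2)L\le \sqrt{\epsilon}\,\delta+(1+\epsilon)L$, as required.

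For the ``unbalanced'' case where some $P_l$ has weight $<\epsilon/(4j)$, I would copy the reduction from Simplex Lemma I: discard the indices in $I=\{l:|P_l|/|P|<\epsilon/(4j)\}$, apply the lemma inductively with parameter $\epsilon/16$ to the surviving partition on $\cup_{l\notin I}P_l$ whose mean is $o'$ and variance satisfies $(\delta')^2\le \delta^2/(1-\epsilon/4)$, and combine the induction conclusion $||\tau-o'||\le \sqrt{\epsilon/16}\,\delta'+(1+\epsilon/16)L$ with $||o'-o||\le \sqrt{(\epsilon/4)/(1-\epsilon/4)}\,\delta$ from Lemma \ref{lem-close}. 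The main obstacle is the bookkeeping in this step: one has to verify that $\sqrt{\epsilon/16}\,\delta'+\sqrt{(\epsilon/4)/(1-\epsilon/4)}\,\delta\le \sqrt{\epsilon}\,\delta$ and $(1+\epsilon/16)\le (1+\epsilon)$ for $0<\epsilon\le 1$. The first reduces to $3/4\le \sqrt{1-\epsilon/4}$, which is precisely the numerical check that drives Simplex Lemma I, and the second is immediate. This is the delicate point, because a careless choice of reduction parameter would destroy either the $\sqrt{\epsilon}\delta$ factor or the $(1+\epsilon)L$ factor.
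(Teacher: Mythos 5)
Your proposal is correct and follows essentially the same route as the paper's proof: induction on $j$, the key point $o'=\sum_l(|P_l|/|P|)o'_l$ (your $o^{**}$) with $\|o'-o\|\le L$, a diameter bound on $V'$ via Lemma \ref{lem-close} plus the perturbation $L$, a grid of side $\epsilon r/(4j)$ in the $(j-1)$-dimensional span, and the same reduction to smaller $j$ for light parts. Your constants differ harmlessly (diameter $+2L$ versus the paper's $+4L$), and you are in fact more explicit than the paper about verifying the unbalanced-case bookkeeping.
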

\vspace{-0.07in}
  \vspace{-0.25in}
\subsection{Sphere Peeling Algorithm}
\label{sec-peeling}
  \vspace{-0.1in}

This section presents a sphere peeling algorithm to achieve a $(1+\epsilon)$-approximation for $k$-CMeans. 

Let $\mathcal{G}=\{G_1, \cdots, G_n\}$ be an instance of $k$-CMeans with $k$ (unknown) optimal chromatic clusters $\mathcal{OPT}=\{Opt_1, \cdots,$ $Opt_k\}$, and  $m_{j}$ be the mean point of the cluster $Opt_j$ for $1\leq j\leq k$. 
Without loss of generality, we assume that $|Opt_1|\geq |Opt_2|\geq\cdots\geq |Opt_k |$.  

\noindent\textbf{Algorithm overview:} Our algorithm first computes a constant  $C$-approximation solution (by Theorem \ref{the-constant}) to determine an upper bound  $\Delta$ of the optimal objective value $\delta^{2}_{opt}$, and then search for a good approximation of $\delta^{2}_{opt}$ in the interval of $[\Delta/C, \Delta]$. At each search step, our algorithm performs a sphere peeling procedure to iteratively generate $k$ approximate mean points for the chromatic clusters.    Initially, the sphere peeling procedure uses random sampling technique (i.e., Lemma \ref{lem-dis} and \ref{lem-select}) to find an approximate mean point for $Opt_{1}$. At $(j+1)$-th iteration,  it already has approximate mean  points $\{p_{v_1}, \cdots, p_{v_j}\}$ for $Opt_1, \cdots, Opt_j$ respectively. Then it draws $j$ peeling spheres, $B_{j+1,1}, \cdots, B_{j+1,j}$, centered at the $j$ approximate mean points respectively and with  a radius determined by  the approximation of $\delta_{opt}$.
Denote the set of unknown points $Opt_{j+1}\setminus (\cup^j_{l=1}B_{j+1,l})$ as $\mathcal{A}$. Our algorithm considers two cases: (a) $|\mathcal{A}|$ is large enough and (b) $|\mathcal{A}|$ is small. For case (a), since $|\mathcal{A}|$ is large enough, we can first use Lemma \ref{lem-select} to find an approximate mean point $m_{\mathcal{A}}$ of $\mathcal{A}$, and then construct a simplex determined by $m_{\mathcal{A}}$ and $\{p_{v_1}, \cdots, p_{v_j}\}$. For case (b), it directly constructs a simplex determined just by $\{p_{v_1}, \cdots, p_{v_j}\}$. For either  case, our algorithm builds a grid inside the simplex (i.e., using Lemma \ref{lem-shift}) to find an approximate mean point for $Opt_{j+1}$ (i.e., $p_{v_{j+1}}$). Repeat the sphere peeling procedure $k$ times to generate the $k$ approximate mean points.



\vspace{0.05in}

\noindent\textbf{Algorithm $k$-CMeans}
\newline \textbf{Input:} $\mathcal{G}=\{G_1, \cdots, G_n\}$, $k\geq 2$, and a small positive value $\epsilon$.
\newline \textbf{Output:} $(1+\epsilon)$-approximation solution for $k$-CMeans on $\mathcal{G}$.
\vspace{-0.1in}
\begin{enumerate}
\item Run the PTAS
 of $k$-means in \cite{KSS} on $\mathcal{G}$, and let $\Delta$ be the obtained objective value.

\item For $i=1$ to $\frac{2k}{\epsilon}$ do
\begin{enumerate}
\item Set $\delta=\frac{\sqrt{\Delta}}{2k}+i\frac{\epsilon}{2k}\sqrt{\Delta}$, and run the Sphere-Peeling-Tree algorithm.

\item  Let $\mathcal{T}_i$ be the output tree.

\end{enumerate}

\item For each path of every $\mathcal{T}_i$, use bipartite matching procedure to compute the objective value of $k$-CMeans on $\mathcal{G}$. Output the $k$ points from the path with the smallest objective value.

\end{enumerate}


\vspace{-0.05in}
\noindent\textbf{Algorithm Sphere-Peeling-Tree}
\newline \textbf{Input:} $\mathcal{G}$, $k\geq 2$, $\epsilon, \delta>0$.
\newline \textbf{Output:} A tree $\mathcal{T}$ of height $k$ with each node $v$ associating with a point $p_v \in \mathbb{R}^d$.
\vspace{-0.1in}
\begin{enumerate}
\item Initialize $\mathcal{T}$ with a single root node $v$ associating with no point.
\item Recursively grow each node $v$ in the following way
\begin{enumerate}
\item If the height of $v$ is already $k$, then it is a leaf.
\item Otherwise, let $j$ be the height of $v$. Build the radius candidates set $\mathcal{R}=\cup^{\log(kn)}_{t=0}\{\frac{1+l\frac{\epsilon}{2}}{2(1+\epsilon)}j2^{t/2}\sqrt{\epsilon}\delta\mid 0\le l\le 4+\frac{2}{\epsilon}\}$. For each $r\in\mathcal{R}$, do
\begin{enumerate}
\item   Let $\{p_{v_1}, \cdots, p_{v_j}\}$ be the $j$ points associated with nodes on the root-to-$v$ path. 

\item For each $p_{v_l}$, $1\leq l\leq j$, construct a ball $B_{j+1,l}$ centered at $p_{v_l}$ and with radius $r$. 
\item Take a random sample from $\mathcal{G}\setminus\cup^j_{l=1}B_{j+1,l}$ with size $m=\frac{8k^3}{\epsilon^9}\ln\frac{k^2}{\epsilon^6}$. Compute the mean points of all subset of the sample, and denote them as $\Pi=\{\pi_1, \cdots, \pi_{2^m-1}\}$.
\item For each $\pi_i \in \Pi$, construct the simplex determined by $\{p_{v_1}, \cdots, p_{v_j}, \pi_i\}$. Also construct the simplex determined by $\{p_{v_1}, \cdots, p_{v_j}\}$. Build a grid inside each simplex with size $O((\frac{32j}{\epsilon^2})^j)$. 

\item In total, there are $2^m (\frac{32j}{\epsilon^2})^j$ grid points inside the $2^m$ simplices. For each grid point, add one child to $v$, and associate it with the grid point.

\end{enumerate}

\end{enumerate}

\end{enumerate}

\vspace{-0.1in}
\begin{theorem}
\label{the-ptas}
With constant probability, Algorithm $k$-CMeans yields a $(1+\epsilon)$-approximation for $k$-CMeans in $O(2^{poly(\frac{k}{\epsilon})}n(\log n)^{k+1} d )$ time. 
\end{theorem}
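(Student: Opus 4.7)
The plan is to show that for a correct guess of the parameter $\delta$, the Sphere-Peeling-Tree contains a root-to-leaf path whose $k$ associated points approximate the optimal mean points $m_1,\ldots,m_k$ well enough that the bipartite matching applied to this path yields a $(1+\epsilon)$-approximate chromatic partition. The targeted invariant is $\|p_{v_l}-m_l\|^2\le\epsilon\delta_l^2$, where $\delta_l^2=\frac{1}{|Opt_l|}\sum_{p\in Opt_l}\|p-m_l\|^2$, because Lemma \ref{lem-meanshift} then gives $\sum_{p\in Opt_l}\|p-p_{v_l}\|^2\le(1+\epsilon)\sum_{p\in Opt_l}\|p-m_l\|^2$, and summing over $l$ produces the desired cost bound. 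To locate a correct guess for $\delta$, I would first combine Theorem \ref{the-constant} with the $k$-means PTAS of \cite{KSS} to sandwich $\delta_{opt}^2$ between $\Delta/O(k^2)$ and $\Delta$; the arithmetic grid of $2k/\epsilon$ values of $\delta$ in Step 2 of Algorithm $k$-CMeans then necessarily contains one $\delta^\star$ with $\delta^\star\in[(1-\epsilon/(2k))\delta_{opt},\delta_{opt}]$.

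The central inductive argument constructs the path level by level. Assuming $\|p_{v_l}-m_l\|\le O(\sqrt{\epsilon})\delta_l$ holds for all $l\le j$, I would focus on producing $p_{v_{j+1}}$ with the same property. Partition $Opt_{j+1}=P_1\cup\cdots\cup P_j\cup\mathcal{A}$, where $P_l=Opt_{j+1}\cap B_{j+1,l}$ and $\mathcal{A}$ collects the residual points lying outside every peeling ball, and let $o_l$ (resp.\ $o_{j+1}$) be the mean of $P_l$ (resp.\ $\mathcal{A}$). For $l\le j$, the containment $P_l\subseteq B_{j+1,l}$ immediately gives $\|p_{v_l}-o_l\|\le r$, so $p_{v_l}$ serves as an approximate center for $P_l$ inside Simplex Lemma \Rmnum{2}. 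For $\mathcal{A}$ I would split into two cases according to its size relative to $\mathcal{G}\setminus\bigcup_l B_{j+1,l}$: when $\mathcal{A}$ is a large enough fraction, Lemma \ref{lem-select} combined with Lemma \ref{lem-dis} guarantees, with constant probability, that one of the $2^m-1$ mean points $\pi_i$ of subsets of the random sample lies within $\sqrt{\epsilon}\delta_{j+1}$ of $o_{j+1}$; when $\mathcal{A}$ is a negligible fraction, its contribution to the variance is absorbed via the reduction trick used in the induction step of Lemma \ref{lem-simplex}, and one builds the simplex from $\{p_{v_1},\ldots,p_{v_j}\}$ alone. In either case, applying Simplex Lemma \Rmnum{2} to the partition of $Opt_{j+1}$ with these approximate centers produces a grid point $\tau$ satisfying $\|\tau-m_{j+1}\|\le\sqrt{\epsilon}\delta_{j+1}+(1+\epsilon)r$; choosing the correct $r\in\mathcal{R}$—one that is simultaneously comparable to $\sqrt{\epsilon}\delta_{j+1}$ and large enough to force $|\mathcal{A}|$ into the intended regime—keeps this error within $O(\sqrt{\epsilon})\delta_{j+1}$, closing the induction.

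The main obstacle is verifying that the candidate set $\mathcal{R}$ always contains a radius that balances these two conflicting requirements at every level of the recursion, because $\delta_{j+1}$ can deviate from the global average $\delta_{opt}$ by a factor as large as $\sqrt{kn}$; the geometric scale hidden in $\mathcal{R}$ (the $2^{t/2}$ factors for $t=0,\ldots,\log(kn)$) together with the finer arithmetic refinement (the $1+l\epsilon/2$ factors for $0\le l\le 4+2/\epsilon$) is exactly what makes a suitable $r$ available. Probabilistically, the sample size $m=(8k^3/\epsilon^9)\ln(k^2/\epsilon^6)$ is tuned so that Lemmas \ref{lem-dis} and \ref{lem-select} succeed at every level with enough slack that a union bound over the $k$ recursive levels leaves constant overall success probability. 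For the running time, the branching factor per node is $|\mathcal{R}|\cdot 2^m\cdot(32j/\epsilon^2)^j=2^{\mathrm{poly}(k/\epsilon)}\log n$, so the tree has at most $2^{\mathrm{poly}(k/\epsilon)}(\log n)^k$ leaves; combining this with the outer $2k/\epsilon$ guesses for $\delta$, the $O(k^3nd)$ bipartite matching per leaf, and the cost of the initial $k$-means PTAS yields the claimed $O(2^{\mathrm{poly}(k/\epsilon)}n(\log n)^{k+1}d)$ bound.
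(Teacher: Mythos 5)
Your overall architecture matches the paper's (guess $\delta$ on a grid calibrated by the constant approximation, peel spheres, split on whether the residual set $\mathcal{A}$ is a large or small fraction, invoke Simplex Lemma \Rmnum{2}, induct over levels, and account for the running time via the branching factor), but the inductive invariant you target is too strong and cannot be established. You want $\|p_{v_l}-m_l\|\le O(\sqrt{\epsilon})\,\delta_l$ at every level, with the radius $r$ ``comparable to $\sqrt{\epsilon}\delta_{j+1}$.'' These two requirements are incompatible with the peeling step. For the spheres $B_{j+1,l}$ to capture all but a $O(\beta_{j+1}|\mathcal{G}|/\epsilon)$-size remnant of each earlier cluster $Opt_l$ (which is what makes the sampling ratio for $\mathcal{A}$ work), Markov's inequality forces $r$ to exceed $\|p_{v_l}-m_l\|$ by roughly $\sqrt{\epsilon/\beta_{j+1}}\,\delta_{opt}\ge\sqrt{\epsilon}\,\delta_l$ for every $l\le j$. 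Simplex Lemma \Rmnum{2} then only yields $\|\tau-m_{j+1}\|\le\sqrt{\epsilon}\delta_{j+1}+(1+\epsilon)L$ with $L\approx j\sqrt{\epsilon/\beta_{j+1}}\,\delta_{opt}$, and this additive term is \emph{not} $O(\sqrt{\epsilon})\delta_{j+1}$ in general: $\delta_{j+1}$ can be arbitrarily small (even zero, for a tight cluster) while $\sqrt{\epsilon/\beta_{j+1}}\,\delta_{opt}$ is not. Shrinking $r$ to $O(\sqrt{\epsilon}\delta_{j+1})$ to rescue the invariant destroys Claim-2-type control of the residual points from earlier clusters, so the sample from $\mathcal{G}\setminus\bigcup_l B_{j+1,l}$ no longer contains enough points of $\mathcal{A}$.

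The missing idea is that the correct invariant is the weaker, two-term bound $\|p_{v_j}-m_j\|\le\epsilon\delta_j+(1+\epsilon)j\sqrt{\epsilon/\beta_j}\,\delta_{opt}$, and that the final accounting does not proceed cluster-by-cluster multiplicatively. Instead, when you apply Lemma \ref{lem-meanshift} and multiply the squared error by $|Opt_j|$, the second term contributes $|Opt_j|\cdot O(j^2)\frac{\epsilon}{\beta_j}\delta^2_{opt}=O(j^2)\epsilon\,|\mathcal{G}|\delta^2_{opt}$, i.e., a globally small \emph{additive} fraction of the optimal cost; summing over $j$ gives a $(1+O(k^3)\epsilon)$-approximation, and one recovers $(1+\epsilon)$ by running the algorithm with $\epsilon/k^3$. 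Without this reformulation your per-level induction does not close, and your final summation step (``summing over $l$ produces the desired cost bound'') is not justified. Two smaller issues: the correct guess of $\delta$ should over-approximate $\delta_{opt}$ (i.e., $\delta\in[\delta_{opt},(1+\epsilon)\delta_{opt}]$) so that the radius candidates dominate $j\sqrt{\epsilon/\beta_j}\delta_{opt}$ from above; and in the small-$\mathcal{A}$ case the paper does not simply discard $\mathcal{A}$ via the reduction trick of Lemma \ref{lem-simplex}, but merges it with the largest piece $P_{l_0}$ and bounds the resulting shift of that piece's mean (Claim 3) so that $p_{v_{l_0}}$ remains a valid approximate center with a slightly inflated $L$.
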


\vspace{-0.3in}
\subsection{Proof of Theorem \ref{the-ptas}}
\label{sec-proof}
\vspace{-0.07in}

Let  $\beta_j= |Opt_j|/|\cup_{i=1}^{n} G_{i}|$, 
and 
 $\delta^2_j=\frac{1}{|Opt_j |}\sum_{p\in Opt_j}||p-m_j||^2$, where $m_j$ is the mean point of $Opt_j$. 
 Clearly, $\beta_1\geq\cdots\geq\beta_k$ (by assumption) and $\sum^k_{j=1}\beta_j =1$. Let $\delta^2_{opt}=\sum^k_{j=1}\beta_j\delta^2_j$. 
 
We prove Theorem \ref{the-ptas} by mathematical induction.  Instead of directly proving it, we consider the following two lemmas which jointly ensure the correctness of Theorem \ref{the-ptas}.
\vspace{-0.05in}
\begin{lemma}
\label{lem-induction}
Among all the trees generated in Algorithm $k$-CMeans, with constant probability, there exists at least one tree, $\mathcal{T}_i$, which has a root-to-leaf path with each node $v_{j}$ at level $j$, $1\leq j\leq k$, on the path  associating a point $p_{v_j}$ and  satisfying the inequality
$||p_{v_j}-m_j ||\leq \epsilon\delta_j+(1+\epsilon)j\sqrt{\frac{\epsilon}{\beta_j}}\delta_{opt} .$
\end{lemma}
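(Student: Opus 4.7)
The plan is to prove the statement by induction on $j$, the depth in the tree $\mathcal{T}_i$, focusing on a single tree whose guessed scale $\delta = \sqrt{\Delta}/(2k)+i\epsilon\sqrt{\Delta}/(2k)$ lies in a window $\delta_{opt}/(1+\epsilon)\leq \delta\leq \delta_{opt}$. Theorem~\ref{the-constant} guarantees a constant-factor upper bound $\Delta\geq\delta^2_{opt}$, so one of the $O(k/\epsilon)$ choices for $i$ falls in this window; I condition on that choice throughout. Along this fixed $\mathcal{T}_i$, I will exhibit a root-to-leaf path by greedily choosing, at each level, the branch arising from the ``correct'' radius $r\in\mathcal{R}$, the ``correct'' enumerated subset-mean $\pi_i$, and the ``correct'' grid point.

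For the base case $j=1$, the assumed ordering forces $\beta_1\geq 1/k$, so Lemma~\ref{lem-select} (with $\alpha=\beta_1$, target size $t=\Theta(1/\epsilon^2)$, and failure probability $\eta=\Theta(1/k)$) ensures that the size-$m$ sample at the root contains at least $t$ points of $Opt_1$ with high probability. Since the algorithm enumerates all $2^m$ subset-means, one of the $\pi_i$ equals the mean of precisely those $t$ points, and by Lemma~\ref{lem-dis} it lies within $\sqrt{\epsilon}\,\delta_1$ of $m_1$. Using the relation $\beta_1\delta^2_1\leq\delta^2_{opt}$, this is at most $\sqrt{\epsilon/\beta_1}\,\delta_{opt}$, which is dominated by the claim's right-hand side at $j=1$.

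For the inductive step, assume $p_{v_1},\ldots,p_{v_j}$ satisfy the hypothesis. I would pick an $r\in\mathcal{R}$ approximating a target radius $r^\star$ of order $j\sqrt{\epsilon/\beta_{j+1}}\,\delta_{opt}$ and decompose $Opt_{j+1}=\bigcup_{l=1}^{j}(Opt_{j+1}\cap B_{j+1,l})\cup\mathcal{A}$ with $\mathcal{A}=Opt_{j+1}\setminus\bigcup_l B_{j+1,l}$. In the case $|\mathcal{A}|<(\epsilon/(4(j+1)))|Opt_{j+1}|$, I apply Simplex Lemma~II with the $j$ vertices $\{p_{v_l}\}_{l=1}^{j}$: the true mean of $Opt_{j+1}\cap B_{j+1,l}$ lies inside $B_{j+1,l}$ and is therefore within $r$ of $p_{v_l}$, while the mass lost to $\mathcal{A}$ shifts the overall mean by at most $\sqrt{\epsilon/(4(j+1))}\,\delta_{j+1}$ via Lemma~\ref{lem-close}. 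Otherwise $\mathcal{A}$ occupies at least an $\Omega(\beta_{j+1}\epsilon/(j+1))$ fraction of $\mathcal{G}\setminus\bigcup_l B_{j+1,l}$; Lemma~\ref{lem-select} then ensures the random sample of size $m$ contains $\Omega(1/\epsilon^2)$ points from $\mathcal{A}$ with high probability, one enumerated $\pi_i$ equals their mean (approximating $\overline{x}(\mathcal{A})$ within $\sqrt{\epsilon\,Var^0(\mathcal{A})}$ by Lemma~\ref{lem-dis}), and Simplex Lemma~II applied to $\{p_{v_1},\ldots,p_{v_j},\pi_i\}$ outputs the required grid point.

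The principal obstacle is the error bookkeeping. The displacement parameter $L$ in Simplex Lemma~II is at most $\max_l(||p_{v_l}-m_l||+r)\leq r+j\sqrt{\epsilon/\beta_j}\,\delta_{opt}$ by the induction hypothesis, and the extra vertex $\pi_i$ contributes $\sqrt{\epsilon\,Var^0(\mathcal{A})}\leq\sqrt{(j+1)\epsilon/\beta_{j+1}}\,\delta_{opt}$ using the lower bound on $|\mathcal{A}|$. These terms, together with the grid's intrinsic error $\sqrt{\epsilon}\,\delta_{j+1}$, must all be packed into $\epsilon\delta_{j+1}+(1+\epsilon)(j+1)\sqrt{\epsilon/\beta_{j+1}}\,\delta_{opt}$. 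This forces running Simplex Lemma~II with effective precision $\epsilon^2$ (explaining the $O((32j/\epsilon^2)^j)$ grid size) so the $\sqrt{\epsilon^2}\,\delta_{j+1}=\epsilon\delta_{j+1}$ term matches, and it dictates the multiplicative spacing of $\mathcal{R}$ so some $r$ falls within a $(1+\epsilon/2)$-factor of $r^\star$ across all $O(\log(kn))$ scales $2^{t/2}$. A union bound over the $O(k)$ sampling events (each with $\eta=O(1/k)$) and the $O(k)$ level-wise choices preserves a constant overall success probability.
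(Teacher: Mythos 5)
Your overall strategy matches the paper's: fix the tree whose guessed $\delta$ approximates $\delta_{opt}$, induct on the level, peel with a radius $r_j\approx j\sqrt{\epsilon/\beta_j}\,\delta_{opt}$ drawn from the multiplicative net $\mathcal{R}$, split into a small-$|\mathcal{A}|$ and a large-$|\mathcal{A}|$ case, and feed the resulting vertices into Simplex Lemma~\Rmnum{2} with precision $\epsilon_0=\Theta(\epsilon^2)$. Your handling of the small-$|\mathcal{A}|$ case (drop $\mathcal{A}$ and bound the induced mean shift by Lemma~\ref{lem-close}) is a legitimate variant of the paper's device of merging $P_j$ into the largest piece $P_{l_0}$ and bounding $||\tau_1-\tau'||$; the error budget closes either way.

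There is, however, one genuine gap, and it sits at the heart of why sphere peeling works. In the large-$|\mathcal{A}|$ case you assert that $\mathcal{A}$ occupies an $\Omega(\beta_{j+1}\epsilon/(j+1))$ fraction of $\mathcal{G}\setminus\bigcup_l B_{j+1,l}$ and then invoke Lemma~\ref{lem-select}. That density bound is what you get by comparing $|\mathcal{A}|$ against all of $|\mathcal{G}|$, and it is far too weak: $\beta_{j+1}$ can be as small as $1/(kn)$, in which case Lemma~\ref{lem-select} would require a sample of size $\Omega(n\cdot\mathrm{poly}(k/\epsilon))$, whereas the algorithm draws only $m=\frac{8k^3}{\epsilon^9}\ln\frac{k^2}{\epsilon^6}$ points and enumerates $2^m$ subsets. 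The denominator $|\mathcal{G}\setminus\bigcup_l B_{j+1,l}|$ must instead be controlled directly: one has to show that each \emph{already-peeled} cluster $Opt_l$ ($l\le j$) leaves only $O(\beta_{j+1}|\mathcal{G}|/\epsilon)$ points outside the peeling spheres. This is the paper's Claim~2, proved by applying Markov's inequality to $Opt_l\setminus B_{j+1,l}$, using the induction hypothesis $||p_{v_l}-m_l||\le\epsilon\delta_l+(1+\epsilon)l\sqrt{\epsilon/\beta_l}\,\delta_{opt}$ together with the lower end $r\ge (j+1)\sqrt{\epsilon/\beta_{j+1}}\,\delta_{opt}$ of the radius guarantee, so that $r-||p_{v_l}-m_l||=\Omega(\sqrt{\epsilon/\beta_{j+1}})\,\delta_{opt}$. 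Only with that bound does the density of $\mathcal{A}$ among the unpeeled points become $\Omega(\epsilon^4/k^2)$, independent of $\beta_{j+1}$ and $n$, which is what makes the fixed sample size $m$ (and hence the running time) work. Your proposal never establishes this, and without it the sampling step in the large-$|\mathcal{A}|$ case fails.
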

\vspace{-0.09in}
Before proving this lemma, we first show its implication. 
\vspace{-0.08in}
\begin{lemma}
\label{lem-equal}
If Lemma \ref{lem-induction} is true, Algorithm $k$-CMeans yields a $(1+O(k^3)\epsilon)$-approximation for $k$-CMeans.
\end{lemma}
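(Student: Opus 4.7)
\medskip
\noindent\textbf{Proof plan for Lemma \ref{lem-equal}.}
The plan is to fix the ``good'' root-to-leaf path guaranteed by Lemma \ref{lem-induction}, treat its associated points $p_{v_1},\dots,p_{v_k}$ as the $k$ candidate mean points output by the algorithm, and show that using these points already produces a $(1+O(k^3)\epsilon)$-approximation; the bipartite-matching step in the last phase of Algorithm $k$-CMeans can only improve the cost, so the same bound will hold for the algorithm's output.

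\medskip
\noindent First, for each optimal cluster $Opt_j$ with mean $m_j$, I would apply Lemma \ref{lem-meanshift} with $m'=p_{v_j}$ to get
$$\sum_{p\in Opt_j}\|p-p_{v_j}\|^2 \;=\; \sum_{p\in Opt_j}\|p-m_j\|^2 \;+\; |Opt_j|\cdot\|m_j-p_{v_j}\|^2 \;=\; |Opt_j|\,\delta_j^2 \;+\; |Opt_j|\cdot\|m_j-p_{v_j}\|^2.$$
Next, I would plug in the bound from Lemma \ref{lem-induction}, namely $\|p_{v_j}-m_j\|\le \epsilon\delta_j+(1+\epsilon)j\sqrt{\epsilon/\beta_j}\,\delta_{opt}$, and use $(a+b)^2\le 2(a^2+b^2)$ to obtain
$$|Opt_j|\cdot\|m_j-p_{v_j}\|^2 \;\le\; 2\epsilon^2|Opt_j|\delta_j^2 \;+\; 2(1+\epsilon)^2 j^2\,\frac{\epsilon}{\beta_j}|Opt_j|\,\delta_{opt}^2.$$

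\medskip
\noindent Then I would sum over $j=1,\dots,k$. Writing $N=|\bigcup_i G_i|$, the identity $\beta_j=|Opt_j|/N$ gives $|Opt_j|/\beta_j=N$, so the second term sums to $2(1+\epsilon)^2\epsilon\,\delta_{opt}^2\cdot N\sum_{j=1}^k j^2 = O(k^3)\,\epsilon\,N\delta_{opt}^2$, while the first terms satisfy $\sum_j|Opt_j|\delta_j^2 = N\delta_{opt}^2$ by definition of $\delta_{opt}^2$. Collecting everything yields
$$\sum_{j=1}^{k}\sum_{p\in Opt_j}\|p-p_{v_j}\|^2 \;\le\; N\delta_{opt}^2\bigl(1+2\epsilon^2+O(k^3)\epsilon\bigr) \;=\; \bigl(1+O(k^3)\epsilon\bigr)\,N\delta_{opt}^2.$$

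\medskip
\noindent Finally, I would argue that the algorithm's output on this path is no worse than the above quantity. Indeed, the chromatic partition $\{Opt_1,\dots,Opt_k\}$ is itself a feasible assignment of the input points to the centers $\{p_{v_1},\dots,p_{v_k}\}$, so the minimum-weight bipartite matching used in Step 3 of Algorithm $k$-CMeans returns a chromatic partition with total squared-distance cost at most $\sum_j\sum_{p\in Opt_j}\|p-p_{v_j}\|^2$. Dividing by $n$ and comparing with the optimum $\frac{1}{n}N\delta_{opt}^2$ then gives the claimed $(1+O(k^3)\epsilon)$-approximation. The only genuinely non-trivial ingredient is the observation that the per-cluster error bound of Lemma \ref{lem-induction}, despite the $1/\beta_j$ blow-up (which could be huge for small clusters), is perfectly tuned so that the weight $|Opt_j|$ in the mean-shift identity cancels the $1/\beta_j$ factor; everything else is routine.
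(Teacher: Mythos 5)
Your proposal is correct and follows essentially the same route as the paper: the mean-shift identity of Lemma \ref{lem-meanshift}, the bound $(a+b)^2\le 2(a^2+b^2)$ applied to the estimate from Lemma \ref{lem-induction}, the cancellation $|Opt_j|/\beta_j=|\mathcal{G}|$, and summation over $j$ with $\sum_j j^2=O(k^3)$. Your closing observation that the optimal chromatic partition is a feasible matching for the candidate centers, so the bipartite-matching step can only do better, is exactly the (implicit) final step of the paper's argument.
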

\begin{proof}
We first assume that Lemma \ref{lem-induction} is true. Then for each $1\leq j\leq k$, we have
\small{
\begin{eqnarray}
\sum_{p\in Opt_j}||p-p_{v_j}||^2&=&\sum_{p\in Opt_j}||p-m_j||^2+|Opt_j|\times||m_j-p_{v_j}||^2 
\leq \sum_{p\in Opt_j}||p-m_j||^2+|Opt_j|\times2(\epsilon^2\delta^2_j+(1+\epsilon)^2 j^2\frac{\epsilon}{\beta_j}\delta^2_{opt}) \nonumber\\
&=&(1+2\epsilon^2)|Opt_j |\delta^2_j+2(1+\epsilon)^2 j^2\epsilon|\mathcal{G}|\delta^2_{opt}, \label{for-10}
\end{eqnarray}
}
\normalsize
where the first equation follows from Lemma \ref{lem-meanshift} (note that $m_j$ is the mean point of $Opt_j$), the second inequality follows from Lemma \ref{lem-induction} and the fact that $(a+b)^2\leq 2(a^2+b^2)$ for any two real numbers $a$ and $b$, and the last equality follows from $\frac{|Opt_j|}{\beta_j}=|\mathcal{G}|$. Summing both sides of (\ref{for-10}) over $j$, we have
\small{
\begin{eqnarray}
\sum^k_{j=1}\sum_{p\in Opt_j}||p-p_{v_j}||^2 &\leq &\sum^k_{j=1}((1+2\epsilon^2)|Opt_j |\delta^2_j+2(1+\epsilon)^2 j^2\epsilon|\mathcal{G}|\delta^2_{opt})\nonumber\\
&\leq& (1+2\epsilon^2)\sum^k_{j=1}|Opt_j |\delta^2_j+2(1+\epsilon)^2 k^3\epsilon|\mathcal{G}|\delta^2_{opt}=(1+O(k^3)\epsilon)|\mathcal{G}|\delta^2_{opt}, \label{for-11}
\end{eqnarray}
}
\normalsize
where the last equation follows from the fact that $\sum^k_{j=1}|Opt_j |\delta^2_j=|\mathcal{G}|\delta^2_{opt}$. By (\ref{for-11}), we know that $\{p_{v_1}, \cdots, p_{v_k}\}$ will induce a $(1+O(k^3)\epsilon)$-approximation solution for $k$-CMeans via bipartite matching procedure. Since Algorithm $k$-CMeans outputs the best solution generated in all trees, the resulting solution is clearly a  $(1+O(k^3)\epsilon)$-approximation solution. Thus the lemma is true.
\qed
\end{proof}
\vspace{-0.06in}
The above lemma indicates that  if we replace $\epsilon$ by $\frac{\epsilon}{k^3}$  in the input of  our algorithm, it will result in a $(1+\epsilon)$-approximation solution. This implies that Lemma \ref{lem-induction} is indeed sufficient to ensure the correctness of Theorem \ref{the-ptas} (except for the time complexity). Now we prove Lemma \ref{lem-induction}.




\vspace{-0.08in}
\begin{proof}[\textbf{of Lemma \ref{lem-induction}}]
Note that $\Delta\le4k^2\delta^2_{opt}$, and we build $\epsilon$-net in $[\frac{\sqrt{\Delta}}{2k},\sqrt{\Delta}]$. Let $\mathcal{T}_i$ be the tree generated by Algorithm Sphere-Peeling-Tree and corresponding to the input $\delta\in [\delta_{opt}, (1+\epsilon)\delta_{opt}]$.  We will focus our discussion on $\mathcal{T}_{i}$, and prove the lemma by mathematical induction on $j$.

\noindent\textbf{Base case:} For $j=1$, since $\beta_1=\max\{\beta_j |1\leq j\leq k\}$, we have $\beta_1\geq\frac{1}{k}$. By Lemmas \ref{lem-dis} and \ref{lem-select}, we can find the approximation mean point through random sampling. Let $p_{v_{1}}$ be the approximation mean point.  Clearly, $||p_{v_1}-m_1||\leq \epsilon\delta_1\leq \epsilon\delta_1+(1+\epsilon)\sqrt{\frac{\epsilon}{\beta_1}}\delta_{opt}$ (By Lemmas \ref{lem-dis} and \ref{lem-select}).

\noindent\textbf{Induction step:} We assume that there is a path in $\mathcal{T}_i$ from the root to the $j_0$-th level, such that for each $1\leq l\leq j_0$, the level-$l$ node $v_{l}$ on the path is associated with a point $p_{v_{l}}$ satisfying the inequality $||p_{v_l}-m_l ||\leq \epsilon\delta_l+(1+\epsilon)l\sqrt{\frac{\epsilon}{\beta_l}}\delta_{opt} $ (i.e., Induction Hypothesis). Now we consider  the case of  $j=j_0+1$.  Below we will show that there is one child of $v_{j-1}$,  i.e., $v_{j}$, such that  its associated point $p_{v_{l}}$ satisfies the inequality $||p_{v_j}-m_j ||\leq \epsilon\delta_j+(1+\epsilon)j\sqrt{\frac{\epsilon}{\beta_j}}\delta_{opt} $. First, we have the following claim (see Appendix for the proof).

\vspace{-0.1in}
\begin{claim}[\textbf{1}]
In the set of radius candidates built in  Algorithm Sphere-Peeling-Tree, there exists one value $r_j\in \mathcal{R}$ such that
\small{
\vspace{-0.08in}
$$j\sqrt{\frac{\epsilon}{\beta_j}}\delta_{opt}\leq r_j\leq (1+\frac{\epsilon}{2})j\sqrt{\frac{\epsilon}{\beta_j}}\delta_{opt}.$$\vspace{-0.08in}
}\normalsize
\end{claim}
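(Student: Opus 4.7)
The plan is to regard the set $\mathcal{R}$ as a logarithmic/arithmetic ``hybrid'' grid over candidate radii, and to show that its granularity is fine enough, and its span wide enough, to pinpoint any target value in $[\,j\sqrt{\epsilon/\beta_j}\,\delta_{opt},\ (1+\epsilon/2)\,j\sqrt{\epsilon/\beta_j}\,\delta_{opt}]$. First I would pull the common factor $j\sqrt{\epsilon}$ out of every side of the desired inequality, so the claim reduces to: find $t\in\{0,\ldots,\log(kn)\}$ and an integer $l\in\{0,\ldots,4+\tfrac{2}{\epsilon}\}$ such that
\[
   \sqrt{1/\beta_j}\,\delta_{opt}\;\le\;\frac{1+l\,\epsilon/2}{2(1+\epsilon)}\,2^{t/2}\,\delta\;\le\;(1+\tfrac{\epsilon}{2})\sqrt{1/\beta_j}\,\delta_{opt}.
\]
Writing $X:=\sqrt{1/\beta_j}\,\delta_{opt}$ and $Y_t:=\frac{2^{t/2}\,\delta}{2(1+\epsilon)}$, the target becomes: find $t,l$ with $(1+l\epsilon/2)\,Y_t \in [X,(1+\epsilon/2)X]$.

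Next I would choose $t$ to be the unique nonnegative integer with $Y_t\le X<Y_{t+1}$, i.e.\ the largest $t$ for which $2^{t/2}\delta \le 2(1+\epsilon)X$. Because $\delta\in[\delta_{opt},(1+\epsilon)\delta_{opt}]$ we have $X\ge \delta_{opt}\ge \delta/(1+\epsilon)\ge Y_0$, so such a nonnegative $t$ exists; and because $\beta_j\ge 1/|\mathcal{G}|\ge 1/(kn)$ (each nonempty optimal cluster contains at least one point), $X\le \sqrt{kn}\,\delta_{opt}\le \sqrt{kn}\,\delta$, which keeps $t$ inside the range $[0,\log(kn)]$. By construction $1\le X/Y_t<\sqrt{2}$, so the requirement on $l$ becomes $l\,\epsilon/2\in [X/Y_t-1,\ (1+\epsilon/2)X/Y_t-1]$; this interval has length $(X/Y_t)\cdot(\epsilon/2)\ge \epsilon/2$, so when scaled by $2/\epsilon$ it has length $\ge 1$ and therefore contains an integer. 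That integer lies in $[0,4+\tfrac{2}{\epsilon}]$, since the lower endpoint is $\ge 0$ and the upper endpoint is at most $2(\sqrt{2}-1)/\epsilon+\sqrt{2}<2/\epsilon+4$.

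The main obstacle is not conceptual but bookkeeping: I have to verify that each of the three index ranges ($t\in[0,\log(kn)]$, $l\in[0,4+2/\epsilon]$, and the containment $(1+l\epsilon/2)Y_t\in[X,(1+\epsilon/2)X]$) is simultaneously attainable, and in particular that the two extreme cases---$\beta_j$ as small as $1/(kn)$ and $\delta$ as large as $(1+\epsilon)\delta_{opt}$---do not push $t$ past $\log(kn)$ or $l$ past $4+2/\epsilon$. Once these endpoint verifications are in place the argument collapses to the single observation that within a fixed $t$-level, consecutive candidates differ by a factor of at most $1+\epsilon/2$ (the worst case being $l=0\to l=1$), which is exactly the granularity needed to squeeze a candidate into the target window of relative width $\epsilon/2$. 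Assembling the choice of $t$ and $l$ then yields the required $r_j\in\mathcal{R}$.
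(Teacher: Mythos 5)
Your overall strategy coincides with the paper's: view $\mathcal{R}$ as a two-level grid, use the geometric index $t$ to get within a constant multiplicative factor of the target $X=\sqrt{1/\beta_j}\,\delta_{opt}$ (after dividing out $j\sqrt{\epsilon}$), and use the arithmetic index $l$, whose consecutive candidates differ by a multiplicative factor at most $1+\epsilon/2$, to land a point inside the window $[X,(1+\epsilon/2)X]$. The one step that fails is precisely the endpoint check you flagged as the main obstacle: the assertion that $X\le\sqrt{kn}\,\delta$ ``keeps $t$ inside the range $[0,\log(kn)]$.'' With your normalization $Y_t=\frac{2^{t/2}\delta}{2(1+\epsilon)}$, the largest $t$ with $Y_t\le X$ only satisfies $2^{t/2}\le 2(1+\epsilon)X/\delta\le 2(1+\epsilon)\sqrt{kn}$, i.e.\ $t\le\log(kn)+2\log_2(2(1+\epsilon))$, which can overshoot $\log(kn)$ by roughly $2$ to $4$. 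For instance, $\beta_j=1/(kn)$ and $\delta=\delta_{opt}$ force your $t$ to be $\log(kn)+2$, and the candidates $(1+l\epsilon/2)Y_t$ at that level are simply not members of $\mathcal{R}$, so the argument as written produces no valid $r_j$ in this case.

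The repair is exactly what the paper does: pick $t$ from $\beta_j$ alone, via $2^{t-1}\le 1/\beta_j\le 2^{t}$, so $t$ stays in $[0,\log(kn)]$ automatically; the price is that you only know $X/Y_t\in[1,\,2(1+\epsilon)]$ rather than $[1,\sqrt{2})$. But the $l$-grid at a fixed level spans that entire factor --- $\frac{1+l\epsilon/2}{2(1+\epsilon)}$ runs from $\frac{1}{2(1+\epsilon)}$ up to $1$ as $l$ ranges over $\{0,\dots,4+\frac{2}{\epsilon}\}$ --- with consecutive candidates still within a factor $1+\epsilon/2$ of each other, so the window of relative width $\epsilon/2$ must catch a grid point. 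In other words, your closing observation about the within-level granularity is correct; you just need the $l$-range, rather than the $t$-range, to absorb the combined slack coming from rounding $1/\beta_j$ to a power of two and from $\delta\in[\delta_{opt},(1+\epsilon)\delta_{opt}]$.
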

\vspace{-0.1in}

  
Now, we construct the $j-1$ peeling spheres, $\{B_{j,1}, \cdots, B_{j,j-1}\}$ (as in Algorithm Sphere-Peeling-Tree). For each $1\leq l\leq j-1$, $B_{j,l}$ is centered at $p_{v_l}$ and with radius $r_j$. By Markov inequality and  induction hypothesis, we have the following claim (see Appendix for the proof).
\vspace{-0.05in}
\begin{claim}[\textbf{2}]
For each $1\leq l\leq j-1$, we have $|Opt_l \setminus (\bigcup^{j-1}_{w=1}B_{j,w})|\le \frac{4\beta_j |\mathcal{G}|}{\epsilon}$.

\end{claim}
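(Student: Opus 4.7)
The plan is to apply Markov's inequality to the squared distances $\|p - p_{v_l}\|^{2}$ over $p \in Opt_l$. First observe that since $B_{j,l}$ itself lies in the union $\bigcup_{w=1}^{j-1} B_{j,w}$, we have $|Opt_l \setminus \bigcup_{w=1}^{j-1} B_{j,w}| \le |Opt_l \setminus B_{j,l}|$, so it suffices to count points in $Opt_l$ lying at distance greater than $r_j$ from their approximate mean $p_{v_l}$. Markov's inequality then yields
$$|Opt_l \setminus B_{j,l}| \;\le\; \frac{1}{r_j^{2}}\sum_{p\in Opt_l}\|p - p_{v_l}\|^{2}.$$

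Next I would expand the numerator via Lemma~\ref{lem-meanshift} (applied with $m' = p_{v_l}$ and true mean $m_l$), which gives $\sum_{p\in Opt_l}\|p-p_{v_l}\|^{2} = |Opt_l|\delta_l^{2} + |Opt_l|\,\|m_l - p_{v_l}\|^{2}$. I would then substitute the induction hypothesis $\|p_{v_l}-m_l\| \le \epsilon\delta_l + (1+\epsilon)l\sqrt{\epsilon/\beta_l}\,\delta_{opt}$, square using $(a+b)^{2}\le 2(a^{2}+b^{2})$, and multiply through by $|Opt_l| = \beta_l|\mathcal{G}|$. The crucial cancellation is $\beta_l \cdot (\epsilon/\beta_l) = \epsilon$, so the second summand contributes at most $2(1+\epsilon)^{2}l^{2}\epsilon\,|\mathcal{G}|\delta_{opt}^{2}$, independent of both $\beta_l$ and $\delta_l$. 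For the remaining $\beta_l\delta_l^{2}$ factors I would use the elementary identity $\beta_l\delta_l^{2}\le \sum_{i=1}^{k}\beta_i\delta_i^{2} = \delta_{opt}^{2}$, giving an upper bound on the entire numerator of the form $\bigl(1+2\epsilon^{2}+2(1+\epsilon)^{2}l^{2}\epsilon\bigr)|\mathcal{G}|\delta_{opt}^{2}$.

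Finally I would invoke the lower bound $r_j^{2}\ge j^{2}\epsilon\delta_{opt}^{2}/\beta_j$ furnished by Claim~1. Dividing the bounded numerator by this quantity, the $\delta_{opt}^{2}$ factor cancels and I obtain
$$|Opt_l \setminus B_{j,l}| \;\le\; \frac{\beta_j|\mathcal{G}|}{j^{2}\epsilon}\bigl(1+2\epsilon^{2}+2(1+\epsilon)^{2}l^{2}\epsilon\bigr).$$
Since $l\le j-1$ so $l^{2}/j^{2}\le 1$, and $\epsilon\le 1$, the bracketed expression is a bounded constant, and a routine consolidation yields the claimed $4\beta_j|\mathcal{G}|/\epsilon$ bound.

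The main obstacle is not any single estimate but the interlocking bookkeeping between four scales: $\delta_l$, $\delta_{opt}$, $\beta_l$, and $\beta_j$. The induction hypothesis is carefully calibrated so that the $1/\sqrt{\beta_l}$ factor inside $\|m_l-p_{v_l}\|$ turns into exactly the right power of $\beta_l$ when multiplied by $|Opt_l|$, making the error independent of how small $\beta_l$ may be; meanwhile, the factor $j^{2}/\beta_j$ inside $r_j^{2}$ exactly absorbs the $l^{2}$ from the squared induction hypothesis (using $l<j$). Verifying that these exponents line up is the whole content of the claim.
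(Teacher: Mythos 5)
Your argument is correct, but it routes the error through a different place than the paper does. The paper applies Markov's inequality to the distances measured from the \emph{true} mean $m_l$ and then absorbs the displacement of the center into the radius: a point outside $B_{j,l}$ has $\|p-m_l\| > r_j - \|p_{v_l}-m_l\|$, so the bound becomes $\delta_l^2|Opt_l|/(r_j-\|p_{v_l}-m_l\|)^2$, and the work goes into showing $r_j-\|p_{v_l}-m_l\|\ge(1-(j-1)\epsilon-\sqrt{\epsilon})\sqrt{\epsilon/\beta_j}\,\delta_{opt}$. You instead apply Markov directly to $\|p-p_{v_l}\|^2$ and absorb the displacement into the second moment via Lemma~\ref{lem-meanshift}, so the error appears additively in the numerator rather than subtractively in the denominator. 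Both are valid and exploit the same two cancellations ($\beta_l\delta_l^2\le\delta_{opt}^2$ and $\beta_l\cdot(\epsilon/\beta_l)=\epsilon$, plus $r_j^2\ge j^2\epsilon\delta_{opt}^2/\beta_j$ from Claim~1); your version has the minor advantage of never needing to check that the shifted radius stays positive, which the paper's denominator manipulation implicitly requires.

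One small caveat on your final consolidation: with only $\epsilon\le 1$ the bracket $1+2\epsilon^2+2(1+\epsilon)^2l^2\epsilon$ is not always at most $4j^2$ (e.g., $j=4$, $l=3$, $\epsilon=1$ gives $75>64$). You need $\epsilon$ small, but this is exactly the standing assumption the paper itself invokes in its proof of Claim~2 (``we can assume $\epsilon\le\frac{1}{4j^2}$, otherwise replace $\epsilon$ by $\frac{\epsilon}{4j^2}$ at the start''), under which your bracket is comfortably below $4j^2$. So this is not a gap, just a condition worth stating explicitly.
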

\vspace{-0.05in}


Claim \textbf{2} shows that $|Opt_l \setminus (\bigcup^{j-1}_{w=1}B_{j,w})|$ is bounded for $1\leq l\leq j-1$, which helps us to find the approximate mean point of $Opt_j $. Induced by the $j-1$ peeling spheres $\{B_{j,1}, \cdots, B_{j,j-1}\}$, $Opt_j$ is divided into $j$ subsets, $Opt_j\cap B_{j,1}$, $\cdots$, $Opt_j\cap B_{j,j-1}$ and $Opt_j \setminus(\bigcup^{j-1}_{w=1}B_{j,w})$. To simplify our discussion, we let $P_l$ denote $Opt_j\cap B_{j,l}$ for $1\leq l\leq j-1$,  $P_j$ denote $Opt_j \setminus(\bigcup^{j-1}_{w=1}B_{j,w})$, and $\tau_{l}$ denote the mean point of $P_l$. Note that the peeling spheres may intersect with each other. For any two intersecting spheres $B_{j,l_1}$ and $B_{j,l_2}$, we let the points set $Opt_j\cap (B_{j,l_1}\cap B_{j,l_2})$ belong to either $P_{l_1}$ or $P_{l_2}$ arbitrarily. Thus, we can assume that $\{P_l\mid 1\leq l\leq j\}$ are pairwise disjoint. Now consider the size of $P_{j}$ (i.e., $|P_j|$). We have the following two cases: (a) $|P_j |\geq \epsilon^3\frac{\beta_{j}}{j}|\mathcal{G}|$ and (b) $|P_j |<\epsilon^3\frac{\beta_{j}}{j}|\mathcal{G}|$. In the following, we show how,  in each case, Algorithm Sphere-Peeling-Tree can obtain an approximate mean point for $Opt_{j}$ by using the Simplex Lemma (i.e., Lemma \ref{lem-shift}).
%
%
  
 \vspace{-0.05in}  
  
 \begin{figure}[]
\vspace{-0.25in}
\begin{minipage}[t]{0.5\linewidth}
  \centering
  \includegraphics[height=1.3in]{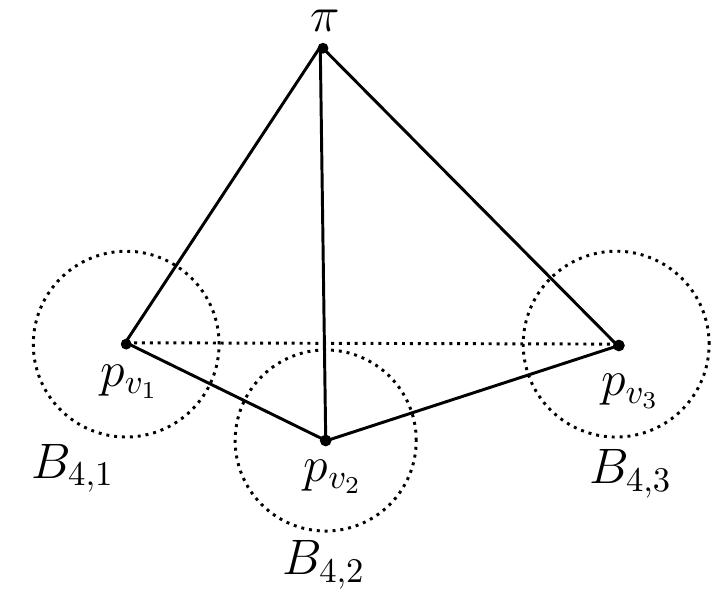}
  \vspace{-0.12in}
     \caption{Case (a) for $j=4$.}
  \label{fig-case1}
\end{minipage}
\begin{minipage}[t]{0.5\linewidth}
\centering
  \includegraphics[height=1in]{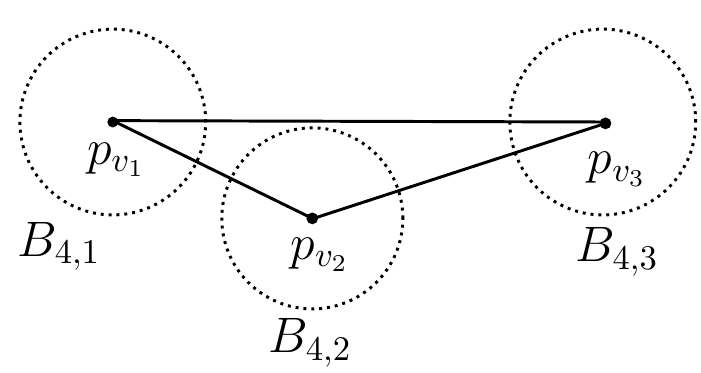}
  \vspace{-0.12in}
   \caption{Case (b) for $j=4$.}
  \label{fig-case2}
  \end{minipage}
  \vspace{-0.2in}
\end{figure}
 \vspace{-0.05in}  
  For case (a),   by Claim \textbf{2}, together with the fact that $\beta_l\leq \beta_{j}$ for $l>j$, we know that 
\small{
\vspace{-0.07in} 
$$\frac{|P_j |}{\sum_{1\leq i\leq k}|Opt_{i}\setminus(\bigcup^{j-1}_{l=1}B_{j,l})|}\geq\frac{\frac{\epsilon^3}{j}\beta_j}{\frac{4(j-1)\beta_j}{\epsilon}+\frac{\epsilon^2}{j}\beta_j+(k-j)\beta_j}>\frac{\epsilon^4}{8kj}\ge\frac{\epsilon^4}{8k^2}.$$
\vspace{-0.03in}
}\normalsize 
This means that $P_{j}$ is large enough, comparing to the set of points outside the peeling spheres. Hence, we can use random sampling technique to obtain an approximate mean point $\pi $ for $P_j$ in the following way. First, we set $t=\frac{k}{\epsilon^5}$, $\eta=\frac{\epsilon}{k}$, and take a sample of size $\frac{t\ln(t/\eta)}{\epsilon^4 /8k^2}=\frac{8k^3}{\epsilon^9}\ln\frac{k^2}{\epsilon^6}$.  By Lemma \ref{lem-select}, we know that  with probability $1-\frac{\epsilon}{k}$, the sample contains $\frac{k}{\epsilon^5}$ points from $P_j$.  Then we let $\pi$ be the mean point of the $\frac{k}{\epsilon^5}$ points from $P_j$, and $a^{2}$ be the variance of $P_j$. By Lemma \ref{lem-dis}, we know that  with probability $1-\frac{\epsilon}{k}$, $||\pi-\tau_j||^2\leq \epsilon^4 a^2$. Also, since $\frac{|P_j|}{|Opt_j|}\geq\frac{\epsilon^3}{j}$, we have $a^2\le\frac{|Opt_j|}{|P_j|}\delta^2_j\leq\frac{j}{\epsilon^3}\delta^2_j$. Thus, $||\pi-\tau_j||^2\leq \epsilon j\delta^2_j$.

Once obtaining $\pi$,  we can now use Lemma \ref{lem-shift} to find a point $p_{v_{j}}$ satisfying the condition of $||p_{v_{j}}-m_j||\leq \epsilon\delta_j+(1+\epsilon)j\sqrt{\frac{\epsilon}{\beta_j}}\delta_{opt}$. First, we construct a simplex $V'_{(a)}$ determined by $\{p_{v_1}, \cdots, p_{v_{j-1}}\}$ and $\pi$ (see Figure. \ref{fig-case1}).  Note that $Opt_j$ is divided by  the peeling spheres into $j$ disjoint subsets, $P_1, \cdots, P_j$, which is a partition of $Opt_{j}$. Each $P_l$ ($1\le l\le j-1$) locates inside $B_{j,l}$, which implies that $\tau_l$ is also inside $B_{j,l}$. Further,  since $||p_{v_l}-\tau_l||\leq r_j\leq (1+\frac{\epsilon}{2})j\sqrt{\frac{\epsilon}{\beta_j}}\delta_{opt}$ for $1\leq l\leq j-1$ (by Claim \textbf{1}), and $||\pi-\tau_j||\leq \sqrt{\epsilon j}\delta_j\leq \sqrt{\frac{\epsilon j}{\beta_j}}\delta_{opt}$ (by $\beta_j\delta^2_j\le\delta^2_{opt}$, which implies $\delta_j\le\sqrt{1/\beta_j}\delta_{opt}$), after setting the value of $L$ (in Lemma \ref{lem-shift}) to be $\max\{r_j,||\pi-\tau_j|| \}\le\max\{(1+\frac{\epsilon}{2})j\sqrt{\frac{\epsilon}{\beta_j}}\delta_{opt}, \sqrt{\frac{\epsilon j}{\beta_j}}\delta_{opt}\}\le(1+\frac{\epsilon}{2})j\sqrt{\frac{\epsilon}{\beta_j}}\delta_{opt}$ and the value of $\epsilon$ (in Lemma \ref{lem-shift}) to be $\epsilon_0=\epsilon^2/4$, by Lemma \ref{lem-shift} we can construct a grid inside the simplex $V'_{(a)}$ with size $O((\frac{8j}{\epsilon_0})^j)$ which ensures the existence of one grid point $\tau$ satisfying the inequality of $||\tau-m_j||\leq \sqrt{\epsilon_0}\delta_j+(1+\epsilon_0)L\leq\epsilon\delta_j+(1+\epsilon)j\sqrt{\frac{\epsilon}{\beta_j}}\delta_{opt}$. Hence, we can use $\tau$ as $p_{v_j}$, and the induction step  holds for this case.



For case (b), since $P_{j}$ has a small size, we cannot directly perform  random sampling on it to find its approximate mean point. To overcome this difficulty,  we merge $P_{j}$ with some other large subset $P_{l}$. 
Particularly, since $\sum^{j-1}_{l=1}|P_l |= |Opt_j |-|P_j|\geq (\beta_{j}-\epsilon^3\frac{\beta_{j}}{j})|\mathcal{G}|$, by pigeonhole principle, we know that there exists one $l_0$ such that $P_{l_0}$ has size at least $\frac{1}{j-1}(\beta_{j}-\epsilon^3\frac{\beta_{j}}{j})|\mathcal{G}|$. Without loss of generality, we
assume $l_0=1$. Then $|P_1 |\geq
\frac{1}{j-1}(\beta_{j}-\epsilon^3\frac{\beta_{j}}{j})|\mathcal{G}|$, and we can view $P_1\cup P_j$ as one large enough subset of $Opt_j$. Let $\tau'$ denote the mean point of $P_1\cup P_j$, then we have the following claim (see Appendix for the proof).

\vspace{-0.05in}
\begin{claim}[\textbf{3}]
$||\tau_1-\tau'||\le\frac{\sqrt{2}\epsilon}{1-\epsilon^3}\sqrt{\frac{j\epsilon}{\beta_j}}\delta_{opt}$.
\end{claim}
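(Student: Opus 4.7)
The plan is to apply Lemma~\ref{lem-close} to the set $P := P_1 \cup P_j$ with distinguished subset $P_1$. In the notation of that lemma, $\tau'$ is the mean of $P$ and $\tau_1$ is the mean of $P_1$, so with $\alpha = |P_1|/(|P_1|+|P_j|)$ and $\delta_P^2 = \frac{1}{|P|}\sum_{p\in P}||p-\tau'||^2$, the lemma directly yields
\[
||\tau_1 - \tau'|| \leq \sqrt{(1-\alpha)/\alpha}\,\delta_P = \sqrt{|P_j|/|P_1|}\,\delta_P.
\]
The task thus reduces to controlling the two factors $\sqrt{|P_j|/|P_1|}$ and $\delta_P$ separately and then combining them.

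First I would bound $|P_j|/|P_1|$. The case (b) hypothesis gives $|P_j| < \epsilon^3 \beta_j|\mathcal{G}|/j$, while the pigeonhole step just preceding the claim gives $|P_1| \geq \beta_j(1-\epsilon^3/j)|\mathcal{G}|/(j-1)$. Dividing these two estimates produces $|P_j|/|P_1| \leq \epsilon^3(j-1)/(j-\epsilon^3)$.

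Next I would bound $\delta_P$. Because $\tau'$ is the mean of $P$, it minimizes the sum of squared distances from points of $P$ to any fixed center, so $\sum_{p\in P}||p-\tau'||^2 \leq \sum_{p\in P}||p-m_j||^2 \leq \sum_{p\in Opt_j}||p-m_j||^2 = \beta_j|\mathcal{G}|\delta_j^2$. Dividing by $|P| \geq |P_1|$, using the same lower bound on $|P_1|$, and finally applying $\delta_j^2 \leq \delta_{opt}^2/\beta_j$ (which follows from $\beta_j\delta_j^2 \le \delta_{opt}^2$) yields $\delta_P^2 \leq j(j-1)\delta_{opt}^2/((j-\epsilon^3)\beta_j)$.

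Multiplying the two factors produces $||\tau_1-\tau'||^2 \leq \epsilon^3 j(j-1)^2 \delta_{opt}^2/((j-\epsilon^3)^2\beta_j)$. Since the statement to be proved rewrites as $||\tau_1-\tau'|| \leq \sqrt{2}\epsilon^{3/2}\sqrt{j}\delta_{opt}/((1-\epsilon^3)\sqrt{\beta_j})$, the final step is the elementary algebraic check $(j-1)(1-\epsilon^3) \leq \sqrt{2}(j-\epsilon^3)$ for $j \geq 2$ and $\epsilon \in (0,1]$, which follows immediately from $j-1 \leq j-\epsilon^3$ (as $\epsilon^3 \leq 1$) and $1-\epsilon^3 \leq \sqrt{2}$. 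The main obstacle is really just the bookkeeping across the chain of size inequalities; the substantive input is Lemma~\ref{lem-close}, applied to the cleverly chosen pair $(P_1 \cup P_j, P_1)$, after which everything is routine substitution of the size bounds supplied by case (b) and the pigeonhole choice of $l_0$.
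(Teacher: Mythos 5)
Your proposal is correct and follows essentially the same route as the paper: both arguments apply Lemma~\ref{lem-close} to the pair $(P_1\cup P_j, P_1)$, bound the size ratio via the case~(b) hypothesis and the pigeonhole lower bound on $|P_1|$, control the variance of $P_1\cup P_j$ by $\frac{|Opt_j|}{|P_1\cup P_j|}\delta_j^2$, and finish with $\delta_j\le\delta_{opt}/\sqrt{\beta_j}$. The only difference is cosmetic bookkeeping (you carry $|P_j|/|P_1|$ explicitly and defer the slack to a final algebraic check, whereas the paper first rounds $\alpha$ down to $\frac{1-\epsilon^3}{1+\epsilon^3}$), and your constants check out.
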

\vspace{-0.05in}

This means that we can also use Lemma \ref{lem-shift}  to find an approximate mean point in a way similar to case (a) (see Figure. \ref{fig-case2}); the difference is that $Opt_j$ is divided into $j-1$ subsets (i.e., $P_1$ and $P_j$ is viewed as one subset $P_1\cup P_j$) and the value of $L$ is set to be $r_j+||\tau_1-\tau'||\le r_j+\frac{\sqrt{2}\epsilon}{1-\epsilon^3}\sqrt{\frac{j\epsilon}{\beta_j}}\delta_{opt}$. We can first construct a simplex $V'_{(b)}$ determined by $\{p_{v_1}, \cdots, p_{v_{j-1}}\}$ (see Figure. \ref{fig-case2}), and then build a grid inside $V'_{(b)}$ with size $O((\frac{8j}{\epsilon_0})^j)$, where  $\epsilon_0=\epsilon^2/4$.  
By Lemma \ref{lem-shift}, we know that there exists one grid point $\tau$ satisfying the condition of $||\tau-m_j||\leq \sqrt{\epsilon_0}\delta_j+(1+\epsilon_0)L\leq\epsilon\delta_j+(1+\epsilon)j\sqrt{\frac{\epsilon}{\beta_j}}\delta_{opt}$. Thus  the induction step holds for this case.

Since Algorithm Sphere-Peeling-Tree executes every step in our above discussion, the induction step, as well as the lemma, is true.
\qed
\end{proof}
\vspace{-0.06in}
\noindent\textbf{Success probability:} From the above analysis, we know that in the $j$-th step/iteration, only  case (a) (i.e., $|P_j |\geq \epsilon^3\frac{\beta_{j}}{j}|\mathcal{G}|$) needs to consider success probability, since case (b) (i.e., $|P_j |< \epsilon^3\frac{\beta_{j}}{j}|\mathcal{G}|$) does not need to do sampling. Recall that in case (a), we take a sample of size $\frac{8k^3}{\epsilon^9}\ln\frac{k^2}{\epsilon^6}$. Thus  with probability $1-\frac{\epsilon}{k}$, it contains $\frac{k}{\epsilon^5}$ points from $P_j$. Meanwhile, with probability $1-\frac{\epsilon}{k}$, $||\pi-\tau_j||^2\leq \epsilon^4 a^2$. Hence, the success probability in the $j$-th step is $(1-\frac{\epsilon}{k})^2$, which means that the success probability in all $k$ steps is $(1-\frac{\epsilon}{k})^{2k}\geq 1-2\epsilon$.

\noindent\textbf{Running time:} Algorithm $k$-CMeans calls Algorithms Sphere-Peeling-Tree $\frac{2k}{\epsilon}$ times. It is easy to see that each node on the tree returned from Algorithm Sphere-Peeling-Tree has $|\mathcal{R}|2^m (\frac{32j}{\epsilon^2})^j$ children, where $|\mathcal{R}|=O(\frac{\log kn}{\epsilon})$, and $m=\frac{8k^3}{\epsilon^9}\ln\frac{k^2}{\epsilon^6}$. Since the tree has a height of $k$, the complexity of the tree is $O(2^{poly(\frac{k}{\epsilon})}(\log n)^k)$. Further, since each node takes $O(|\mathcal{R}|2^m (\frac{32j}{\epsilon^2})^j nd)$ time, the total time complexity of Algorithm $k$-CMeans is $O(2^{poly(\frac{k}{\epsilon})}n(\log n)^{k+1} d )$.
\vspace{-0.16in}
\section{Extension to Chromatic $k$-Medians Clustering}
\label{sec-extension}
\vspace{-0.08in}
We extend our ideas for $k$-CMeans to the Chromatic  $k$-Medians Clustering problem ($k$-CMedians). Similar to $k$-CMeans, we first show its relationship with $k$-medians, and then present a $(5+\epsilon)$-approximation algorithm using the sphere peeling technique. Due to the lack of a similar Simplex Lemma for $k$-CMedians, we achieve a constant approximation, instead of a PTAS. See details of the algorithm in  Section \ref{sec-kmedian} of the Appendix.


\newpage


\section{Figure. \ref{fig-probe}}

\begin{figure}[ht]
\vspace{-0.2in}
  \centerline{
  \includegraphics[scale=0.3]{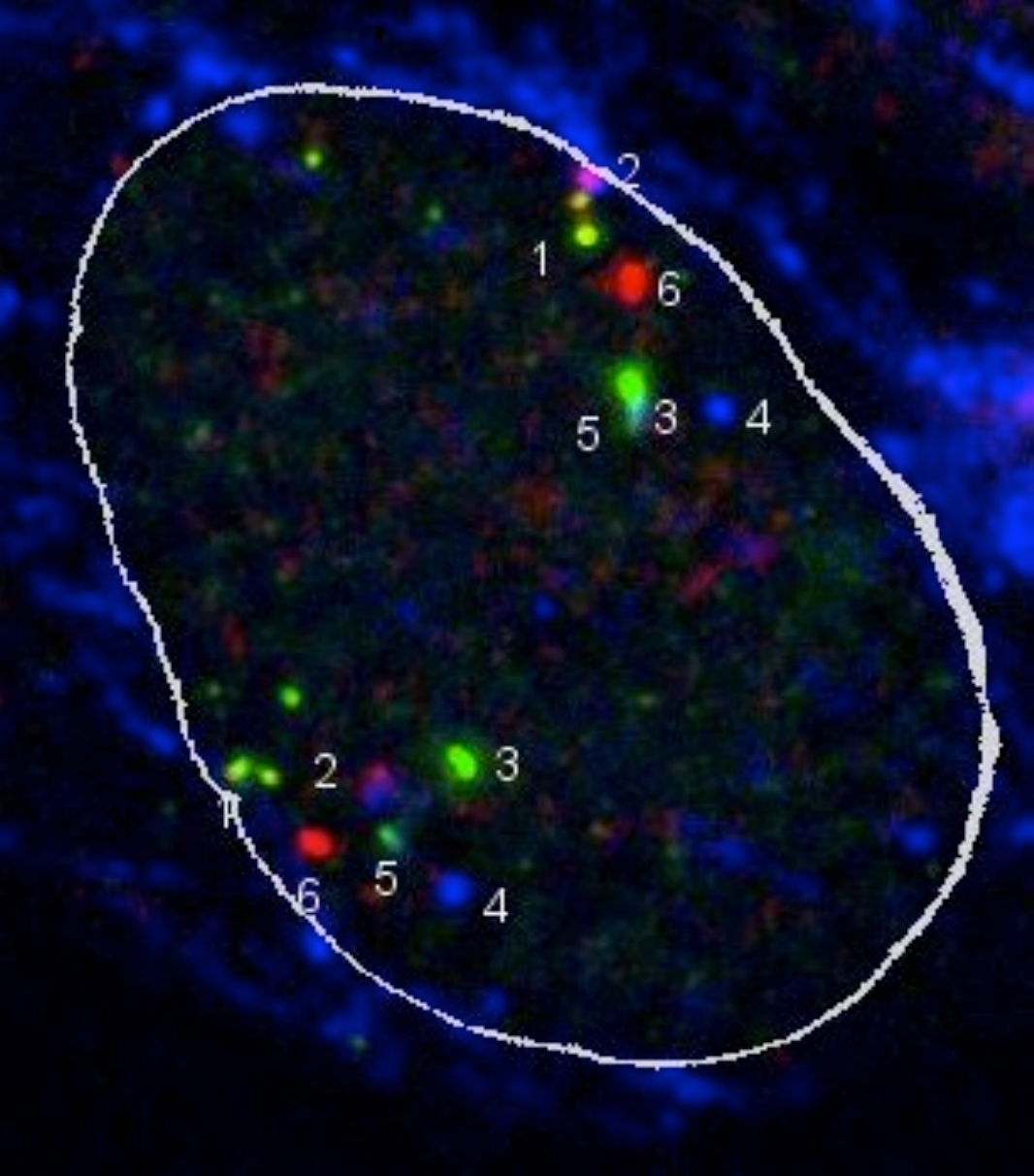}}
  \vspace{-0.15in}
    \caption{BAC probes of Chromosome 1 in a WI38 cell with homolog having 6 probes.}
  \label{fig-probe}
  \vspace{-0.2in}
\end{figure}

\section{Proof for Theorem \ref{mfptas}}
\label{sec-mfptas}
 \begin{proof}
Since it is sufficient to show that the theorem holds for the case of $k=2$, we assume in this proof that $k=2$ and each point-set $G_{i}$ has exactly two points. We make use of a construction by Dasgupta for the
NP-hardness proof of the 2-mean clustering problem in high dimensional space \cite{D08}. Their proof reduces from the NAE3SAT problem. For better understanding our
ideas, below we sketch their construction.
\begin{enumerate}
\item For any instance $\phi$ of NAE3SAT with  literal set $\{x_{1}, \cdots, x_{n}\}$ and $m$ clauses,  construct a $2n \times 2n$ matrix $D_{\alpha,\beta}$ as follows, where the indices  correspond to  $\{x_{1}, \cdots, x_{n}\}$ when they are in the range of $[1,n]$, and to $\{\overline{x_{1}}, \cdots, \overline{x_{n}}\}$ when they are in the range of $[n+1,2n]$.\\
$D_{\alpha, \beta} = \left\{ \begin{array}{ll}
0 & \textrm{if $\alpha=\beta$}\\
1+\Delta & \textrm{if $\alpha=\overline{\beta}$}\\
1+\delta & \textrm{if $\alpha\sim\beta$}\\
1 & \textrm{otherwise,}
\end{array} \right.$
where $\Delta$, $\delta$ are two constants satisfying inequalities $0<\delta<\Delta<1$ and $4\delta m<\Delta\leq 1-2\delta n$, and $\alpha\sim\beta$ means that both $\alpha$ and $\beta$ or both $\overline{\alpha}$ and $\overline{\beta}$ appear in a clause.

\item $D$ can be embedded into $R^{2n}$, i.e., there exist $2n$ points in $R^{2n}$ with $D$ as their distance matrix.

\item Let $C_{1}$ and $C_{2}$ be the two clusters of the $2$-mean clustering of the $2n$ embedding points. If for any $i$, the points corresponding to
$x_{i}$ and $\overline{x_{i}}$ are separated into different clusters, then $\phi$ is satisfiable if and only if $$\frac{1}{2n}\sum_{i,j\in C_{1}}D_{i,j}+\frac{1}{2n}\sum_{i,j\in C_{2}}D_{i,j}\leq n-1+\frac{2\delta m}{n}.$$

\item Since $\frac{1}{2n}\sum_{i,j\in C_{1}}D_{i,j}+\frac{1}{2n}\sum_{i,j\in C_{2}}D_{i,j}$ is the total cost of the $2$-mean clustering for $C_{1}$ and $C_{2}$,  a polynomial time solution to the $2$-mean clustering problem in high dimensional space implies a polynomial time solution to NAE3SAT. Thus the 2-mean clustering is NP-hard in high dimensions.
\end{enumerate}

The above reduction can be naturally extended to show the NP-hardness of the full chromatic $2$-mean clustering problem.   To show this, we only need to construct $G_{i}$ as the set containing the two points  corresponding to $x_{i}$ and $\overline{x_{i}}$ (for simplicity, we write it as $G_{i}=\{x_{i}, \overline{x_{i}}\}$), and the remaining proof follows from the same argument.

Next, we show that full $2$-CMean has no FPTAS in high dimensional space unless P=NP. To see this, we still use the same construction. From the above discussion,  we know that $\phi$ is unsatisfiable if and only if for any chromatic partition of $\mathcal{G}$, there exists one clause in $\phi$ such that the three points corresponding to the three literals in this clause are clustered into the same cluster. Hence, the total cost for any chromatic partition is at least
$$2\frac{1}{n}({n\choose 2}+(m-1)\delta+3\delta)=n-1+\frac{2}{n}(m+2)\delta .$$

The ratio $\eta$ between the minimum chromatic partition cost of an unsatisfiable instance and the upper bound cost of a satisfiable instance is  $$\eta=\frac{n-1+\frac{2}{n}(m+2)\delta}{n-1+\frac{2\delta m}{n}}=1+\frac{\frac{4}{n}\delta}{n-1+\frac{2(m+2)}{n}\delta}.$$ If we let $\delta=\frac{1}{5m+2n}$, then $\eta=1+\frac{\frac{4}{n}\delta}{n-1+\frac{2(m+2)}{n}\delta}=1+\frac{4}{n(5m+2n)(n-1)+2(m+2)}$.

Suppose that there exists an FPTAS for the full chromatic $2$-means clustering problem. Then, if we let $\epsilon <\frac{4}{n(5m+2n)(n-1)+2(m+2)}$,  the cost of a $(1+\epsilon)$-approximation of the full $2$-CMeans is  less than $n-1+\frac{2}{n}(m+2)\delta$ if and only if $\phi$ is satisfiable. Since the running time of the FPTAS for full $2$-CMeans and $\frac{1}{\epsilon}$ are all  polynomial functions of $m$ and $n$,  this implies that  NAE3SAT can be solved in polynomial time. Obviously this can only happen if P=NP.
\qed
\end{proof}

\section{Proof of Lemma \ref{lem-meanshift}}
\label{sec-meanshift}
\begin{proof}
In the our following discussion, we use $<a, b>$ to denote the inner product of $a$ and $b$. It is easy to see that
$$\sum_{p\in P}||p-m'||^2=\sum_{p\in P}||p-m+m-m'||^2$$
$$=\sum_{p\in P}(||p-m||^2+2<p-m,m-m'>+||m-m'||^2)$$
$$=\sum_{p\in P}||p-m||^2+2\sum_{p\in P}<p-m,m-m'>+|P| \times ||m-m'||^2$$
$$=\sum_{p\in P}||p-m||^2+2<\sum_{p\in P}(p-m),m-m'>+|P| \times ||m-m'||^2.$$
Since $m$ is the mean point of $P$,  $\sum_{p\in P}(p-m)=0$. Thus, the above equality becomes $\sum_{p\in P}||p-m'||^2=\sum_{p\in P}||p-m||^2+|P| \times ||m-m'||^2$.
\qed
\end{proof}

\section{Proof of Lemma \ref{lem-shift}}
\label{sec-shift}
\begin{proof}
Similar to Lemma \ref{lem-simplex}, we prove this lemma by mathematics induction on $j$. 

\textbf{Base case.} For $j=1$, since $o_1=o$, we just need to let $\tau=o'_1$. Then, we have $||\tau-o||=||o'_1-o||=||o'_1-o_1||\leq L\leq\sqrt{\epsilon}\delta+(1+\epsilon)L$. Thus, the base case holds.

\textbf{Induction step.} Assume that the lemma holds for any $j\leq j_0$ for some $j_{0} \ge 1$ (i.e., Induction Hypothesis). Now we consider the case of $j=j_0+1$. Similar to the proof of Lemma \ref{lem-simplex}, we assume that $\frac{|P_l |}{|P|}\geq \frac{\epsilon}{4j}$ for each $1\leq l\leq j$. Otherwise, it can be reduced to a problem with smaller $j$, and solved by the induction hypothesis.  Hence, in the following discussion, we assume that $\frac{|P_l |}{|P|}\geq \frac{\epsilon}{4j}$ for each $1\leq l\leq j$. 

First, we know that $o=\sum^j_{l=1}\frac{|P_l |}{|P|} o_l$. Let $o'=\sum^j_{l=1}\frac{|P_l |}{|P|} o'_l$. Then, we have 
\begin{eqnarray}
||o-o'||=||\sum^j_{l=1}\frac{|P_l |}{|P|} o_l-\sum^j_{l=1}\frac{|P_l |}{|P|} o'_l||\leq \sum^j_{l=1}\frac{|P_l |}{|P|}||o_l-o'_l||\leq L. \label{for-8}
\end{eqnarray}

Thus, if we can find a grid point $\tau$ within a distance to $o'$ no more than $\sqrt{\epsilon}\delta+\epsilon L$ (i.e., $||\tau-o'||\leq \sqrt{\epsilon}\delta+\epsilon L$), by inequality (\ref{for-8}), we will have $||\tau-o||\leq||\tau-o'||+||o'-o||\leq\sqrt{\epsilon}\delta+(1+\epsilon)L$. This means that  we only need to find a grid point close enough to $o'$. 

To find such a $\tau$, we first consider the distance from $o'_{l}$ to $o'$. For any $1\leq l\leq j$, we have
\begin{eqnarray}
||o'_l-o'||\leq ||o'_l-o_l||+||o_l-o||+||o-o'||\leq 2\sqrt{\frac{j}{\epsilon}}\delta+2L, \label{for-9}
\end{eqnarray}
where the first inequality follows from triangle inequality, and the second inequality follows from the facts that $||o'_l-o_l||$ and $||o-o'||$ are both bounded by $L$, and $||o_l-o||\leq  2\sqrt{\frac{j}{\epsilon}}\delta$ (by Lemma \ref{lem-close}).

 This implies that we can use the similar idea in Lemma \ref{lem-simplex} to construct a ball $\mathcal{B}$ centered at any  $o'_{l_0}$ and with radius $r=\max_{1\leq l\leq j}\{||o'_l-o'_{l_0}||\}$. Note that  since $||o'_l-o'_{l_0}||\leq ||o'_l-o'||+||o'-o'_{l_0}||\leq 4\sqrt{\frac{j}{\epsilon}}\delta+4L$ (by inequality (\ref{for-9})), the simplex $V'$ is inside $\mathcal{B}$. Similar to Lemma \ref{lem-simplex}, we can build a grid inside $\mathcal{B}$ with grid length $\frac{\epsilon r}{4j}$ and total grid points $O((8j/\epsilon)^j)$. Clearly in this grid, we can find a grid point $\tau$ such that $||\tau-o'||\leq \frac{\epsilon}{4\sqrt{j}}r\leq \sqrt{\epsilon}\delta+\epsilon L$. Thus, $||\tau-o||\leq\sqrt{\epsilon}\delta+(1+\epsilon)L$, and the induction step, as well as the lemma, holds.
\qed
\end{proof}

\section{Proof of Claim 1 in Lemma \ref{lem-induction}}
\label{sec-claim1}

\begin{proof}
Since $1\geq \beta_j\geq \frac{1}{|\mathcal{G}|}\geq\frac{1}{kn}$, there is one integer $t$ between $1$ and $\log(kn)$, such that $2^{t-1}\leq\frac{1}{\beta_j}\leq 2^t$. Thus $ 2^{t/2-1}\sqrt{\epsilon}\delta_{opt}\leq\sqrt{\frac{\epsilon}{\beta_j}}\delta_{opt}\leq 2^{t/2}\sqrt{\epsilon}\delta_{opt}$. Together with $\delta\in [\delta_{opt}, (1+\epsilon)\delta_{opt}]$, we have
$$ 2^{t/2-1}\sqrt{\epsilon}\frac{\delta}{1+\epsilon}\leq\sqrt{\frac{\epsilon}{\beta_j}}\delta_{opt}\leq 2^{t/2}\sqrt{\epsilon}\delta .$$
 Thus if  set $\hat{r}_j=2^{t/2}\sqrt{\epsilon}\delta$, we have $\sqrt{\frac{\epsilon}{\beta_j}}\delta_{opt}\leq \hat{r}_j\leq 2(1+\epsilon)\sqrt{\frac{\epsilon}{\beta_j}}\delta_{opt}$. Let $x=\frac{j\hat{r}_j}{j\sqrt{\frac{\epsilon}{\beta_j}}\delta_{opt}}$. Then we have $1\le x\le 2(1+\epsilon)$.  We build a grid in the interval $[\frac{x}{2(1+\epsilon)}, x]$ with the grid length $\frac{\epsilon}{4(1+\epsilon)}x$, and obtain a grid set (i.e., number set) $\mathcal{N}=\{\frac{1+l\frac{\epsilon}{2}}{2(1+\epsilon)}x\mid 0\le l\le 4+\frac{2}{\epsilon}\}$. We prove that there must exist one number in $\mathcal{N}$ and is between $1$ and $1+\epsilon/2$. First, we know that $\frac{x}{2(1+\epsilon)}\le 1\le x$. If $x\le 1+\epsilon/2$, we find the the desired number $x$ in $\mathcal{N}$. Otherwise, the whole interval $[1, 1+\epsilon/2]$ is inside $[\frac{x}{2(1+\epsilon)}, x]$. Since the grid has grid length $\frac{\epsilon}{4(1+\epsilon)}x\le\frac{\epsilon}{4(1+\epsilon)}2(1+\epsilon)=\epsilon/2$, there must exist one grid point locating inside $[1, 1+\epsilon/2]$. Thus, the desired number exists in $\mathcal{N}$.

  Let $\mathcal{R}_j=\{\frac{1+l\frac{\epsilon}{2}}{2(1+\epsilon)}j\hat{r}_j\mid 0\le l\le 4+\frac{2}{\epsilon}\}$. From the above analysis, we know that there exists one value $r_j\in \mathcal{R}_j$ such that
$$j\sqrt{\frac{\epsilon}{\beta_j}}\delta_{opt}\leq r_j\leq (1+\frac{\epsilon}{2})j\sqrt{\frac{\epsilon}{\beta_j}}\delta_{opt}.$$\vspace{-0.08in}

Note that $\mathcal{R}_j\subset \mathcal{R}$, where $\mathcal{R}=\cup^{\log(kn)}_{t=0}\{\frac{1+l\frac{\epsilon}{2}}{2(1+\epsilon)}j2^{t/2}\sqrt{\epsilon}\delta\mid 0\le l\le 4+\frac{2}{\epsilon}\}$. Thus, the Claim is proved.

%
\qed
\end{proof}

\section{Proof of Claim 2 in Lemma \ref{lem-induction}}
\label{sec-optl}
\begin{proof}
First, for each $1\leq l\leq j-1$, we have $|Opt_l \setminus (\bigcup^{j-1}_{w=1}B_{j,w})|\leq|Opt_l \setminus B_{j,l} | $. Secondly, by Markov inequality, we have
$$|Opt_l \setminus B_{j,l} | \leq \frac{\delta^2_l}{(r_j-||p_{v_l}-m_l ||)^2}|Opt_l |.$$ 



Note that $\delta^2_{opt}=\sum^k_{j=1}\beta_j\delta^2_j$, and $\beta_j\le \beta_l$ (by $l<j$). Thus, we have $\delta_{l}\leq \sqrt{\frac{1}{\beta_l}}\delta_{opt}\le \sqrt{\frac{1}{\beta_j}}\delta_{opt}$. Together with $j\sqrt{\frac{\epsilon}{\beta_j}}\delta_{opt}\leq r_j$ and $||p_{v_l}-m_l ||\leq \epsilon\delta_l+(1+\epsilon)l\sqrt{\frac{\epsilon}{\beta_l}}\delta_{opt} $ (by induction hypothesis), we have 

\begin{eqnarray*}
r_j-||p_{v_l}-m_l ||&\ge& j\sqrt{\frac{\epsilon}{\beta_j}}\delta_{opt}-(\epsilon\delta_l+(1+\epsilon)(j-1)\sqrt{\frac{\epsilon}{\beta_l}}\delta_{opt} )\\
& =& (1-(j-1)\epsilon)\sqrt{\frac{\epsilon}{\beta_j}}\delta_{opt}-\epsilon\delta_l\\
&\geq& (1-(j-1)\epsilon-\sqrt{\epsilon})\sqrt{\frac{\epsilon}{\beta_j}}\delta_{opt}, 
\end{eqnarray*}

where the last inequality follows from $\delta_{l}\leq \sqrt{\frac{1}{\beta_l}}\delta_{opt}\le \sqrt{\frac{1}{\beta_j}}\delta_{opt}$. Thus, we have

\begin{eqnarray*}
|Opt_l \setminus B_{j,l} | &\leq& \frac{\delta^2_l}{(1-(j-1)\epsilon-\sqrt{\epsilon})^2\frac{\epsilon}{\beta_j}\delta^2_{opt}}|Opt_l |\\
&\leq& \frac{\delta^2_l}{(1-(j-1)\epsilon-\sqrt{\epsilon})^2\frac{\epsilon}{\beta_j}\beta_l \delta^2_l}|Opt_l | \\
&=& \frac{\beta_j}{(1-(j-1)\epsilon-\sqrt{\epsilon})^2\epsilon\beta_l}|Opt_l |\\
&=&\frac{\beta_j |\mathcal{G}|}{(1-(j-1)\epsilon-\sqrt{\epsilon})^2\epsilon}\leq\frac{\beta_j |\mathcal{G}|}{(1-j\sqrt{\epsilon})^2\epsilon},
\end{eqnarray*}
%
where the second inequality follows from the fact that $\beta_l \delta^2_l\leq \delta^2_{opt}$, and the fourth equation follows from that $\frac{|Opt_l|}{\beta_l}=|\mathcal{G}|$. Note that we can assume $\epsilon$ is small enough such that $\epsilon\leq \frac{1}{4j^2}$, which implies that $ \frac{\beta_j |\mathcal{G}|}{(1-j\sqrt{\epsilon})^2\epsilon}\leq \frac{4\beta_j |\mathcal{G}|}{\epsilon}$. Otherwise, we can just replace $\epsilon$ by $\frac{\epsilon}{4j^2}$ as the input at the beginning of the algorithm. Thus, in total, we have
\begin{eqnarray*}
|Opt_l \setminus B_{j,l} | &\leq& \frac{4\beta_j |\mathcal{G}|}{\epsilon}.
\end{eqnarray*}

Thus the Claim is proved.
\qed
\end{proof}

\section{Proof of Claim 3 in Lemma \ref{lem-induction}}
\label{sec-claim3}

\begin{proof}

First, we have
 $$\frac{|P_1|}{|P_1\cup P_j |}\geq \frac{\frac{1}{j-1}(1-\frac{\epsilon^3}{j})}{\frac{1}{j-1}(1-\frac{\epsilon^3}{j})+\frac{\epsilon^3}{j}}>\frac{1-\epsilon^3}{1+\epsilon^3}.$$
 Let $a^{2}$ denote the variance of $P_1\cup P_j$. By Lemma \ref{lem-close}, we know that $||\tau_1-\tau'||\leq\sqrt{\frac{2\epsilon^3}{1-\epsilon^3}}a$. Meanwhile, since $\frac{|P_1\cup P_j |}{|Opt_j|}\geq\frac{ |P_1|}{|Opt_j|}\geq\frac{ \frac{1}{j-1}(\beta_{j}-\epsilon^3\frac{\beta_{j}}{j})|\mathcal{G}|}{\beta_j |\mathcal{G}|}=\frac{1-\frac{\epsilon^3}{j}}{j-1}$, we have $a^2\le \frac{|Opt_j|}{|P_1\cup P_j|}\delta^2_j\leq \frac{j-1}{1-\frac{\epsilon^3}{j}}\delta^2_j$. Then we have
 
 \begin{eqnarray*}
||\tau_1-\tau'||&\leq&\sqrt{\frac{2\epsilon^3}{1-\epsilon^3}}a\le\sqrt{\frac{2\epsilon^3}{1-\epsilon^3}}\sqrt{\frac{j-1}{1-\frac{\epsilon^3}{j}}}\delta_j\\
&\leq&\sqrt{\frac{2j\epsilon^3}{(1-\epsilon^3)(1-\frac{\epsilon^3}{j})\beta_j}}\delta_{opt}\\
&\le&\sqrt{\frac{2j\epsilon^3}{(1-\epsilon^3)(1-\epsilon^3)\beta_j}}\delta_{opt}=\frac{\sqrt{2}\epsilon}{1-\epsilon^3}\sqrt{\frac{j\epsilon}{\beta_j}}\delta_{opt}, 
\end{eqnarray*}

where the third inequality follows from $\delta_j\le \sqrt{\frac{1}{\beta_j}}\delta_{opt}$. Thus, the claim is true.
\qed
\end{proof}
\section{Chromatic $k$-Medians Clustering}
\label{sec-kmedian}

In this section, we extend our ideas for $k$-CMeans to the Chromatic  $k$-Medians Clustering problem ($k$-CMedians).  Similar to $k$-CMeans, we first show its relationship with $k$-medians (in Section \ref{sec-conmedian}), and then present a $(5+\epsilon)$-approximation algorithm (in Section \ref{sec-ptasmedian}). Due to the lack of a similar Simplex Lemma for $k$-CMedians, we achieve a constant approximation, instead of a PTAS.

\begin{definition}[Chromatic $k$-Median Clustering ($k$-CMedians)]
\label{def-kcmedian}
Let $\mathcal{G}=\{G_{1}, \cdots,$ $G_{n}\}$ be a set of $n$ point-sets with each $G_{i}=\{p^{i}_{1}, \dots, p^{i}_{k_{i}}\}$ consisting of $k_{i}\leq k$ points in $\mathbb{R}^d$ space. The chromatic  $k$-median clustering (or $k$-CMedians) of $\mathcal{G}$ is to find $k$ points $\{m_{1}, \cdots, m_{k}\}$ in $\mathbb{R}^d$ space and a chromatic partition $U_{1}, \cdots, U_{k}$ of $\mathcal{G}$ such that $\frac{1}{n}\sum_{j}\sum_{q\in U_{j}}||q-m_{j}||$ is minimized.
\end{definition}

%
%

\subsection{Constant Approximation from $k$-Medians}
\label{sec-conmedian}

Given a set of points in $\mathbb{R}^d$, the optimal median point is also called {\em Fermat Weber point} in geometry. Its main difference with mean point is that no explicit formula exists for computing the optimal median point, while the mean point is simply the average of the given points. Consequently,  median point is often approximated using some iterative procedure, such as {\em Weiszfeld's algorithm}. 
Thus in the following discussion, we only assume the availability of a $(1+\epsilon)$-approximation of the median point. 


\begin{lemma}
\label{lem-closemedian}
Let $P$ be a set of points in $\mathbb{R}^d$ space, and $P_{1}$ be a subset of $P$ containing a fraction of $\alpha\leq 1$ points of $P$. Let $m_{opt}$ and $m$ be the optimal and $(1+\epsilon)$-approximate median point of $P$ respectively, and $m_1$ be the optimal median of $P_{1}$. Then  $||m_{1}-m||\leq\frac{2+\epsilon}{\alpha}\mu$, where $\mu=\frac{1}{|P|}\sum_{p\in P}||p-m_{opt}||$.
\end{lemma}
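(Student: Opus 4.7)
}
The plan is to combine optimality of the median $m_1$ for the subset $P_1$ with a triangle inequality argument to move the center $m$ of the containing set $P$ toward $m_1$, and then to use the $(1+\epsilon)$-approximation guarantee on $m$ to relate everything back to $\mu$. Concretely, I would first note that $m_1$ minimizes the $1$-median cost on $P_1$, so in particular
\[
\sum_{p\in P_1}\|p-m_1\|\ \le\ \sum_{p\in P_1}\|p-m\|.
\]
Next, for every point $p\in P_1$, the triangle inequality gives $\|m-m_1\|\le \|p-m\|+\|p-m_1\|$. Summing over $p\in P_1$ and applying the previous inequality yields
\[
|P_1|\cdot\|m-m_1\|\ \le\ \sum_{p\in P_1}\|p-m\|\ +\ \sum_{p\in P_1}\|p-m_1\|\ \le\ 2\sum_{p\in P_1}\|p-m\|.
\]

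The second step is to control $\sum_{p\in P_1}\|p-m\|$ using only quantities defined on all of $P$. Since $P_1\subseteq P$, we have $\sum_{p\in P_1}\|p-m\|\le \sum_{p\in P}\|p-m\|$. By hypothesis $m$ is a $(1+\epsilon)$-approximate median of $P$, so
\[
\sum_{p\in P}\|p-m\|\ \le\ (1+\epsilon)\sum_{p\in P}\|p-m_{opt}\|\ =\ (1+\epsilon)|P|\,\mu.
\]
Plugging this bound into the previous display and using $|P_1|=\alpha|P|$ gives
\[
\|m-m_1\|\ \le\ \frac{2(1+\epsilon)|P|\mu}{\alpha|P|}\ =\ \frac{2+2\epsilon}{\alpha}\,\mu,
\]
which matches the claimed bound up to a rescaling of $\epsilon$ (one may run the $k$-medians approximation with $\epsilon/2$ to recover the stated constant $(2+\epsilon)/\alpha$).

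There is no real obstacle here; the argument is essentially the classical ``doubling'' fact that swapping the center of a $1$-median instance by any other point costs at most twice the optimum cost. The only subtlety is that we do not have access to the exact median of $P$ (no closed form exists for the Fermat--Weber point), which is precisely why the statement is phrased using a $(1+\epsilon)$-approximate $m$; the $\epsilon$ slack in the bound is the natural price paid for replacing $m_{opt}$ by the computable $m$ in the final inequality.
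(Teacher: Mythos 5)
Your argument is correct and follows essentially the same strategy as the paper: triangle inequality to relate $\|m-m_1\|$ to per-point costs over $P_1$, monotonicity $\sum_{p\in P_1}\|p-m\|\le\sum_{p\in P}\|p-m\|$, and the $(1+\epsilon)$-approximation guarantee on $m$. The one place you diverge is in how you invoke the optimality of $m_1$ on $P_1$: you compare $m_1$'s cost against $m$, so both terms on the right-hand side of your triangle-inequality bound inherit the $(1+\epsilon)$ factor, yielding $\frac{2+2\epsilon}{\alpha}\mu$ rather than the stated $\frac{2+\epsilon}{\alpha}\mu$. The paper instead compares $m_1$'s cost against the \emph{exact} median $m_{opt}$ of $P$:
\[
\frac{1}{|P_1|}\sum_{p\in P_1}\|p-m_1\|\ \le\ \frac{1}{|P_1|}\sum_{p\in P_1}\|p-m_{opt}\|\ \le\ \frac{1}{|P_1|}\sum_{p\in P}\|p-m_{opt}\|\ =\ \frac{1}{\alpha}\mu,
\]
so that term carries no $(1+\epsilon)$ factor at all, and only the $\sum_{p\in P_1}\|p-m\|$ term pays for the approximation. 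Substituting both bounds into the triangle inequality gives $\alpha\|m-m_1\|\le(1+\epsilon)\mu+\mu=(2+\epsilon)\mu$, i.e.\ exactly the stated constant. Your rescaling remark ($\epsilon\mapsto\epsilon/2$) is a legitimate patch for downstream use, but the cleaner fix is simply to swap the comparison point from $m$ to $m_{opt}$ in that one inequality; note that this is purely an analysis step, so it does not require computing $m_{opt}$.
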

\begin{proof}
Let $\mu_1=\frac{1}{|P_1|}\sum_{p\in P_1}||p-m_1||$.  Since $P_1\subseteq P$, it is easy to know that $\sum_{p\in P}||p-m||\geq \sum_{p\in P_1}||p-m||$, which implies that $(1+\epsilon)\mu\geq\frac{1}{|P|}\sum_{p\in P_1}||p-m||=\alpha\frac{1}{|P_1|}\sum_{p\in P_1}||p-m||$.  By triangle inequality, we also have $||p-m||\geq ||m-m_1||-||p-m_1||$. Thus, 
\begin{eqnarray}
(1+\epsilon)\mu\geq \alpha(||m-m_1||-\mu_1). \label{for-12}
\end{eqnarray}
Since $m_1$ is the optimal median of $P_1$, we have $\mu_1=\frac{1}{|P_1|}\sum_{p\in P_1}||p-m_1||\leq \frac{1}{|P_1|}\sum_{p\in P_1}||p-m_{opt}||\leq\frac{1}{|P_1|}\sum_{p\in P}||p-m_{opt}||=\frac{1}{\alpha}\mu$. Plugging this into inequality (\ref{for-12}), we have $||m-m_1||\leq\frac{2+\epsilon}{\alpha}\mu$.
\qed
\end{proof}

\begin{theorem}
\label{the-constantmedian}
Let $\mathcal{G}=\{G_1, \cdots, G_n\}$ be an instance of $k$-CMedians, and $\mathcal{C}$ be the $k$ $(1+\epsilon)$-approximate median points of the $k$ clusters generated by a $c$-approximation $k$-medians algorithm on the points $\cup_{i=1}^{n}G_{i}$. Then,  $[\mathcal{C}]^k$ contains at least one $k$-tuple whose elements are the $k$ median points of a $((2+\epsilon)ck^2+(2+\epsilon)k+1)$-approximation of $k$-CMedians on 
$\mathcal{G}$, where $[\mathcal{C}]^k=\underbrace{\mathcal{C}\times\cdots\times\mathcal{C}}_{k}$.
\end{theorem}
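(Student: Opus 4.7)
The plan is to mirror the proof of Theorem \ref{the-constant} for $k$-CMeans, substituting the $\ell_1$ triangle inequality for the squared-distance decomposition and Lemma \ref{lem-closemedian} for Lemma \ref{lem-close}. Let $\{c_1,\dots,c_k\}$ be the approximate medians in $\mathcal{C}$ associated with the clusters $\{S_1,\dots,S_k\}$ produced by the $c$-approximation $k$-medians algorithm, let $\{m_1,\dots,m_k\}$ and $\mathcal{OPT}=\{Opt_1,\dots,Opt_k\}$ denote the optimal medians and optimal chromatic clusters of the $k$-CMedians instance, and set $\Gamma^i_j=Opt_i\cap S_j$ with optimal median $\tau^i_j$. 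By the pigeonhole principle, for each $i$ there exists an index $j_i$ with $|\Gamma^i_{j_i}|\ge|Opt_i|/k$; the candidate $k$-tuple is $(c_{j_1},\dots,c_{j_k})$, which clearly lies in $[\mathcal{C}]^k$.

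Next I would run the point-wise triangle inequality
$\|p-c_{j_i}\|\le\|p-m_i\|+\|m_i-\tau^i_{j_i}\|+\|\tau^i_{j_i}-c_{j_i}\|$
and sum over $p\in Opt_i$, which replaces the $(a+b)^2\le 2(a^2+b^2)$ step in the $k$-CMeans proof. The factor-$2$ loss in that proof becomes a factor-$1$ loss here, explaining why the leading constant in Lemma \ref{lem-closemedian} will simply pass through. For $\|m_i-\tau^i_{j_i}\|$ I would apply Lemma \ref{lem-closemedian} with $P=Opt_i$, $P_1=\Gamma^i_{j_i}$, and $\alpha\ge 1/k$, treating $m_i$ as a $(1+\epsilon)$-approximate median of $Opt_i$ (this is the only place the $(2+\epsilon)$ factor on the linear term in $k$ will enter—it reflects the fact that we cannot compute the Fermat–Weber point exactly, which is exactly the source of the $\epsilon$ slack flagged in the introduction); this yields $\|m_i-\tau^i_{j_i}\|\le (2+\epsilon)k\mu_i$ with $\mu_i=\frac{1}{|Opt_i|}\sum_{p\in Opt_i}\|p-m_i\|$. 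For $\|\tau^i_{j_i}-c_{j_i}\|$ I apply the lemma again with $P=S_{j_i}$, $P_1=\Gamma^i_{j_i}$, and $\alpha=|\Gamma^i_{j_i}|/|S_{j_i}|$, giving $\|\tau^i_{j_i}-c_{j_i}\|\le \frac{2+\epsilon}{\alpha}\nu_{j_i}$ where $\nu_{j_i}$ denotes the average distance from $S_{j_i}$ to its optimal median.

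Combining these bounds and using $|\Gamma^i_{j_i}|\ge|Opt_i|/k$ to simplify $|Opt_i||S_{j_i}|/|\Gamma^i_{j_i}|\le k|S_{j_i}|$, I obtain
\begin{equation*}
\sum_{p\in Opt_i}\|p-c_{j_i}\|\;\le\;\bigl(1+(2+\epsilon)k\bigr)\sum_{p\in Opt_i}\|p-m_i\|\;+\;(2+\epsilon)k\sum_{p\in S_{j_i}}\|p-c^{\mathrm{opt}}_{j_i}\|.
\end{equation*}
Summing over $i$ and observing that $\sum_i\sum_{p\in S_{j_i}}\|p-c^{\mathrm{opt}}_{j_i}\|\le k\sum_j\sum_{p\in S_j}\|p-c^{\mathrm{opt}}_j\|$ (each index $j$ is repeated at most $k$ times) contributes another factor of $k$. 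Finally, the optimal $k$-medians value on $\cup_i G_i$ is at most the $k$-CMedians optimum (since adding the chromatic constraint only restricts the feasible set), and the $c$-approximation guarantee gives $\sum_j\sum_{p\in S_j}\|p-c^{\mathrm{opt}}_j\|\le c\cdot\mathrm{OPT}_{k\text{-Med}}\le c\cdot\mathrm{OPT}_{k\text{-CMed}}$. Putting everything together yields the desired $\bigl((2+\epsilon)ck^2+(2+\epsilon)k+1\bigr)$ factor, and the bipartite matching procedure from Section \ref{sec-pre} then converts the $k$-tuple $(c_{j_1},\dots,c_{j_k})$ into an explicit chromatic partition of matching quality.

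The main obstacle is purely bookkeeping: making sure that when Lemma \ref{lem-closemedian} is invoked for the pair $(m_i,\tau^i_{j_i})$ I correctly interpret $m_i$ as a $(1+\epsilon)$-approximate median (so that the $(2+\epsilon)$ constant propagates into the linear-in-$k$ term, matching the stated bound) and that the final aggregation over $i$ accounts for the possible repetition of indices $j_i$ only once, rather than twice, in the $k^2$ factor. Everything else is a direct translation of the $k$-CMeans argument into the $\ell_1$ setting.
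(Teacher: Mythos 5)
Your proposal is correct and follows essentially the same route as the paper's own proof: the same pigeonhole choice of $j_i$, the same pointwise triangle-inequality decomposition through $\tau^i_{j_i}$, the same two applications of Lemma \ref{lem-closemedian}, and the same aggregation yielding the $(2+\epsilon)ck^2+(2+\epsilon)k+1$ factor. The only cosmetic difference is that you measure the cost of $S_{j_i}$ against its optimal median rather than against $c_{j_i}$ as the paper does; both choices are valid upper bounds, so nothing changes.
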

\begin{proof}
Let $\{c_1, \cdots, c_k\}$ be the set of $k$ approximate median points in $\mathcal{C}$, and $\{S_1, \cdots, S_k\}$  be the $k$ clusters returned by  the $c$-approximation $k$-medians algorithm. 
Thus, $c_j$ is the $(1+\epsilon)$-approximate median point of $S_j$ for $1\leq j\leq k$.  Let  $\mathcal{OPT}=\{Opt_{1}, \cdots, Opt_{k}\}$ be the \textbf{unknown} optimal solution for $k$-CMedians on $\mathcal{G}$, and $m_{j}$ be the optimal median point of $Opt_j$ for $1\leq j\leq k$. Denote the set $Opt_i\cap S_j$ as $\Gamma^i_j$, and its optimal median point as $\tau^i_j$ for $1\leq i, j\leq k$ .

 Since $\cup^k_{j=1}\Gamma^i_j=Opt_i$, there must exist some index $1\leq j_i \leq k$ such that $|\Gamma^i_{j_i}|\geq \frac{1}{k}|Opt_i|$. Fixing $j_i$, we have  the following about $\sum_{p\in Opt_i}||p-c_{j_i}||$. 
\begin{eqnarray*} 
 \sum_{p\in Opt_i}||p-c_{j_i}|| &=& \sum_{p\in Opt_i}||p-m_i+m_i-c_{j_i}||\\
&\leq& \sum_{p\in Opt_i}(||p-m_i||+||m_i-c_{j_i}||)\\
&=& \sum_{p\in Opt_i}||p-m_i||+|Opt_i| \times ||m_i-c_{j_i}||\\
&=& \sum_{p\in Opt_i}||p-m_i||+|Opt_i| \times ||m_i-\tau^i_{j_i}+\tau^i_{j_i}-c_{j_i}||\\
&\leq& \sum_{p\in Opt_i}||p-m_i||+|Opt_i| \times (||m_i-\tau^i_{j_i}||+||\tau^i_{j_i}-c_{j_i}||)
\end{eqnarray*}

By Lemma \ref{lem-closemedian}, we have
\begin{eqnarray*}
||\tau^i_{j_i}-m_i|| &\leq& \frac{2+\epsilon}{\frac{1}{k}}(\frac{1}{|Opt_i|}\sum_{p\in Opt_i}||p-m_i ||);\\
||\tau^i_{j_i}-c_{j_i}|| &\leq& \frac{2+\epsilon}{\frac{|\Gamma^i_{j_i}|}{|S_{j_i}|}}(\frac{1}{|S_{j_i}|}\sum_{p\in S_{j_i}}||p-c_{j_i} ||).
\end{eqnarray*}

%

From the above inequalities, we have
\begin{eqnarray*}
\sum_{p\in Opt_i}||p-c_{j_i}|| &\leq & \sum_{p\in Opt_i}||p-m_i||+|Opt_i| \times (||m_i-\tau^i_{j_i}||+||\tau^i_{j_i}-c_{j_i}||)\\
&\leq&\sum_{p\in Opt_i}||p-m_i||^2+|Opt_i|(\frac{2+\epsilon}{\frac{1}{k}}(\frac{1}{|Opt_i|}\sum_{p\in Opt_i}||p-m_i ||)+\frac{2+\epsilon}{\frac{|\Gamma^i_{j_i}|}{|S_{j_i}|}}(\frac{1}{|S_{j_i}|}\sum_{p\in S_{j_i}}||p-c_{j_i} ||))\\
&=&((2+\epsilon)k+1)\sum_{p\in Opt_i}||p-m_i||+(2+\epsilon)\frac{|Opt_i|}{|\Gamma^i_{j_i}|} \times \sum_{p\in S_{j_i}}||p-c_{j_i} ||).
\end{eqnarray*}
Since $|\Gamma^i_{j_i}|\geq \frac{1}{k}|Opt_i|$,  we have $\frac{|Opt_i|}{|\Gamma^i_{j_i}|}\leq k$.  Thus, 
\begin{eqnarray*}
\sum_{p\in Opt_i}||p-c_{j_i}||^2\leq ((2+\epsilon)k+1)\sum_{p\in Opt_i}||p-m_i||+(2+\epsilon)k\sum_{p\in S_{j_i}}||p-c_{j_i} ||.
\end{eqnarray*}
Summing both sides of the above inequality over $i$, we have
\begin{eqnarray}
\sum^k_{i=1}\sum_{p\in Opt_i}||p-c_{j_i}||^2 &\leq& ((2+\epsilon)k+1)\sum^k_{i=1}\sum_{p\in Opt_i}||p-m_i||+(2+\epsilon)k\sum^k_{i=1}\sum_{p\in S_{j_i}}||p-c_{j_i} || \nonumber\\
&\leq& ((2+\epsilon)k+1)\sum^k_{i=1}\sum_{p\in Opt_i}||p-m_i||+(2+\epsilon)k^2 \sum^k_{j=1}\sum_{p\in S_{j}}||p-c_{j} ||. \label{for-13} 
\end{eqnarray}
It is easy to know that the optimal objective value of $k$-medians is  no larger than that of   $k$-CMedians on the same set of points in $\mathcal{G}$. Thus, $\sum^k_{j=1}\sum_{p\in S_{j}}||p-c_{j} ||\leq c\sum^k_{i=1}\sum_{p\in Opt_i}||p-m_i||$. Plugging this inequality into (\ref{for-13}), we have
\begin{eqnarray*}
\sum^k_{i=1}\sum_{p\in Opt_i}||p-c_{j_i}||\leq ((2+\epsilon)ck^2+(2+\epsilon)k+1)\sum^k_{i=1}\sum_{p\in Opt_i}||p-m_i||.
\end{eqnarray*}
The above inequality means that if we take the $k$-tuple $(c_{j_1}, \cdots, c_{j_k})$ as the $k$ approximate median points for $k$-CMedians, we have a $((2+\epsilon)ck^2+(2+\epsilon)k+1)$-approximation solution for $k$-CMedians. Thus, the theorem is proved.
\qed
\end{proof}

\subsection{Peeling Algorithm for $k$-CMedians}
\label{sec-ptasmedian}

The following lemma is a key to the peeling algorithm for $k$-CMedians (i.e., play a similar role as 
Lemma \ref{lem-simplex} for $k$-CMeans).

\begin{lemma}
\label{lem-simplexmedian}
Let $P$ to be a set of points in $\mathbb{R}^d$ with a partition $P=\cup^j_{l-1} P_l$,  $o$ be its optimal median point, and $o_l$ be the optimal median point of $P_l$ for $1\leq l\leq j$. 
Let $\mu=\frac{1}{|P|}\sum_{p\in P}||p-o||$. Then, there exists some $i_{0}$ such that $||o-o_{i_{0}}||\leq 4\mu$.
\end{lemma}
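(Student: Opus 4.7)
My plan is to combine the triangle inequality with the optimality of each $o_l$ as a $1$-median of $P_l$, then apply a weighted averaging (pigeonhole) argument to identify the desired index $i_0$.

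First, I would fix any $l \in \{1, \dots, j\}$ and any $p \in P_l$. The triangle inequality gives
\[
\|o - o_l\| \;\leq\; \|o - p\| + \|p - o_l\|.
\]
Summing this over all $p \in P_l$ and dividing nothing (so that the weights stay as $|P_l|$ on the left) yields
\[
|P_l|\,\|o - o_l\| \;\leq\; \sum_{p \in P_l}\|o - p\| \;+\; \sum_{p \in P_l}\|p - o_l\|.
\]
Next, since $o_l$ is the optimal $1$-median of $P_l$, we have $\sum_{p \in P_l}\|p - o_l\| \leq \sum_{p \in P_l}\|p - o\|$. Substituting this bound and summing over $l = 1, \dots, j$ (using that the $P_l$'s partition $P$):
\[
\sum_{l=1}^{j} |P_l|\,\|o - o_l\| \;\leq\; \sum_{p \in P}\|p - o\| \;+\; \sum_{p \in P}\|p - o\| \;=\; 2|P|\mu.
\]

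Finally, dividing by $|P|$ turns the left side into a weighted average of the quantities $\|o - o_l\|$ with nonnegative weights $|P_l|/|P|$ summing to $1$. Hence the minimum of $\|o - o_l\|$ over $l$ is no larger than this weighted average, which is at most $2\mu$, so in particular there exists some $i_0$ with $\|o - o_{i_0}\| \leq 2\mu \leq 4\mu$. The main (and essentially only) obstacle is recognizing that triangle inequality alone is not enough, and one must simultaneously invoke the optimality of each $o_l$ on $P_l$ to cancel the $\sum_{p \in P_l}\|p - o_l\|$ term against a second copy of $|P|\mu$. The slack between the $2\mu$ obtained here and the stated $4\mu$ leaves room for later using approximate (rather than exact) medians $o_l$ inside the sphere-peeling algorithm for $k$-CMedians.
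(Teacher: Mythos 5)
Your proof is correct, and it in fact establishes the stronger bound $\min_l \|o-o_l\| \le 2\mu$. It takes a genuinely different route from the paper. The paper first applies a pigeonhole argument to the identity $\mu=\sum_l \frac{|P_l|}{|P|}\cdot\frac{1}{|P_l|}\sum_{p\in P_l}\|p-o\|$ to locate a part $P_{i_0}$ whose average distance to $o$ is at most $\mu$, then uses Markov's inequality twice to produce two subsets of $P_{i_0}$, each of size more than $|P_{i_0}|/2$, consisting of points within $2\mu$ of $o$ and of $o_{i_0}$ respectively; these subsets must intersect, and a common point $p_0$ gives $\|o-o_{i_0}\|\le\|o-p_0\|+\|p_0-o_{i_0}\|\le 4\mu$ by the triangle inequality. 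Your argument replaces the two Markov steps and the intersection trick with a direct summation of the pointwise triangle inequality $\|o-o_l\|\le\|o-p\|+\|p-o_l\|$ over all of $P_l$, cancels the term $\sum_{p\in P_l}\|p-o_l\|$ against $\sum_{p\in P_l}\|p-o\|$ via the $1$-median optimality of $o_l$ (the same optimality fact the paper uses), and then takes a weighted average over $l$. Both proofs deliver the lemma; yours is shorter, avoids the half-measure/intersection argument entirely, and sharpens the constant from $4$ to $2$, while the paper's ball-intersection style has the minor virtue of being the kind of argument that degrades gracefully when $o_l$ is only an approximate median (one simply inflates the radius of the second Markov ball) — though your cancellation step admits the same relaxation by inserting the approximation factor into $\sum_{p\in P_l}\|p-o_l\|\le(1+\epsilon)\sum_{p\in P_l}\|p-o\|$.
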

\begin{proof}
Since $\mu=\frac{\sum_{p\in P}||p-o||}{|P|}=\sum^{l}_{i=1}(\frac{|P_{i}|}{|P|}\frac{\sum_{p\in P_{i}}||p-o||}{|P_{i}|})$, there must exist some index $i_{0}$ such that $\frac{\sum_{p\in P_{i_{0}}}||p-o||}{|P_{i_{0}}|}$ $\leq\mu$. By Markov inequality, we know that there exists one subset $U$ of $P_{i_0}$ such that $|U|> |P_{i_{0}}|/2$ and  $||p-o||\leq 2\mu$ for any $ p\in U$.

Since $o_{i_{0}}$ is the optimal median point of $P_{i_{0}}$,  $\frac{\sum_{p\in P_{i_{0}}}||p-o_{i_{0}}||}{|P_{i_{0}}|}\leq\frac{\sum_{p\in P_{i_{0}}}||p-o||}{|P_{i_{0}}|}\leq\mu$. Similarly, by Markov inequality, we know that there exists one subset $V$ of $P_{i_0}$ such that $|V|> |P_{i_{0}}|/2$ and  $||p-o_{i_0}||\leq 2\mu$ for any $ p\in V$.

From the inequalities of  $|U|> |P_{i_{0}}|/2$ and $|V|> |P_{i_{0}}|/2$ and the fact that $U\cap V \neq\emptyset$, we know that there exists one point $p_{0}\in U\cap V$ such that $||p_{0}-o||\leq 2\mu$ and $||p_{0}-o_{i_{0}}||\leq 2\mu$. Thus $||o_{i_{0}}-o||\leq ||o_{i_{0}}-p_{0}||+||p_{0}-o||\leq 4\mu$.
\qed
\end{proof}

Before presenting our peeling algorithm, we still need the following lemma proved by  Badoiu {\em et al.} in \cite{BHI} for finding an approximate solution for $1$-median.

\begin{theorem}[\cite{BHI}]
\label{the-1med}
Let $P$ be a normalized set of $n$ points in $\mathbb{R}^d$ space, $1>\epsilon>0$, and $R$ be a random sample of $O(1/\epsilon^3\log1/\epsilon)$ points from $P$. Then one can compute, in $O(d2^{O(1/\epsilon^4)}\log n)$ time, a point-set  $S(P,R)$ of cardinality $O(2^{O(1/\epsilon^4)}\log n)$ , such that with constant probability (over the choices of $R$), there is a point $q\in S(P,R)$ such that $cost(q,P)\leq (1+\epsilon)med_{opt}(P,1)$.
\end{theorem}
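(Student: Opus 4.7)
Since Theorem~\ref{the-1med} is cited directly from Badoiu, Har-Peled, and Indyk~\cite{BHI}, I would not reprove it in this paper but rather invoke the reference as a black box. Nevertheless, here is how I would sketch the argument from first principles so that the reader understands why the stated parameters arise.

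The plan is to build $S(P,R)$ through a recursive ``guess and refine'' procedure seeded by the random sample $R$. First I would establish a warm-start: by a variance/Markov argument analogous to Lemma~\ref{lem-closemedian} above, the mean or centroid of a random sample of size $O(1/\epsilon^3 \log(1/\epsilon))$ from $P$ lies within $O(1)\cdot med_{opt}(P,1)/|P|$ of the true 1-median $o^{*}$, so I can assume I already have a crude candidate $q_{0}$ at distance at most a constant times the average radius $\mu = med_{opt}(P,1)/|P|$ from $o^{*}$. The ``normalized'' hypothesis (which fixes the scale so that $\mu$ is essentially $1$) is what lets me talk about an absolute search radius rather than one that depends on the input.

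Second, I would refine $q_{0}$ by a doubling-and-grid construction. Around $q_{0}$ I draw $O(\log n)$ concentric annuli (one for each geometrically increasing radius, which is where the $\log n$ factor in the output size comes from). In the correct annulus, $o^{*}$ is at a known distance scale, and I can use Lemma~\ref{lem-simplexmedian}-style reasoning: a second-level random sample from $P$ (inside the peeled region) produces an approximate median of a dominant sub-cluster, within a constant of $o^*$. Iterating the refinement $O(1/\epsilon)$ levels deep, with $O(1/\epsilon^{3})$ samples per level, reduces the approximation error geometrically; enumerating all $2^{O(1/\epsilon^{4})}$ possible outcomes of this recursion (one branch per guess at each level) produces the candidate set $S(P,R)$ of the stated size $O(2^{O(1/\epsilon^{4})}\log n)$, and each point is produced in $O(d)$ time, giving the stated running time.

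The main obstacle, as in BHI, is to prove that one of the recursive branches actually achieves the $(1+\epsilon)$ guarantee rather than only $O(1)$. This requires a careful amortized ``direction'' argument: at each refinement step, with constant probability the subsample of $P$ falling in the current peeling region has an approximate median whose projection toward $o^{*}$ contracts the error by a factor $1-\Omega(\epsilon)$. Chaining these contractions over the $O(\log(1/\epsilon))$ levels and taking a union bound over the $O(\log n)$ radii yields the stated constant success probability; the detailed calculation, together with the coreset-style argument that bounds the effective dimensionality of the search, is the content of~\cite{BHI} and I would defer to it rather than reproduce it.
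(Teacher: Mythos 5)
The paper offers no proof of this statement at all---it is imported verbatim from Badoiu, Har-Peled, and Indyk \cite{BHI} and used as a black box---so your decision to simply invoke the citation is exactly what the paper does, and the two ``approaches'' coincide. (Your informal sketch of \emph{why} the parameters arise deviates from the actual \cite{BHI} mechanism, which shows that the low-dimensional flat spanned by the random sample nearly contains the optimal median and then searches an exponential grid in that flat, rather than running an iterative contraction; but since both you and the paper defer the proof entirely to the reference, this has no bearing on correctness here.)
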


\noindent\textbf{Algorithm $k$-CMedians}
\newline \textbf{Input:} $\mathcal{G}=\{G_1, \cdots, G_n\}$, $k\geq 2$ and an small constant $\epsilon>0$.
\newline \textbf{Output:} a $(5+\epsilon)$-approximation solution for $k$-CMedians on $\mathcal{G}$.
\begin{enumerate}
\item Run the $(1+\epsilon)$-approximation $k$-medians algorithm from \cite{KSS} on $\mathcal{G}$, and let  $\Omega$ be the obtained objective value.

\item For $i=1$ to $\frac{4k}{\epsilon}$ do 
\begin{enumerate}
\item Set $\delta=\frac{\Omega}{4k}+i\frac{\epsilon}{4k}\Omega$, and run Algorithm Sphere-Peeling-Tree-2.

\item Let  $\mathcal{T}_i$ be the returned tree.

\end{enumerate}

\item For each path of every $\mathcal{T}_i$, use bipartite matching procedure to compute the objective value of $k$-CMeans on $\mathcal{G}$. Output the $k$ points from the path with smallest objective value.

\end{enumerate}


\noindent\textbf{Algorithm Sphere-Peeling-Tree-$2$}
\newline \textbf{Input:} $\mathcal{G}$, $k\geq 2$, $\epsilon, \delta>0$.
\newline \textbf{Output:} A tree $\mathcal{T}$ of height $k$ with each node $v$ associated with a point $p_v \in \mathbb{R}^d$.
\begin{enumerate}
\item Initialize $\mathcal{T}$ with a single root node $v$ associating with no point.
\item Recursively grow each node $v$ in the following way
\begin{enumerate}
\item If the height of $v$ is already $k$, then it is a leaf node.
\item Otherwise, let $j$ be the height of $v$. Build the set of radius candidates $\mathcal{R}=\cup^{\log(kn)}_{t=0}\{\frac{1+l\frac{\epsilon}{2}}{2(1+\epsilon)}j2^{t/2}\sqrt{\epsilon}\delta\mid 0\le l\le 4+\frac{2}{\epsilon}\}$.
      For each radius candidate $r\in\mathcal{R}$ do
\begin{enumerate}
\item  Let $j$ be the height of $v$, and $\{p_{v_1}, \cdots, p_{v_j}\}$ be the $j$ points associated with nodes on the root-to-$v$ path  (including $p_v$). 

\item For each $p_{v_l}$, $1\leq l\leq j$, construct a ball $B_{j,l}$ centered at $p_{v_l}$ and with radius $r$. 
\item Take a random sample from $\mathcal{G}\setminus\cup^j_{l=1}B_{j,l}$ with size $m=\frac{8k^3}{\epsilon^9}\ln\frac{k^2}{\epsilon^6}$. Compute the approximate median points of all subsets of the sample (by Theorem \ref{the-1med}), and denote the set of the approximate median points as $\Pi$. Clearly, $|\Pi|=2^{m+O(1/\epsilon^4)}\log n$.

\item For each point $p$ in $\Pi$, add one child to $v$, and associate it with $p$; add another $j$ children, with each one associating with a different point in $\{p_{v_1}, \cdots, p_{v_j}\}$.
\end{enumerate}
\end{enumerate}

\end{enumerate}

We can use a similar approach as in Section \ref{sec-proof} to analyze the correctness of Algorithm $k$-CMedians. \\

Let $\mathcal{OPT}=\{Opt_1, \cdots, Opt_k\}$ be the optimal solution of $k$-CMedians on $\mathcal{G}$.
Without loss of generality, we assume that $|Opt_1|\ge |Opt_2|\ge\cdots \ge |Opt_k|$. For each $Opt_j$, $1\leq j\leq k$, let $m_j$ be its median point,  $\beta_j$ be its fraction in $\mathcal{G}$ (i.e., $|Opt_{j}|/|\cup_{i=1}^{n} G_{i}|$), and $\mu_j=\frac{1}{|Opt_j |}\sum_{p\in Opt_j}||p-m_j||$. Thus, $\beta_1\geq\cdots\geq\beta_k$ and $\sum^k_{j=1}\beta_j =1$. Also, let $\mu_{opt}=\sum^k_{j=1}\beta_j\mu_j$.  

\begin{lemma}
\label{lem-inductionmedian}
Among all the trees generated in Algorithm $k$-CMedians, there exists one tree $\mathcal{T}_i$, which has a root-to-leaf path with each node $v_{j}$ at level $j$, $1\leq j\leq k$, on the path associating a point 
$p_{v_j}$ and satisfying the inequality 
$$||p_{v_j}-m_j ||\leq 4\mu_j+(1+\epsilon)j\frac{\epsilon}{\beta_j}\mu_{opt} .$$
\end{lemma}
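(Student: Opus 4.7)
The plan is to induct on $j$, mirroring the proof of Lemma \ref{lem-induction} with the first-moment quantities $\mu_j,\mu_{opt}$ in place of the variances $\delta_j,\delta_{opt}$ and Lemma \ref{lem-simplexmedian} in place of the Simplex Lemma. The $k$-medians PTAS from \cite{KSS} provides an approximation of $\mu_{opt}$, so the $O(k/\epsilon)$-net of values of $\delta$ enumerated by Algorithm $k$-CMedians contains one that is a $(1+\epsilon)$-approximation of $\mu_{opt}$; fix the corresponding tree $\mathcal{T}_i$. The base case $j=1$ is immediate: since $\beta_1\ge 1/k$, Lemma \ref{lem-select} together with Theorem \ref{the-1med} guarantees that $\Pi$ contains a $(1+\epsilon)$-approximate median of $Opt_1$, which Lemma \ref{lem-closemedian} converts into $||p_{v_1}-m_1||=O(\epsilon)\mu_1$, well within the target.

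For the inductive step, first establish the $k$-medians analogs of Claims~\textbf{1} and~\textbf{2} from the proof of Lemma \ref{lem-induction}: the set $\mathcal{R}$ contains some $r_j=\Theta(j\epsilon\mu_{opt}/\beta_j)$, and a Markov-inequality argument gives $|Opt_l\setminus\bigcup_{w<j}B_{j,w}|=O(\beta_j|\mathcal{G}|/\epsilon)$ for every $l<j$. Then partition $Opt_j$ into $P_l=Opt_j\cap B_{j,l}$ for $l<j$ together with $P_j=Opt_j\setminus\bigcup_{l<j}B_{j,l}$, let $\tau_l$ be the optimal median of $P_l$, and apply Lemma \ref{lem-simplexmedian} to produce an index $i_0\in\{1,\ldots,j\}$ with $||\tau_{i_0}-m_j||\le 4\mu_j$.

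The two possibilities for $i_0$ align with the two flavours of children spawned by step~2(b)(iv) of Algorithm Sphere-Peeling-Tree-$2$. If $i_0\le j-1$, then $\tau_{i_0}\in B_{j,i_0}$, so $||p_{v_{i_0}}-\tau_{i_0}||\le r_j$, and the child of $v_{j-1}$ associated with $p_{v_{i_0}}$ satisfies $||p_{v_j}-m_j||\le r_j+4\mu_j$, within the target. If instead $i_0=j$ and $|P_j|\ge\epsilon^3\beta_j|\mathcal{G}|/j$, the Claim~\textbf{2} analog forces $P_j$ to be an $\Omega(\epsilon^4/k^2)$-fraction of $\mathcal{G}\setminus\bigcup_{w<j}B_{j,w}$, so Lemma \ref{lem-select} makes the random sample capture enough points of $P_j$ for Theorem \ref{the-1med} to place a $(1+\epsilon)$-approximate median $\pi$ of $P_j$ into $\Pi$; Lemma \ref{lem-closemedian} then bounds $||\pi-\tau_j||=O(\epsilon)\mu_j$, again meeting the target.

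The main obstacle is the residual sub-case $i_0=j$ with $|P_j|<\epsilon^3\beta_j|\mathcal{G}|/j$, where $P_j$ is too sparse for reliable sampling. The remedy mirrors case~(b) in the proof of Lemma \ref{lem-induction}: pigeonhole produces an index $l_0<j$ with $|P_{l_0}|\ge(1-\epsilon)|Opt_j|/(j-1)$, and Lemma \ref{lem-closemedian} bounds the optimal median of $P_{l_0}\cup P_j$ away from $\tau_{l_0}$ by $O(\epsilon)\mu_{opt}/\beta_j$, comfortably within the target's $4\mu_j+(1+\epsilon)j\epsilon\mu_{opt}/\beta_j$ budget. Re-applying Lemma \ref{lem-simplexmedian} to the coarser $(j-1)$-partition of $Opt_j$ forces the winning index back into $\{1,\ldots,j-1\}$, which places us in the easy sub-case and validates the $p_{v_l}$-child as $p_{v_j}$. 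Compounding the per-level success probability $(1-\epsilon/k)^2$ over the $k$ levels yields a constant overall success probability, completing the proof.
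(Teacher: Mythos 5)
Your skeleton is the one the paper intends (it gives no proof, only the remark that the argument of Lemma \ref{lem-induction} carries over), and your handling of the case where Lemma \ref{lem-simplexmedian} returns an index $i_0\le j-1$ is correct: $\tau_{i_0}$ lies in the convex hull of $P_{i_0}\subseteq B_{j,i_0}$, so the child $p_{v_{i_0}}$ satisfies $||p_{v_{i_0}}-m_j||\le r_j+4\mu_j$. The problems are in the two places where you quantify what an approximate median buys. You assert twice that Lemma \ref{lem-closemedian} converts a $(1+\epsilon)$-approximate median (in cost) into a point within $O(\epsilon)\mu$ of the true median. It does not: with $\alpha=1$ the lemma gives $(2+\epsilon)\mu$, and this is essentially tight (a near-balanced point set on a line admits $(1+\epsilon)$-approximate medians at distance $\Theta(\mu)$ from the optimum). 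This is exactly the qualitative gap between Lemma \ref{lem-dis} and Lemma \ref{lem-closemedian} that blocks a PTAS here. In the base case the correct bound $(2+\epsilon)\mu_1\le 4\mu_1$ still fits, but in the dense sub-case $i_0=j$ your chain $||\pi-m_j||\le||\pi-\tau_j||+||\tau_j-m_j||$ yields roughly $(2+\epsilon)\mu(P_j)+4\mu_j\ge(6+\epsilon)\mu_j$ even after using the fact (from inside the proof of Lemma \ref{lem-simplexmedian}) that $\mu(P_j)\le\mu_j$; this overshoots the $4\mu_j$ term of the target. You must instead bound $||\pi-m_j||$ directly by the Markov/intersection argument of Lemma \ref{lem-simplexmedian}, which gives $(4+2\epsilon)\mu_j$, and absorb the excess $2\epsilon\mu_j\le 2\epsilon\mu_{opt}/\beta_j$ into the $(1+\epsilon)j\epsilon\mu_{opt}/\beta_j$ budget using $j\ge 2$.

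The sparse-case remedy is the more serious gap. The factor in Lemma \ref{lem-closemedian} is $(2+\epsilon)/\alpha$, which tends to $2+\epsilon$ as $\alpha\to 1$ — unlike the factor $\sqrt{(1-\alpha)/\alpha}$ of Lemma \ref{lem-close}, which tends to $0$. Hence merging a tiny $P_j$ into a large $P_{l_0}$ moves the optimal median by up to $(2+\epsilon)\mu(P_{l_0}\cup P_j)=O(j\mu_j)$, not by $O(\epsilon)\mu_{opt}/\beta_j$; your analogue of Claim 3 is false, and there is no medians counterpart of it. The fix is to avoid the merge altogether: when $|P_j|\le\epsilon^3\beta_j|\mathcal{G}|/j$, the averaging step inside Lemma \ref{lem-simplexmedian} restricted to $P_1,\dots,P_{j-1}$ already produces $l_0\le j-1$ with $\frac{1}{|P_{l_0}|}\sum_{p\in P_{l_0}}||p-m_j||\le\mu_j/(1-\epsilon^3/j)$, hence $||\tau_{l_0}-m_j||\le 4(1+O(\epsilon^3))\mu_j$, and the child $p_{v_{l_0}}$ works directly. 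Finally, your Claim-2 analogue also needs care: the induction hypothesis only gives $||p_{v_l}-m_l||\le 4\mu_l+\cdots$, and $4\mu_l$ can exceed $r_j=\Theta(j\epsilon\mu_{opt}/\beta_j)$, so $r_j-||p_{v_l}-m_l||$ may be negative and the Markov bound degenerates — an issue absent in the means case, where the corresponding offset is only $\epsilon\delta_l$. This last point is a defect inherited from the paper's algorithm as stated, but your proof cannot simply assert the claim "gives" the bound.
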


\begin{lemma}
\label{lem-equalmedian}
If Lemma \ref{lem-inductionmedian} is true, Algorithm $k$-CMedians yields a $(5+O(k^2)\epsilon)$-approximation solution for $k$-CMedians.
\end{lemma}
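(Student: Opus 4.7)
The plan is to mirror the proof of Lemma~\ref{lem-equal} almost line by line, replacing the squared-distance inequality $(a+b)^2\le 2(a^2+b^2)$ by the ordinary triangle inequality, and replacing Lemma~\ref{lem-meanshift} by the trivial fact that for medians one can simply sum $\|p-p_{v_j}\|\le \|p-m_j\|+\|m_j-p_{v_j}\|$ over $p\in Opt_j$. The induction hypothesis (Lemma~\ref{lem-inductionmedian}) already gives us $\|p_{v_j}-m_j\|\le 4\mu_j+(1+\epsilon)j\frac{\epsilon}{\beta_j}\mu_{opt}$ for each $j$, so the bulk of the argument is just bookkeeping.

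Concretely, I will first fix an index $j$ and bound
\[
\sum_{p\in Opt_j}\|p-p_{v_j}\|\le \sum_{p\in Opt_j}\|p-m_j\|+|Opt_j|\cdot\|m_j-p_{v_j}\|=|Opt_j|\mu_j+|Opt_j|\cdot\|m_j-p_{v_j}\|.
\]
Plugging in the induction-hypothesis bound and using the identity $|Opt_j|/\beta_j=|\mathcal{G}|$, this gives
\[
\sum_{p\in Opt_j}\|p-p_{v_j}\|\le 5|Opt_j|\mu_j+(1+\epsilon)j\epsilon|\mathcal{G}|\mu_{opt}.
\]
Summing over $j=1,\dots,k$ and using $\sum_j|Opt_j|\mu_j=|\mathcal{G}|\mu_{opt}$ together with $\sum_{j=1}^{k}j\le k^2$ yields
\[
\sum_{j=1}^{k}\sum_{p\in Opt_j}\|p-p_{v_j}\|\le \bigl(5+O(k^2)\epsilon\bigr)|\mathcal{G}|\mu_{opt}.
\]

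Finally, I will invoke the bipartite matching step: given the centers $\{p_{v_1},\dots,p_{v_k}\}$ associated with the good root-to-leaf path guaranteed by Lemma~\ref{lem-inductionmedian}, the bipartite matching procedure produces the chromatic partition minimizing the total distance to these fixed centers, which is therefore bounded by $\sum_j\sum_{p\in Opt_j}\|p-p_{v_j}\|$. Since Algorithm $k$-CMedians outputs the best solution across all paths of all trees $\mathcal{T}_i$, the returned objective is no larger than the above, giving a $(5+O(k^2)\epsilon)$-approximation.

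I do not expect a serious obstacle; the only subtlety is making sure the comparison between the algorithm's output and the quantity $\sum_j\sum_{p\in Opt_j}\|p-p_{v_j}\|$ goes in the right direction (it does, because bipartite matching only helps), and that the Fermat--Weber approximation used by the algorithm for each returned cluster does not inflate the ratio (it is only used to evaluate candidate paths, and for the particular good path the $p_{v_j}$'s themselves already witness the stated bound). The proof is otherwise a direct $\ell_1$-style analogue of Lemma~\ref{lem-equal}.
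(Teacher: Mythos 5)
Your proposal is correct and matches the paper's own proof essentially line for line: the same triangle-inequality bound on $\sum_{p\in Opt_j}\|p-p_{v_j}\|$, the same substitution of the bound from Lemma~\ref{lem-inductionmedian}, the same summation over $j$ using $\sum_j |Opt_j|\mu_j=|\mathcal{G}|\mu_{opt}$, and the same concluding appeal to the bipartite matching step. No gaps.
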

\begin{proof}
We first assume that Lemma \ref{lem-inductionmedian} is true. Then for each $1\leq j\leq k$, we have
\begin{eqnarray}
\sum_{p\in Opt_j}||p-p_{v_j}|| & \le & \sum_{p\in Opt_j}||p-m_j||+|Opt_j|\times||m_j-p_{v_j}|| \nonumber\\
&\leq& \sum_{p\in Opt_j}||p-m_j||+|Opt_j|\times(4\mu_j+(1+\epsilon)j\frac{\epsilon}{\beta_j}\mu_{opt}) \nonumber\\
&=& 5|Opt_j |\mu_j+(1+\epsilon) j\epsilon|\mathcal{G}|\mu_{opt} \label{for-14}
\end{eqnarray}


Summing the both sides of  (\ref{for-14}) over $j$, we have
\begin{eqnarray}
\sum^k_{j=1}\sum_{p\in Opt_j}||p-p_{v_j}||^2 &\leq & \sum^k_{j=1}(5|Opt_j |\mu_j+(1+\epsilon) j\epsilon|\mathcal{G}|\mu_{opt}) \nonumber\\
&\leq& 5\sum^k_{j=1}|Opt_j |\mu_j+(1+\epsilon) k^2\epsilon|\mathcal{G}|\mu_{opt} \nonumber\\
&=&(5+O(k^2)\epsilon)|\mathcal{G}|\mu_{opt}. \label{for-15}
\end{eqnarray}

In the above, the last equation follows from the fact that $\sum^k_{j=1}|Opt_j |\mu_j=|\mathcal{G}|\mu_{opt}$. By (\ref{for-15}), we know that $\{p_{v_1}, \cdots, p_{v_k}\}$ induces a $(5+O(k^2)\epsilon)$-approximation solution for $k$-CMedians.
\qed
\end{proof}

By a similar argument given in the proof of  Lemma \ref{lem-induction}, we can show the correctness of Lemma \ref{lem-inductionmedian}. Thus, we have the following theorem.
\begin{theorem}
\label{the-ptasmedian}
With constant probability, Algorithm $k$-CMedians yields a $(5+\epsilon)$-approximation for $k$-CMedians in $O(2^{poly(\frac{k}{\epsilon})}n(\log n)^{2k} d )$ time. 
\end{theorem}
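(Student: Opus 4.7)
The plan is to derive Theorem~\ref{the-ptasmedian} from Lemma~\ref{lem-inductionmedian} via Lemma~\ref{lem-equalmedian}: by running Algorithm $k$-CMedians with input parameter $\epsilon/(c_0 k^2)$ for a suitable absolute constant $c_0$, the $(5+O(k^2)\epsilon)$-approximation produced by Lemma~\ref{lem-equalmedian} becomes a $(5+\epsilon)$-approximation. The bulk of the work is therefore to prove Lemma~\ref{lem-inductionmedian}, which I would do by induction on the tree level $j$, following the blueprint of the proof of Lemma~\ref{lem-induction} but (i)~using first-moment Markov bounds in place of second-moment ones, (ii)~invoking Lemma~\ref{lem-simplexmedian} in place of Simplex Lemma~II, and (iii)~invoking Theorem~\ref{the-1med} in place of the direct random-subsample mean computation. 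As in Lemma~\ref{lem-induction}, the analysis focuses on the particular tree $\mathcal{T}_i$ whose input $\delta$ lies in $[\mu_{opt},(1+\epsilon)\mu_{opt}]$, which exists because $\Omega$ upper-bounds $\mu_{opt}$ up to a constant and the outer loop of Algorithm $k$-CMedians builds an $\epsilon$-net on $[\Omega/(4k),\Omega]$.

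For the base case $j=1$, since $\beta_1\ge 1/k$ the random sample drawn in the first level of Sphere-Peeling-Tree-$2$ contains, by Lemma~\ref{lem-select}, a large enough sub-sample of $Opt_1$; applying Theorem~\ref{the-1med} to it produces a point $p_{v_1}\in\Pi$ with $||p_{v_1}-m_1||\le 4\mu_1$, which satisfies the inductive bound at level one. For the inductive step, I would reuse Claim~1 of Lemma~\ref{lem-induction} verbatim to pick $r_j\in\mathcal{R}$ with $j\sqrt{\epsilon/\beta_j}\mu_{opt}\le r_j\le(1+\epsilon/2)j\sqrt{\epsilon/\beta_j}\mu_{opt}$, and adapt Claim~2 using a first-moment Markov inequality on $\sum_{p\in Opt_l}||p-m_l||$ to get $|Opt_l\setminus\bigcup_{w<j}B_{j,w}|=O(\beta_j|\mathcal{G}|/\epsilon)$ for $l<j$. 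This induces a partition $Opt_j=P_1\cup\cdots\cup P_j$ with $P_l=Opt_j\cap B_{j,l}$ for $l<j$ and $P_j=Opt_j\setminus\bigcup_{w<j}B_{j,w}$, and Lemma~\ref{lem-simplexmedian} supplies an index $i_0$ with $||o_{i_0}-m_j||\le 4\mu_j$, where $o_{i_0}$ is the optimal $1$-median of $P_{i_0}$.

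The algorithm then succeeds in both cases for $i_0$. If $i_0<j$, then $o_{i_0}\in B_{j,i_0}$, giving $||p_{v_{i_0}}-m_j||\le||p_{v_{i_0}}-o_{i_0}||+||o_{i_0}-m_j||\le r_j+4\mu_j$, and Algorithm Sphere-Peeling-Tree-$2$ creates exactly one child of $v_{j-1}$ carrying the point $p_{v_{i_0}}$; the $r_j$ term is absorbed into the $(1+\epsilon)j(\epsilon/\beta_j)\mu_{opt}$ additive term after the $\epsilon\mapsto\epsilon/(c_0 k^2)$ rescaling. If $i_0=j$, the case~(a)/case~(b) split from the proof of Lemma~\ref{lem-induction} applies: either $|P_j|$ is a non-negligible fraction of $\mathcal{G}\setminus\bigcup B_{j,w}$, in which case Lemma~\ref{lem-select} guarantees that the random sample of size $m$ drawn at this node contains a random sub-sample of $P_j$ and Theorem~\ref{the-1med} deposits a $(1+\epsilon)$-approximate median of $P_j$ into $\Pi$ within $\epsilon\mu_j$ of $o_j$; or $|P_j|$ is negligible and the merging trick from case~(b) of Lemma~\ref{lem-induction} (absorbing $P_j$ into the largest $P_l$ for $l<j$) reduces to the large-subset subcase. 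Either way a suitable child of $v_{j-1}$ exists, closing the induction.

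The running-time bound follows from the branching factor at each tree node, $|\mathcal{R}|(|\Pi|+j)$, with $|\mathcal{R}|=O(\epsilon^{-1}\log(kn))$ and $|\Pi|=2^{m+O(1/\epsilon^4)}\log n$, where $m=O(k^3\epsilon^{-9}\log(k/\epsilon))$; the extra $\log n$ factor coming from Theorem~\ref{the-1med} (absent in the $k$-CMeans analysis) is exactly what inflates the $\log$ exponent from $(\log n)^{k+1}$ to $(\log n)^{2k}$. Combined with the outer loop of $O(k/\epsilon)$ values of $\delta$ and $O(k^3 nd)$ per leaf for bipartite matching, the total time is $O(2^{poly(k/\epsilon)}n(\log n)^{2k}d)$. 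The main obstacle I foresee is the absence of a Simplex Lemma for the $\ell_1$ cost: whereas Lemma~\ref{lem-shift} let the $k$-CMeans algorithm refine the approximate mean to within $\sqrt{\epsilon}\delta$ by averaging across a grid inside a simplex, Lemma~\ref{lem-simplexmedian} only places $m_j$ within $4\mu_j$ of one of the component medians $o_{i_0}$, with no mechanism to further refine this; the resulting hard $4\mu_j$ slack, combined with the intrinsic $\mu_j$ from the cluster cost itself, is precisely what caps the approximation ratio at $5+\epsilon$ rather than $1+\epsilon$.
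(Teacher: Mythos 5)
Your overall route is the same as the paper's: Theorem~\ref{the-ptasmedian} is obtained from Lemma~\ref{lem-inductionmedian} via Lemma~\ref{lem-equalmedian} and a rescaling of $\epsilon$, and Lemma~\ref{lem-inductionmedian} is argued by rerunning the induction of Lemma~\ref{lem-induction} with first-moment Markov bounds, Lemma~\ref{lem-simplexmedian} in place of the Simplex Lemma, and Theorem~\ref{the-1med} for the sampled median. The paper itself gives no more detail than ``by a similar argument,'' so your proposal is a faithful expansion of the intended proof, and your running-time accounting and your explanation of why the ratio is capped at $5+\epsilon$ are correct.

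Two quantitative steps as you state them would fail, though. First, reusing Claim~1 \emph{verbatim} gives $r_j=\Theta\bigl(j\sqrt{\epsilon/\beta_j}\,\mu_{opt}\bigr)$, but the first-moment Markov bound then yields only $|Opt_l\setminus B_{j,l}|\le O\bigl(|\mathcal{G}|\mu_{opt}/r_j\bigr)=O\bigl(\tfrac{1}{j}\sqrt{\beta_j/\epsilon}\,|\mathcal{G}|\bigr)$, which is far larger than the claimed $O(\beta_j|\mathcal{G}|/\epsilon)$ when $\beta_j\ll\epsilon$ (the square root does not help you here the way the squared distances did in the means case). The medians version needs $r_j=\Theta\bigl(j\tfrac{\epsilon}{\beta_j}\mu_{opt}\bigr)$ --- note that this is exactly the scale of the additive term $(1+\epsilon)j\tfrac{\epsilon}{\beta_j}\mu_{opt}$ in Lemma~\ref{lem-inductionmedian} --- and correspondingly the candidate set $\mathcal{R}$ must be rebuilt around $j2^{t}\epsilon\delta$ rather than $j2^{t/2}\sqrt{\epsilon}\delta$ (the paper's pseudocode for Sphere-Peeling-Tree-$2$ carries over the means-version $\mathcal{R}$, but the lemma statement makes the intended scale clear). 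Second, in the $i_0=j$ case you assert that Theorem~\ref{the-1med} places a point within $\epsilon\mu_j$ of $o_j$; a $(1+\epsilon)$-approximate median of $P_j$ is only guaranteed to lie within $(2+\epsilon)\mu(P_j)$ of the optimal one (Lemma~\ref{lem-closemedian} with $\alpha=1$), and $\mu(P_j)$ can itself be $\Theta(\mu_j)$ even in the favorable subcase where the $i_0$ of Lemma~\ref{lem-simplexmedian} equals $j$. So the additive slack in this branch is $O(\mu_j)$ with a constant that must be tracked explicitly; it does not vanish with $\epsilon$, and verifying that it stays at $4\mu_j$ (so that Lemma~\ref{lem-equalmedian} really yields $5+O(k^2)\epsilon$ and not a larger constant) is the one substantive piece of bookkeeping your sketch, like the paper's, leaves open.
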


 \end{document}